\numberwithin{equation}{section}
\theoremstyle{plain}
\newtheorem{theorem}{Theorem}[section]
\newtheorem{lemma}[theorem]{Lemma}
\newtheorem{prop}[theorem]{Proposition}
\theoremstyle{remark}
\newtheorem{definition}[theorem]{Definition}
\newtheorem{remark}[theorem]{Remark}
\def\eps{\epsilon}
\newcommand{\beq}{\begin{equation}} 
\newcommand{\eeq}{\end{equation}}
\def\nn{\nonumber}
\def\ge{\geqslant}
\def\le{\leqslant}
\def\<{\langle}
\def\>{\rangle}
\def\oB{\overline{B}}
\newcommand{\myraise}[1]{\raisebox{0.6ex}{\raisebox{-0.5\height}{#1}}}
\newcommand{\myinclude}[2][]{\raisebox{0.6ex}{\raisebox{-0.5\height}{\includegraphics[#1]{#2}}}}
\renewcommand{\epsilon}{\varepsilon}
\newcommand{\normfactor}{\nu}
\newcommand{\tmop}[1]{\ensuremath{\operatorname{#1}}}
\def\HLT{\widetilde{\mathcal{H}}}
\def\H0{\widetilde{\mathcal{H}}}
\renewcommand{\Re}{\mathop{\rm Re}}
\renewcommand{\Im}{\mathop{\rm Im}}
\renewcommand{\O}{\Omega}
\def\journal@name{\ }
\def\journal@issn{\ }
\def\journal@url{\ }
\begin{document}

\begin{frontmatter}
\title{Tensor Renormalization Group at Low Temperatures: Discontinuity Fixed Point}

\begin{aug}
\author[A]{\fnms{Tom} \snm{Kennedy}\ead[label=e1]{tgk@math.arizona.edu}\orcid{0000-0002-9194-9961}},
\author[B]{\fnms{Slava} \snm{Rychkov}\ead[label=e2]{slava@ihes.fr}\orcid{0000-0002-5847-1011}} %
\address[A]{Department of Mathematics, University of Arizona,
	Tucson, AZ 85721, USA\printead[presep={,\ }]{e1}}

\address[B]{Institut des Hautes \'Etudes Scientifiques,
	91440 Bures-sur-Yvette, France\printead[presep={,\ }]{e2}}
\end{aug}

\begin{abstract}

\begin{center}
	\it To the memory of Krzysztof Gawędzki, a pioneer of rigorous renormalization group studies	\end{center}

\noindent We continue our study of rigorous renormalization group (RG) maps
for tensor networks that was begun in \cite{paper1}.
In this paper we construct a rigorous RG map for 2D tensor networks whose
domain includes tensors that represent the 2D Ising model at low
temperatures with a magnetic field $h$.
We prove that the RG map has two stable fixed points, corresponding to
the two ground states, and one unstable fixed point which is an example of
a discontinuity fixed point. For the Ising model at low temperatures the
RG map flows to one of the stable fixed points if $h \neq 0$, and to the
discontinuity fixed point if $h=0$. In addition to the nearest neighbor
and magnetic field terms in the Hamiltonian,
we can include small terms that need not be
spin-flip invariant. In this case we prove there is a critical value
$h_c$ of the field (which depends on these additional small interactions and
the temperature) such that the RG map flows to the discontinuity fixed point
if $h=h_c$ and to one of the stable fixed points otherwise.
We use our RG map to give a new proof of previous results on the
first-order transition, namely, that
the free energy is analytic for $h \neq h_c$, and the magnetization is
discontinuous at $h = h_c$.
The construction of our low temperature RG map,
in particular the disentangler, is surprisingly very
similar to the construction of the map in \cite{paper1} for the high
temperature phase.  
We also give a pedagogical discussion of some general rigorous
transformations for infinite dimensional tensor networks and an overview of
the proof of stability of the high temperature fixed point for the RG map in
\cite{paper1}.

\end{abstract}


%

\end{frontmatter}

\tableofcontents

\section{Introduction}

Renormalization group (RG) theory associates to each phase of a lattice
model a fixed point of an RG transformation, which is an attractor for the given phase. 
In a typical case when the model has disordered and ordered phases separated by the critical point, one speaks of the high-temperature (high-T),
low-temperature (low-T), and critical fixed points. In physics, thinking in terms of fixed points and RG flows among them
has become since the 1970s a leading approach to phase transitions. A mathematically rigorous treatment has been achieved near the high-
and low-temperature fixed points. The critical fixed point remains however a
challenge. This is because in a generic lattice model the critical fixed point
is expected to live in an infinite-dimensional coupling space and to have no small parameter.
Constructing such a fixed point rigorously probably requires a
computer-assisted approach. For hierarchical models this was accomplished long
ago, but for physically interesting lattice models with translationally
invariant interactions, such as e.g. the 3D Ising model, this is wide open. To
make progress on this problem is important not only conceptually, but also
practically, as its solution will yield as a by-product the critical exponents
with rigorous error bars.

To achieve this, one needs a nonperturbative RG approach which is both
rigorous and computable. By computable we mean that it should be possible to
evaluate numerically the RG map truncated to a large but finite number of
couplings. The hope is to first identify an approximate fixed point
numerically, and then to prove that an exact fixed point exists nearby.

In this paper we will work with tensor RG {\cite{Levin:2006jai}}, which seems
at present the only RG approach satisfying the requirements of both
computability and rigor. This approach starts by rewriting the lattice model
partition function as a tensor network---a contraction of a periodic arrangement
of tensors (see Fig.~\ref{TNex}). An RG step coarse-grains the network, replacing it
by an equivalent network consisting of a smaller number of tensors. If done
exactly, this step would increase the network bond dimension. In numerical
calculations, one keeps the bond dimensions from growing by truncating the new
tensors. There is significant flexibility in how coarse-graining and subsequent
truncation are performed, and many numerical tensor RG algorithms have been
proposed differing in these details
{\cite{SRG,TEFR,SRG1,HOTRG,Evenbly-Vidal,LoopTNR,GILT,Lyu:2021qlw}}.
When benchmarked on the 2D Ising model, these algorithms give approximate
critical exponents in excellent agreement with the exact values (better than
any other RG method). Can we use one of these algorithms or their modification
to construct an exact critical fixed point, first in 2D and eventually in 3D?

In preparation for this task, one needs to develop a theory of rigorous Tensor
RG maps. Such maps do not involve truncation and preserve the tensor network
value exactly. They naturally operate in the space of infinite-dimensional
tensors. We expect the exact critical fixed point tensors
to be infinite-dimensional. The high-T and low-T fixed point tensors are finite-dimensional, but in an exact treatment the tensor dimension is expected to grow without limit when approaching them (although the weight of all but finitely many tensor components should tend to zero in an appropriate norm).

As a first step in this direction, in {\cite{paper1}} we developed
rigorous 2D tensor RG theory near the high-T fixed point. This fixed point is represented by a
very simple tensor $A$ with a single nonzero component $A_{0000} = 1$.
We considered arbitrary infinite-dimensional perturbations $\delta A$ of this
tensor having small Hilbert-Schmidt norm $\| \delta A \|$, and showed that
after an appropriate tensor RG step the Hilbert-Schmidt norm is reduced: $\|
\delta A' \| = O (\| \delta A \|^{3 / 2})$. In other words, we showed that the
high-T fixed point is stable.

In this work we will continue the study of 2D tensor RG by considering the
vicinity of the low-T fixed point. Low
temperatures are known to be more subtle for rigorous studies than high
temperatures where the cluster expansion can be used. A specific model we have in mind is a small perturbation of the 2D Ising model in a magnetic field. Such models at low temperatures  exhibit a first-order phase transition as a function of the magnetic field. Standard proofs of this are based on the Pirogov-Sinai theory {\cite{Pirogov1975,Pirogov1976,sinai-book}} or a coarse-graining approach formulated in terms of Peierls contours and not in terms of spins by Gawędzki, Koteck\'y and Kupiainen \cite{Gawedzki1987}. In fact, it is known that the block-spin RG
transformations can be ill defined at low temperatures (e.g. {\cite{van_Enter_1993,BRICMONT2001}}) while they are well defined at high temperatures \cite{Griffiths1979,Kashapov1980}.

On the contrary, here we will show that the tensor RG procedure needs no major modifications at low temperatures compared to high temperatures. In the case of $N$ coexisting phases, the low-T fixed point is the direct sum of $N$ high-T fixed points with equal weights: 
\begin{equation}
	A^{(1)} \oplus A^{(2)} \oplus\ldots \oplus A^{(N)}.
\end{equation}
We will work in 2D and will focus on the case of two coexisting phases $N=2$, relevant for the Ising model. In this case, the tensors $A^{(1)}=A^{(+)}$ and $A^{(2)}=A^{(-)}$ describe the states of the Ising model with all spins pointing up and down, respectively.

The Ising model in a magnetic field will be represented by a tensor which is a convex linear combination of $A^{(+)}$ and $A^{(-)}$ up to a correction which is small at low temperatures:
\beq
A= \alpha A^{(+)}+(1-\alpha)A^{(-)} +B,\qquad B=O(e^{-4\beta}),
\eeq
where the zero magnetic field corresponds to $\alpha=1/2$. We will construct a tensor RG map $(\alpha,B)\to (\alpha',B')$ which has the following property:
\begin{equation}
	\label{needed-prop}
	\text{$\textstyle\frac{1}{2} 	A^{(+)} \oplus \frac{1}{2} A^{(2)}$ is a fixed point
	near which the $B$ direction is contracting, while the $\alpha$ direction is expanding}\,.
\end{equation}
By these results, the 2D Ising model at small temperature and at zero magnetic field $h$ converges to the low-T fixed point, while for nonzero $h$ it converges towards $A^{(+)}$ or $A^{(-)}$ depending on the sign of $h$. We will also allow small $\mathbb{Z}_2$-breaking interactions, in which case the critical magnetic field is nonzero. 

In this work we will also use the tensor RG map to study the magnetization near the low-T fixed point and to prove that it is a discontinuous function of the magnetic field, i.e.~that the phase transition is first order. For this reason the low-T fixed point is referred to in the theoretical physics literature as the \emph{discontinuity fixed point}, following Nienhuis and Nauenberg \cite{NienhuisNauenberg}. These authors identified a crucial property that this fixed point should have, to ensure the discontinuity: the eigenvalue of the magnetic field variable should be equal to the volume rescaling factor of the RG map. 

The work by Gawędzki, Koteck\'y and Kupiainen \cite{Gawedzki1987} was the first rigorous implementation of the discontinuity fixed point picture in mathematical physics. Our work provides a new and completely different implementation, based on tensor RG.

The paper is structured as follows. In Sec.~\ref{tensors} we collect the notation and a few simple results about tensors and tensor networks. In Sec.~\ref{models2networks} we recall the tensor network representation of the nearest-neighbor Ising model. More generally, we prove that any lattice spin model with finite range interactions can be transformed to a tensor network, and illustrate this on the Ising model with next-to-nearest-neighbor and plaquette interactions.

Then in Sec.~\ref{HighT} we recall results of our previous paper \cite{paper1} near the high-T fixed point. We reformulate the main result of \cite{paper1} making clear the analytic nature of the RG map. We also give an RG proof that the free energy is analytic near the high-T fixed point.

In Sec.~\ref{LTRG} we move to low temperatures. We formulate the general strategy of recovering the low-temperature phase diagram via a tensor RG map having property \eqref{needed-prop}. Such an RG map is then explicitly constructed in Sec.~\ref{RG_lowT}. 
Sec.~\ref{LTprop} uses this RG map to study the free energy and the magnetization of 
the general tensor network model with two phases at low temperatures (which includes the Ising model and its small perturbations). We prove that the free energy and the magnetization are analytic away from the phase coexistence surface. On this surface, we prove that the free energy is continuous, while the magnetization has a discontinuous jump, i.e.~that the phase transition is first order. The Nienhuis-Nauenberg condition on the magnetic field eigenvalue plays a crucial role in our proof, but the way we derive the discontinuity from this condition is different from theirs, for reasons explained in Sec.~\ref{comparison}. In Section \ref{conclusions} we make final remarks and formulate a few problems for the future.

\section{Tensors}\label{tensors}

In this section we collect the notation about tensors and tensor networks. We also define, following  \cite{paper1}, a few simple equivalence transformations of tensor networks, which will serve later on as building blocks for RG transformations. 

\begin{definition}
	Let $n \geqslant 2$ be an integer and $\mathcal{I}$ a nonempty set which is
	either finite or countable (called an index set). An $n$-leg tensor $A$ with
	legs indexed by $\mathcal{I}_{}$ is a table of real numbers $A_{i_1 i_2
		\ldots i_n}$, $i_k \in \mathcal{I}_{}$, $k = 1 \ldots n$. 
\end{definition}

In other words, such a tensor is a map from $\mathcal{I}^n$ to
$\mathbb{R}$. $n$-leg tensors may also be referred to as ``$n$-tensors'' or
``$n$-index tensors'' or ``tensors with $n$ legs''.

The Hilbert-Schmidt (HS) norm of an $n$-tensor $A^{(n)}$ is defined as
\beq 
\| A^{(n)} \| = \Biggl( \sum_{i_1, \ldots, i_n \in \mathcal{I}} | A_{i_1
	\ldots i_n} |^2 \Biggr)^{1 / 2}\,. 
\eeq
This is the norm we will use most often. A tensor with a finite HS norm will
be called HS. All $n$-tensors with $n \geqslant 3$ considered in this work
will be HS.

For the special case of $n = 2$ (2-tensors) we will also consider a different
norm $\| A \|_{\tmop{op}}$, the operator norm. It is defined as the norm of
the linear operator from $\ell_2 (\mathcal{I})$ to $\ell_2 (\mathcal{I})$
associated with the tensor $A$. We can compute this norm as $\| A
\|_{\tmop{op}} = \sup \left| \sum_{i, j \in \mathcal{I}} A_{i j} v_i w_j
\right|$, the sup being taken over $v, w \in \ell_2 (\mathcal{I})_{}$ having
$\ell_2$ norm 1 and a finite number of nonzero elements. As is well known, $\|
A \|_{\tmop{op}} \leqslant \| A \|_{}$ for 2-tensors.

The above definitions admit obvious generalizations to $n$-tensors whose every
leg is indexed by its own index set $\mathcal{I}_1, \ldots, \mathcal{I}_n$.

For any two $n$-tensors $A$ and $B$, we define their sum as follows:
\begin{itemize}
	\item
	If both
	tensors are indexed by the same index sets $\mathcal{I}_1, \ldots,
	\mathcal{I}_n$, then the sum $A + B$ is the usual element-wise sum. 
	\item
	If the
	two collections of index sets $\mathcal{I}^A_1, \ldots, \mathcal{I}^A_n$ and
	$\mathcal{I}^B_1, \ldots, \mathcal{I}^B_n$ are not the same, we first extend
	each tensor by zeros to a larger tensor defined on $(\mathcal{I}^A_1 \cup
	\mathcal{I}^B_1) \times \ldots \times (\mathcal{I}^A_n \cup \mathcal{I}^B_n)$,
	and then take the usual element-wise sum of such extended tensors.
	\item
	If the index sets for at least one of the $n$ legs do not overlap, say $\mathcal{I}^A_1\cap \mathcal{I}^B_1=\emptyset$, we define the sum as in the previous case, and call it a ``direct sum'' $A \oplus B$. 
\end{itemize}

\subsection{Graphic notation. Contractions}

In graphic notation, an $n$-tensor is drawn as a box with $n$ legs coming out.
E.g. a 4-tensor $A$ and its components $A_{i j k l}$ is denoted by a diagram:
\begin{equation}
\tikzset{every picture/.style={line width=0.75pt}} 
\myraise{\begin{tikzpicture}[x=0.75pt,y=0.75pt,yscale=-1,xscale=1]
	
	\draw   (206,169) -- (224,169) -- (224,187) -- (206,187) -- cycle ;
	\draw    (224,178) -- (239,178) ;
	\draw    (194,178) -- (206,178) ;
	\draw    (215,199) -- (215,187) ;
	\draw    (215,169) -- (215,157) ;
	\draw   (328,169) -- (346,169) -- (346,187) -- (328,187) -- cycle ;
	\draw    (346,178) -- (361,178) ;
	\draw    (316,178) -- (328,178) ;
	\draw    (337,199) -- (337,187) ;
	\draw    (337,169) -- (337,157) ;
	
	\draw (215,184.8) node [anchor=south] [inner sep=0.75pt]    {$A$};
	\draw (337,184.8) node [anchor=south] [inner sep=0.75pt]    {$A$};
	\draw (337.67,156) node [anchor=south] [inner sep=0.75pt]    {$j$};
	\draw (337.67,212) node [anchor=south] [inner sep=0.75pt]    {$l$};
	\draw (180.47,183.1) node [anchor=south] [inner sep=0.75pt]  [xslant=-0.07]  {$A=$};
	\draw (280,187) node [anchor=south] [inner sep=0.75pt]    {$A_{ijkl}=$};
	\draw (367,185) node [anchor=south] [inner sep=0.75pt]    {$i$};
	\draw (310.67,185) node [anchor=south] [inner sep=0.75pt]    {$k$};
\end{tikzpicture}}
	\ .\label{Aind}
\end{equation}
In general, no symmetry will be assumed for $n$-tensors, so one has to pay
attention to the order of the indices. In {\eqref{Aind}}, $A_{i j k l}$ is
denoted by the diagram with $i, j, k, l$ on the right, top, left, bottom legs.

We will also consider contractions of tensors. E.g. for two 4-tensors $A_{}$
and $B$ we may define tensor $C$ contracting the first leg of $A$ with the
third leg of $B$, i.e.
\begin{equation}
	C_{m n j k l p} = \sum_{i \in \mathcal{I}} A_{i j k l} B_{m n i p}\,.
	\label{CAB}
\end{equation}
Contraction is only defined if the two contracted legs have the same index set. In
graphic notation, contraction is denoted by gluing the legs of contracted
tensors. Eq. {\eqref{CAB}} may then be represented as:
\begin{equation}
\tikzset{every picture/.style={line width=0.75pt}} 
\myraise{\begin{tikzpicture}[x=0.75pt,y=0.75pt,yscale=-1,xscale=1]
	
	\draw   (166,129) -- (184,129) -- (184,147) -- (166,147) -- cycle ;
	\draw   (199,129) -- (217,129) -- (217,147) -- (199,147) -- cycle ;
	\draw    (184,138) -- (199,138) ;
	\draw    (217,138) -- (229,138) ;
	\draw    (154,138) -- (166,138) ;
	\draw    (175,159) -- (175,147) ;
	\draw    (208,159) -- (208,147) ;
	\draw    (208,129) -- (208,117) ;
	\draw    (175,129) -- (175,117) ;
	\draw   (88,129) -- (114,129) -- (114,147) -- (88,147) -- cycle ;
	\draw    (114,138) -- (126,138) ;
	\draw    (76,138) -- (88,138) ;
	\draw    (95,159) -- (95,147) ;
	\draw    (107,159) -- (107,147) ;
	\draw    (106,129) -- (106,117) ;
	\draw    (95,129) -- (95,117) ;
	
	\draw (101,144.4) node [anchor=south] [inner sep=0.75pt]    {$C$};
	\draw (175,144.8) node [anchor=south] [inner sep=0.75pt]    {$A$};
	\draw (208,144.4) node [anchor=south] [inner sep=0.75pt]    {$B$};
	\draw (139.5,142.8) node [anchor=south] [inner sep=0.75pt]    {$=$};

\end{tikzpicture}}\ .
\end{equation}
The following proposition follows easily from the Cauchy-Schwarz inequality,
and we omit its proof.

\begin{prop}
	\label{prop-contr}Let $C$ be a tensor
	formed by contracting one leg of a tensor $A$ with one leg of a tensor $B$.
	
	(a) If A and $B$ are HS, then so is $C$, and $\| C \|_{} \leqslant \| A
	\|_{} \| B \|_{}$.
	
	(b) If $A$ is HS, while $B$ is a 2-tensor with a finite operator norm, then $C$
	is HS, and $\| C \|_{} \leqslant \| A \|_{} \| B \|_{\tmop{op}}$.
\end{prop}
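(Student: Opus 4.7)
The plan is to treat the two parts separately, both reducing to standard $\ell_2$ inequalities once the contracted index $i$ is isolated. For notational simplicity I will write the contraction as $C_{\alpha\beta} = \sum_{i\in\calI} A_{i\alpha} B_{i\beta}$, where $\alpha$ collects the remaining legs of $A$ and $\beta$ collects the remaining legs of $B$; this is just relabeling of \eqref{CAB} and covers the general case since contraction does not depend on the position of the contracted leg.

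For part (a), I would apply Cauchy--Schwarz pointwise in $(\alpha,\beta)$:
\begin{equation*}
|C_{\alpha\beta}|^2 \;\leq\; \Bigl(\sum_{i}|A_{i\alpha}|^2\Bigr)\Bigl(\sum_{i}|B_{i\beta}|^2\Bigr).
\end{equation*}
Summing over all remaining indices $\alpha,\beta$ and using Tonelli to factor the two sums gives
\begin{equation*}
\|C\|^2 \;\leq\; \sum_{\alpha}\sum_{i}|A_{i\alpha}|^2 \;\cdot\; \sum_{\beta}\sum_{i}|B_{i\beta}|^2 \;=\; \|A\|^2\,\|B\|^2,
\end{equation*}
which proves (a) and as a byproduct shows $C$ is HS so all manipulations above are with absolutely convergent sums.

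For part (b), $B$ is a 2-tensor, so after the contraction $C_{\alpha\beta}$ has the single remaining $B$-leg index $\beta\in\calI$ (the one that was not contracted). For each fixed multi-index $\alpha$, view $v^{(\alpha)}_i := A_{i\alpha}$ as an element of $\ell_2(\calI)$ (it is, since $A$ is HS). Then $C_{\cdot\,\alpha}$ equals the image of $v^{(\alpha)}$ under the bounded operator on $\ell_2(\calI)$ associated with $B$, so
\begin{equation*}
\sum_{\beta}|C_{\alpha\beta}|^2 \;\leq\; \|B\|_{\tmop{op}}^{\,2}\,\|v^{(\alpha)}\|_{\ell_2}^{2} \;=\; \|B\|_{\tmop{op}}^{\,2} \sum_{i}|A_{i\alpha}|^2.
\end{equation*}
Summing over $\alpha$ yields $\|C\|^2 \leq \|B\|_{\tmop{op}}^{2}\,\|A\|^2$, as claimed.

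There is no real obstacle here; the only mild care is in (b), where one must argue that the linear operator associated to $B$ really is bounded (and defined) on all of $\ell_2(\calI)$ when $\calI$ is countable. This follows from the definition of $\|B\|_{\tmop{op}}$ as a supremum over finitely supported unit vectors together with density of such vectors in $\ell_2(\calI)$, so $v^{(\alpha)}$ is legitimately in the domain and the bound $\|Bv^{(\alpha)}\|_{\ell_2} \le \|B\|_{\tmop{op}}\|v^{(\alpha)}\|_{\ell_2}$ holds.
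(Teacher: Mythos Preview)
Your proof is correct. The paper itself omits the proof, noting only that it ``follows easily from the Cauchy--Schwarz inequality,'' which is precisely the route you take for part (a); your treatment of part (b) via the operator-norm bound on each $\alpha$-slice is the natural complement and is also correct.
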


The next example will be used frequently. We form a tensor $T$
by contracting 4 copies of an HS tensor $A$, as follows:
\begin{equation}
	\tikzset{every picture/.style={line width=0.75pt}} 
	\myraise{\begin{tikzpicture}[x=0.75pt,y=0.75pt,yscale=-1,xscale=1]
		
		\draw   (186,149) -- (204,149) -- (204,167) -- (186,167) -- cycle ;
		\draw    (204,158) -- (219,158) ;
		\draw    (174,158) -- (186,158) ;
		\draw    (195,179) -- (195,167) ;
		\draw    (195,149) -- (195,137) ;
		\draw   (226,149) -- (244,149) -- (244,167) -- (226,167) -- cycle ;
		\draw    (244,158) -- (259,158) ;
		\draw    (214,158) -- (226,158) ;
		\draw    (235,179) -- (235,167) ;
		\draw    (235,149) -- (235,137) ;
		\draw   (186,189) -- (204,189) -- (204,207) -- (186,207) -- cycle ;
		\draw    (204,198) -- (219,198) ;
		\draw    (174,198) -- (186,198) ;
		\draw    (195,219) -- (195,207) ;
		\draw    (195,189) -- (195,177) ;
		\draw   (226,189) -- (244,189) -- (244,207) -- (226,207) -- cycle ;
		\draw    (244,198) -- (259,198) ;
		\draw    (214,198) -- (226,198) ;
		\draw    (235,219) -- (235,207) ;
		\draw    (235,189) -- (235,177) ;
		\draw   (107,165) -- (133,165) -- (133,189.6) -- (107,189.6) -- cycle ;
		\draw    (133,173) -- (145,173) ;
		\draw    (114,202) -- (114,190) ;
		\draw    (126,202) -- (126,190) ;
		\draw    (125,165) -- (125,153) ;
		\draw    (114,165) -- (114,153) ;
		\draw    (133,183) -- (145,183) ;
		\draw    (95,173) -- (107,173) ;
		\draw    (95,183) -- (107,183) ;
		
		\draw (195,164.8) node [anchor=south] [inner sep=0.75pt]    {$A$};
		\draw (235,164.8) node [anchor=south] [inner sep=0.75pt]    {$A$};
		\draw (195,204.8) node [anchor=south] [inner sep=0.75pt]    {$A$};
		\draw (235,204.8) node [anchor=south] [inner sep=0.75pt]    {$A$};
		\draw (121,183.4) node [anchor=south] [inner sep=0.75pt]    {$T$};
		\draw (159.5,182.8) node [anchor=south] [inner sep=0.75pt]    {$=$};

	\end{tikzpicture}}\ \ .
	\label{Tex}
\end{equation}
The tensor $T$ is well defined: it's easy to see that each component is given by an absolutely
convergent series. Moreover, $T$ is HS and $\| T \| \leqslant \| A \|^4$. This is easy to prove directly,
and is also a consequence of the general Prop. \ref{contr-gen} below.

\subsection{Leg grouping and reindexing}
\label{LGR}
The tensor $T$ in \eqref{Tex} has 8 legs indexed by $\mathcal{I}$
(same index set as $A$). These 8 legs come naturally grouped in 4 pairs (right, up, left,
down). We may consider each of these 4 pairs of legs as a single leg with
an index set $\mathcal{I} \times \mathcal{I}$. Viewed this way, $T$ becomes a
4-tensor indexed by $\mathcal{I} \times \mathcal{I}$. This is an example of
leg grouping, which reshapes the tensor but does not change its components. We
can also group more than two legs.\footnote{In the {\tt python} package {\tt numpy}, often used for numerical tensor manipulations, leg grouping can be performed by the function {\tt reshape()}.}

Another simple operation is reindexing. Let $\mathcal{I}_1$ and
$\mathcal{I}_2$ be two index sets of the same cardinality, and let $\rho$ be a
one-to-one map from $\mathcal{I}_1$ onto $\mathcal{I}_2$ (reindexing map). If
$A$ is a tensor with a leg indexed by $\mathcal{I}_1$, we can use $\rho$ to
transform $A$ to a tensor $A'$ where the same leg is indexed by
$\mathcal{I}_2$. 

Leg grouping and reindexing just reshuffle tensor components but do not change their numerical values. In particular, these operations preserve the norm.

Here is an equivalent view of reindexing. Let $J$ be a 2-tensor with
the only nonzero components $J_{i i'} = 1$ if $i' = \rho (i)$. We call $J$
a reindexing tensor; it has operator norm 1. The reindexed tensor $A'$ is then
obtained by contracting $A$ with $J$. We write this graphically as:
\begin{equation}
	\tikzset{every picture/.style={line width=0.75pt}} 
	\myraise{\begin{tikzpicture}[x=0.75pt,y=0.75pt,yscale=-1,xscale=1]
		
		\draw   (99,59) -- (117,59) -- (117,77) -- (99,77) -- cycle ;
		\draw    (117,68) -- (132,68) ;
		\draw    (87,68) -- (99,68) ;
		\draw    (108,89) -- (108,77) ;
		\draw    (108,59) -- (108,47) ;
		\draw   (189,59) -- (207,59) -- (207,77) -- (189,77) -- cycle ;
		\draw    (207,68) -- (218,68.2) ;
		\draw    (177,68) -- (189,68) ;
		\draw    (198,89) -- (198,77) ;
		\draw    (198,59) -- (198,47) ;
		\draw   (218,59.2) -- (225,59.2) .. controls (228.87,59.2) and (232,63.01) .. (232,67.7) .. controls (232,72.39) and (228.87,76.2) .. (225,76.2) -- (218,76.2) -- cycle ;
		\draw    (232,68) -- (243,68.2) ;
		
		\draw (108,74.8) node [anchor=south] [inner sep=0.75pt]    {$A'$};
		\draw (198,73.8) node [anchor=south] [inner sep=0.75pt]    {$A$};
		\draw (152,64.2) node [anchor=north west][inner sep=0.75pt]    {$=$};
		\draw (219,62.4) node [anchor=north west][inner sep=0.75pt]    {$J$};
		\draw (134,61.2) node [anchor=north west][inner sep=0.75pt]    {$i'$};
		\draw (247,61.2) node [anchor=north west][inner sep=0.75pt]    {$i'$};

	\end{tikzpicture}}\ \ .
	\label{Jex}
\end{equation}

Below we will often apply leg grouping followed by reindexing, as follows.
Consider two legs of a tensor with index set $\mathcal{I}$. We group them,
obtaining a leg with an index set $\mathcal{I} \times \mathcal{I}$. Suppose that either $| \mathcal{I} | = 1$ or $| \mathcal{I} | =
\infty$. Then
$\mathcal{I} \times \mathcal{I}$ has the same cardinality as $\mathcal{I}$.
We can then apply reindexing as above with $\mathcal{I}_1
=\mathcal{I} \times \mathcal{I}$ and $\mathcal{I}_2 =\mathcal{I}$. After
reindexing, the leg is indexed with $\mathcal{I}$.

Let us see how this works for the tensor $T$ defined by {\eqref{Tex}}. As
explained, after leg grouping we can view it as a 4-tensor with legs indexed
by $\mathcal{I} \times \mathcal{I}$. We then apply reindexing on each of this
legs, and obtain a 4-tensor indexed by $\mathcal{I}$, the same index set we
started with.

Our main case of interest will be $| \mathcal{I} | = \infty$. In this case the reindexing map from
$\mathcal{I} \times \mathcal{I}$ to $\mathcal{I}$ is vastly non-unique.
Without loss of generality, we can take $\mathcal{I}=\mathbb{N}$. Choosing the
reindexing map $\rho$ then amounts to enumerating $\mathbb{N} \times
\mathbb{N}$ in some particular order. In our constructions below, we will fix
the first few elements of the enumeration sequence, and the rest of it will be
left arbitrary.

\subsection{Tensor networks}

Consider a finite periodic square lattice of size $L_x \times L_y$. Suppose we
put a 4-tensor $A$ at every vertex $(n, m)$ of the lattice, contracting its
legs with the legs of tensors at neighboring vertices, and taking into account
periodicity, as shown in this figure:
\begin{equation}
	\tikzset{every picture/.style={line width=0.75pt}} 
	\myraise{\begin{tikzpicture}[x=0.75pt,y=0.75pt,yscale=-1,xscale=1]
		
		\draw   (86,74.27) -- (104,74.27) -- (104,92.27) -- (86,92.27) -- cycle ;
		\draw    (104,83.27) -- (119,83.27) ;
		\draw    (74,83.27) -- (86,83.27) ;
		\draw    (95,104.27) -- (95,92.27) ;
		\draw    (95,74.27) -- (95,62.27) ;
		\draw   (126,74.27) -- (144,74.27) -- (144,92.27) -- (126,92.27) -- cycle ;
		\draw    (144,83.27) -- (159,83.27) ;
		\draw    (114,83.27) -- (126,83.27) ;
		\draw    (135,104.27) -- (135,92.27) ;
		\draw    (135,74.27) -- (135,62.27) ;
		\draw   (86,114.27) -- (104,114.27) -- (104,132.27) -- (86,132.27) -- cycle ;
		\draw    (104,123.27) -- (119,123.27) ;
		\draw    (74,123.27) -- (86,123.27) ;
		\draw    (95,144.27) -- (95,132.27) ;
		\draw    (95,114.27) -- (95,102.27) ;
		\draw   (126,114.27) -- (144,114.27) -- (144,132.27) -- (126,132.27) -- cycle ;
		\draw    (144,123.27) -- (159,123.27) ;
		\draw    (114,123.27) -- (126,123.27) ;
		\draw    (135,144.27) -- (135,132.27) ;
		\draw    (135,114.27) -- (135,102.27) ;
		\draw   (196,74.27) -- (214,74.27) -- (214,92.27) -- (196,92.27) -- cycle ;
		\draw    (214,83.27) -- (229,83.27) ;
		\draw    (184,83.27) -- (196,83.27) ;
		\draw    (205,104.27) -- (205,92.27) ;
		\draw    (205,74.27) -- (205,62.27) ;
		\draw   (196,114.27) -- (214,114.27) -- (214,132.27) -- (196,132.27) -- cycle ;
		\draw    (214,123.27) -- (229,123.27) ;
		\draw    (184,123.27) -- (196,123.27) ;
		\draw    (205,144.27) -- (205,132.27) ;
		\draw    (205,114.27) -- (205,102.27) ;
		\draw   (86,181.27) -- (104,181.27) -- (104,199.27) -- (86,199.27) -- cycle ;
		\draw    (104,190.27) -- (119,190.27) ;
		\draw    (74,190.27) -- (86,190.27) ;
		\draw    (95,211.27) -- (95,199.27) ;
		\draw    (95,181.27) -- (95,169.27) ;
		\draw   (126,181.27) -- (144,181.27) -- (144,199.27) -- (126,199.27) -- cycle ;
		\draw    (144,190.27) -- (159,190.27) ;
		\draw    (114,190.27) -- (126,190.27) ;
		\draw    (135,211.27) -- (135,199.27) ;
		\draw    (135,181.27) -- (135,169.27) ;
		\draw   (196,181.27) -- (214,181.27) -- (214,199.27) -- (196,199.27) -- cycle ;
		\draw    (214,190.27) -- (229,190.27) ;
		\draw    (184,190.27) -- (196,190.27) ;
		\draw    (205,211.27) -- (205,199.27) ;
		\draw    (205,181.27) -- (205,169.27) ;
		\draw    (205,211.27) .. controls (204.44,220.07) and (220.44,222.05) .. (219.73,210.25) ;
		\draw    (205,62.27) .. controls (205.22,50.67) and (219.22,51.65) .. (219.73,61.25) ;
		\draw [color={rgb, 255:red, 0; green, 0; blue, 0 }  ,draw opacity=0.15 ]   (219.54,213.74) -- (219.37,58.74) ;
		\draw    (74,190.27) .. controls (65.2,189.7) and (63.2,205.7) .. (75,205) ;
		\draw    (229,190.27) .. controls (240.6,190.5) and (239.6,204.5) .. (230,205) ;
		\draw [color={rgb, 255:red, 0; green, 0; blue, 0 }  ,draw opacity=0.16 ]   (75,205) -- (230,205) ;
		\draw    (74,83.27) .. controls (65.2,82.7) and (63.2,98.7) .. (75,98) ;
		\draw    (229,83.27) .. controls (240.6,83.5) and (239.6,97.5) .. (230,98) ;
		\draw [color={rgb, 255:red, 0; green, 0; blue, 0 }  ,draw opacity=0.15 ]   (75,98) -- (230,98) ;
		\draw    (135,211.27) .. controls (134.44,220.07) and (150.44,222.05) .. (149.73,210.25) ;
		\draw    (135,62.27) .. controls (135.22,50.67) and (149.22,51.65) .. (149.73,61.25) ;
		\draw [color={rgb, 255:red, 0; green, 0; blue, 0 }  ,draw opacity=0.15 ]   (149.54,213.74) -- (149.37,58.74) ;
		\draw    (95,211.27) .. controls (94.44,220.07) and (110.44,222.05) .. (109.73,210.25) ;
		\draw    (95,62.27) .. controls (95.22,50.67) and (109.22,51.65) .. (109.73,61.25) ;
		\draw [color={rgb, 255:red, 0; green, 0; blue, 0 }  ,draw opacity=0.15 ]   (109.54,213.74) -- (109.37,58.74) ;
		\draw    (74,123.27) .. controls (65.2,122.7) and (63.2,138.7) .. (75,138) ;
		\draw    (229,123.27) .. controls (240.6,123.5) and (239.6,137.5) .. (230,138) ;
		\draw [color={rgb, 255:red, 0; green, 0; blue, 0 }  ,draw opacity=0.16 ]   (75,138) -- (230,138) ;
		\draw    (95,234) -- (205,234) ;
		\draw [shift={(207,234)}, rotate = 180] [color={rgb, 255:red, 0; green, 0; blue, 0 }  ][line width=0.75]    (10.93,-3.29) .. controls (6.95,-1.4) and (3.31,-0.3) .. (0,0) .. controls (3.31,0.3) and (6.95,1.4) .. (10.93,3.29)   ;
		\draw [shift={(93,234)}, rotate = 0] [color={rgb, 255:red, 0; green, 0; blue, 0 }  ][line width=0.75]    (10.93,-3.29) .. controls (6.95,-1.4) and (3.31,-0.3) .. (0,0) .. controls (3.31,0.3) and (6.95,1.4) .. (10.93,3.29)   ;
		\draw    (51,190) -- (51,86) ;
		\draw [shift={(51,84)}, rotate = 90] [color={rgb, 255:red, 0; green, 0; blue, 0 }  ][line width=0.75]    (10.93,-3.29) .. controls (6.95,-1.4) and (3.31,-0.3) .. (0,0) .. controls (3.31,0.3) and (6.95,1.4) .. (10.93,3.29)   ;
		\draw [shift={(51,192)}, rotate = 270] [color={rgb, 255:red, 0; green, 0; blue, 0 }  ][line width=0.75]    (10.93,-3.29) .. controls (6.95,-1.4) and (3.31,-0.3) .. (0,0) .. controls (3.31,0.3) and (6.95,1.4) .. (10.93,3.29)   ;
		
		\draw (95,90.07) node [anchor=south] [inner sep=0.75pt]    {$A$};
		\draw (135,90.07) node [anchor=south] [inner sep=0.75pt]    {$A$};
		\draw (95,130.07) node [anchor=south] [inner sep=0.75pt]    {$A$};
		\draw (135,130.07) node [anchor=south] [inner sep=0.75pt]    {$A$};
		\draw (205,90.07) node [anchor=south] [inner sep=0.75pt]    {$A$};
		\draw (205,130.07) node [anchor=south] [inner sep=0.75pt]    {$A$};
		\draw (95,197.07) node [anchor=south] [inner sep=0.75pt]    {$A$};
		\draw (135,197.07) node [anchor=south] [inner sep=0.75pt]    {$A$};
		\draw (205,197.07) node [anchor=south] [inner sep=0.75pt]    {$A$};
		\draw (163,77.86) node [anchor=north west][inner sep=0.75pt]    {$\dotsc $};
		\draw (164,117.86) node [anchor=north west][inner sep=0.75pt]    {$\dotsc $};
		\draw (163,184.86) node [anchor=north west][inner sep=0.75pt]    {$\dotsc $};
		\draw (99.75,147.63) node [anchor=north west][inner sep=0.75pt]  [rotate=-89.67]  {$\dotsc $};
		\draw (139.75,147.63) node [anchor=north west][inner sep=0.75pt]  [rotate=-89.67]  {$\dotsc $};
		\draw (209.75,148.63) node [anchor=north west][inner sep=0.75pt]  [rotate=-89.67]  {$\dotsc $};
		\draw (148,238.2) node [anchor=north west][inner sep=0.75pt]    {$L_{x}$};
		\draw (32,130.2) node [anchor=north west][inner sep=0.75pt]    {$L_{y}$};

	\end{tikzpicture}}\ \ .
	\label{TNex}
\end{equation}
Such an arrangement of tensors is an example of a tensor network (TN). Performing
all contractions in a network, we get a number called its "value" or "partition function". This terminology is natural because partition functions of many lattice models can be represented by tensor networks, see Section \ref{models2networks}.

We will sometimes abuse the language and will use the expression "tensor network" to refer to both the tensor network and its partition function. We denote the partition function of the TN \eqref{TNex} by $Z(A,L_x,L_y)$. 

\begin{prop}
	Suppose that $A$ is HS, and that $L_x, L_y \geqslant 2$. Then the partition function of tensor network {\eqref{TNex}} is finite and satisfies the bound $|Z(A,L_x,L_y) | \leqslant
	\| A \|^{L_x L_y}$.
\end{prop}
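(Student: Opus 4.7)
The plan is to build the partition function by assembling the network one tensor at a time, controlling the Hilbert--Schmidt norm of each partial contraction via Proposition~\ref{prop-contr} combined with leg grouping from Section~\ref{LGR}. Fix any enumeration $v_1, \ldots, v_N$ of the $N = L_x L_y$ vertices of the periodic lattice and, for $1 \leq k \leq N$, let $\tilde T_k$ denote the tensor obtained by placing a copy of $A$ at each of $v_1, \ldots, v_k$ and contracting every bond of \eqref{TNex} whose two endpoints both lie in $\{v_1, \ldots, v_k\}$. Its free legs are then precisely the bonds running from $\{v_1, \ldots, v_k\}$ to $\{v_{k+1}, \ldots, v_N\}$, so $\tilde T_N$ carries no free legs and its scalar value is exactly $Z(A, L_x, L_y)$. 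The hypothesis $L_x, L_y \geq 2$ enters here to guarantee that no bond of \eqref{TNex} is a self-loop, so at every step $\tilde T_k$ and the incoming fresh copy of $A$ are genuinely distinct tensors.

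Next I would prove by induction on $k$ that $\|\tilde T_k\| \leq \|A\|^k$. The base case $k=1$ is immediate since $\tilde T_1 = A$. For the inductive step, $\tilde T_{k+1}$ is produced by contracting $\tilde T_k$ with a fresh copy of $A$ placed at $v_{k+1}$ along the $p \in \{1,2,3,4\}$ bonds joining $v_{k+1}$ to $\{v_1, \ldots, v_k\}$. I group these $p$ legs of $\tilde T_k$ into a single combined leg indexed by $\mathcal{I}^p$, and group the matching $p$ legs of the new $A$ in the corresponding way; by Section~\ref{LGR} this reshaping preserves HS norms. The contraction is now a single-leg contraction between two separate tensors, and Proposition~\ref{prop-contr}(a) therefore yields
\beq
\| \tilde T_{k+1} \| \leq \| \tilde T_k \|\, \| A \| \leq \| A \|^{k+1}.
\eeq
Setting $k = N$ gives $|Z(A, L_x, L_y)| = |\tilde T_N| \leq \|A\|^{L_x L_y}$, the desired bound.

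Finiteness of $Z$ --- interpreted as absolute convergence of the sum over all bond indices that defines the network value --- follows by running the same argument with $A$ replaced by its entrywise absolute value $|A|$, which has the same Hilbert--Schmidt norm; the iterated sum is then manifestly non-negative and bounded by $\|A\|^{L_x L_y}$, so the defining series for $Z$ converges absolutely and independently of the order of summation. The one bookkeeping point to check is that each inductive contraction really is between two disjoint tensors rather than a self-contraction, but this is exactly what the leg-grouping reduction together with the absence of self-loops (guaranteed by $L_x, L_y \geq 2$) provides; I do not expect any substantive obstacle beyond this.
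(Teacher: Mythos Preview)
Your proof is correct and follows essentially the same route as the paper: the paper derives this proposition from the more general Proposition~\ref{contr-gen}, whose proof is precisely your inductive leg-grouping argument applied to an arbitrary collection of HS tensors with no self-contractions. One tiny quibble: for an arbitrary enumeration $p$ can be $0$ (the new vertex may have no neighbours among the previous ones), in which case $\tilde T_{k+1}$ is simply the tensor product and $\|\tilde T_{k+1}\| = \|\tilde T_k\|\,\|A\|$ holds trivially; the paper's proof of Proposition~\ref{contr-gen} handles this case explicitly.
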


This follows from a more general result. 

\begin{prop}
	\label{contr-gen}Let $A^1_{}, \ldots, A^N$ be tensors which are HS. Let $C$
	be a tensor formed by contracting some collection of pairs of their legs.
	Assume that in each pair the two contracted legs do not belong to the same
	tensor (``no self-contraction''). Then $C$ is HS and $\| C \|_{} \leqslant
	\prod_{i = 1}^N \| A^i_{} \|_{}$.
\end{prop}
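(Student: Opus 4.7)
The plan is to prove the bound by induction on $N$, using Proposition \ref{prop-contr}(a) as the only nontrivial analytic input. The base case $N = 1$ is vacuous: no contractions are permitted, so $C = A^1$ and $\|C\| = \|A^1\|$.

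For $N = 2$ one must handle the possibility that $A^1$ and $A^2$ share several pairs of contracted legs, whereas Proposition \ref{prop-contr}(a) is stated for a single pair. Here the leg-grouping construction of Section \ref{LGR} does the work: I would bundle all legs of $A^1$ that are contracted with $A^2$ into a single super-leg indexed by the appropriate product index set, and do the same for $A^2$. Since leg grouping preserves both tensor values and HS norms, the multi-leg contraction becomes a single-leg contraction between two HS tensors, and Proposition \ref{prop-contr}(a) yields $\|C\| \leq \|A^1\|\|A^2\|$.

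For the inductive step with $N \geq 3$, if the contraction pattern is empty then $C$ is the tensor product $A^1 \otimes \cdots \otimes A^N$ and $\|C\| = \prod_i \|A^i\|$ follows by Fubini. Otherwise, pick any pair $(A^i, A^j)$ sharing at least one contracted leg pair and form an intermediate tensor $D$ by carrying out only the $A^i$--$A^j$ contractions; the $N = 2$ case just proved gives $\|D\| \leq \|A^i\|\|A^j\|$. Then view $C$ as the contraction of the reduced family $\{D\} \cup \{A^k : k \neq i, j\}$, which has $N - 1$ tensors, and apply the induction hypothesis to obtain $\|C\| \leq \|D\| \prod_{k \neq i, j} \|A^k\| \leq \prod_k \|A^k\|$.

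The step that genuinely requires care, and the most plausible place to slip up, is verifying that the reduced family still satisfies the no-self-contraction hypothesis. Every surviving contraction pair in the original network either has both legs on some $A^k, A^l$ with $k, l \notin \{i, j\}$, in which case nothing changes, or has one leg on $A^i$ or $A^j$ and the other on some $A^k$ with $k \notin \{i, j\}$, in which case the pair becomes a $D$--$A^k$ pair after the merger. In no case do two legs of $D$ become paired: the original hypothesis prevents two legs of $A^i$ (respectively $A^j$) from being contracted with each other, and all the $A^i$--$A^j$ pairs have already been absorbed into $D$. Hence the reduced network is legitimate, the induction closes, and the claimed bound follows.
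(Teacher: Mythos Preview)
Your proof is correct and follows essentially the same strategy as the paper: induction on $N$, with leg grouping plus Proposition~\ref{prop-contr}(a) as the analytic engine. The only organizational difference is that the paper peels off a single tensor $A^{N+1}$ at each step (contracting $A^1,\ldots,A^N$ first by the inductive hypothesis, then contracting the result with $A^{N+1}$), whereas you merge a connected pair $A^i,A^j$ into one tensor $D$ and recurse on $N-1$ tensors; both reductions are equivalent and your explicit verification that no self-contraction of $D$ is created is a nice touch the paper leaves implicit.
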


\begin{remark}
	The no self-contraction condition is important, because finite HS norm does
	not guarantee finiteness of the trace. 
\end{remark}

\begin{proof}
	We use induction on $N$. In the base case $N = 1$ the proposition is trivially
	true, since $C = A^1_{}$ as no self-contractions are allowed.
	
	Now assume the proposition is true for $N$ tensors. Let $C^{N + 1}$ be some
	contraction of $A^1_{}, \ldots, A^{N + 1}$ satisfying the no self-contraction
	condition. Let $C^N$ be the contraction of $A^1_{}, \ldots, A^N$ formed using
	all the contractions in $C^{N + 1}$ that do not involve $A^{N + 1}$. By the
	inductive hypothesis, $C^N$ is HS and $\| C^N \|_{} \leqslant \prod_{i = 1}^N
	\| A^i_{} \|$.
	
	The tensor $C^{N + 1}$ is formed by contracting some of the legs of $C^N$
	with some of the legs of $A^{N + 1}$. We group the legs in $C^N$ being
	contracted and group the legs in $A^{N + 1}$ being contracted. Then, by Prop.
	\ref{prop-contr}(a) we have $\| C^{N + 1} \| \leqslant \| C^N \| \| A^{N + 1}
	\|$ and, by the inductive hypothesis,
	\begin{equation}
		\| C^{N + 1} \| \leqslant \prod_{i = 1}^{N + 1} \| A^{N + 1}_{} \|,
		\label{ind-step}
	\end{equation}
	proving the induction step.
	
	It may happen that there are actually no legs to contract as $A^{N + 1}$ is
	disconnected from $C^N$ (the proposition does not assume that the network is
	connected). In this case $C^{N + 1}$ is the element-wise product of $C^N$ and
	$A^{N + 1}$. Such a $C^{N + 1}$ is trivially HS and satisfies $\| C^{N + 1} \|
	= \| C^N \| \| A^{N + 1} \|$. So Eq. {\eqref{ind-step}} follows as well.
\end{proof}

\subsection{Transformations of tensor networks}

It is possible that two networks are built of different tensors but have the
same value. We call such networks equivalent. 
Here we will describe several equivalence transformations of tensor networks: general and simple RG, gauge, and disentangler transformations.

\subsubsection{Tensor RG transformations}
\label{tensorRG}

Let $b$ be a positive integer. A tensor RG transformation with a lattice rescaling factor $b$ is a map of 4-tensors $A\mapsto A'$ which has the following basic property: for any $L_x,L_y$ divisible by $b$, the partition function of $L_x\times L_y$ tensor network made of $A$ equals the partition function of $(L_x/b)\times (L_y/b)$ tensor network made of $A'$:
\beq
\label{genTRG}
Z(A,L_x,L_y)=Z(A',L_x/b,L_y/b)\,.
\eeq
Note that $A'$ may in general have an index set different from $A$. We are mostly interested in $b\ge 2$ so that the number of tensors is reduced, although gauge transformations (see Section \ref{sec:gauge} below) have $b=1$.

Constructing such tensor RG transformations and using them to understand the phase diagram of lattice models will be the main point of our work. We will construct them by combining judiciously three basic equivalence transformations discussed below.

\subsubsection{Simple RG transformations}
\label{simpleRG}
A simple RG transformation is an example of a tensor RG transformations. It divides tensors $A$ comprising the network into small
groups, and defines new tensors $A'$ by contracting tensors within each group. The
resulting network has fewer tensors than the original one. It has the same
value by construction.

As an example, consider network {\eqref{TNex}} and assume that $L_x$ and
$L_y$ are even. We group tensors $A$ in $2 \times 2$ groups. This gives an
equivalent (same value) TN:
\begin{equation}
	\tikzset{every picture/.style={line width=0.75pt}} 
	\myraise{\begin{tikzpicture}[x=0.75pt,y=0.75pt,yscale=-1,xscale=1]
		
		\draw   (127,155.16) -- (153,155.16) -- (153,179.76) -- (127,179.76) -- cycle ;
		\draw    (153,163.16) -- (165,163.16) ;
		\draw    (135,192.16) -- (135,180.16) ;
		\draw    (146,192.16) -- (146,180.16) ;
		\draw    (145,155.16) -- (145,143.16) ;
		\draw    (134,155.16) -- (134,143.16) ;
		\draw    (153,173.16) -- (165,173.16) ;
		\draw    (115,163.16) -- (127,163.16) ;
		\draw    (115,173.16) -- (127,173.16) ;
		\draw   (172,155.16) -- (198,155.16) -- (198,179.76) -- (172,179.76) -- cycle ;
		\draw    (198,163.16) -- (210,163.16) ;
		\draw    (180,192.16) -- (180,180.16) ;
		\draw    (191,192.16) -- (191,180.16) ;
		\draw    (190,155.16) -- (190,143.16) ;
		\draw    (179,155.16) -- (179,143.16) ;
		\draw    (198,173.16) -- (210,173.16) ;
		\draw    (160,163.16) -- (172,163.16) ;
		\draw    (160,173.16) -- (172,173.16) ;
		\draw   (128,204.16) -- (154,204.16) -- (154,228.76) -- (128,228.76) -- cycle ;
		\draw    (154,212.16) -- (166,212.16) ;
		\draw    (136,241.16) -- (136,229.16) ;
		\draw    (147,241.16) -- (147,229.16) ;
		\draw    (146,204.16) -- (146,192.16) ;
		\draw    (135,204.16) -- (135,192.16) ;
		\draw    (154,222.16) -- (166,222.16) ;
		\draw    (116,212.16) -- (128,212.16) ;
		\draw    (116,222.16) -- (128,222.16) ;
		\draw   (173,204.16) -- (199,204.16) -- (199,228.76) -- (173,228.76) -- cycle ;
		\draw    (199,212.16) -- (211,212.16) ;
		\draw    (181,241.16) -- (181,229.16) ;
		\draw    (192,241.16) -- (192,229.16) ;
		\draw    (191,204.16) -- (191,192.16) ;
		\draw    (180,204.16) -- (180,192.16) ;
		\draw    (199,222.16) -- (211,222.16) ;
		\draw    (161,212.16) -- (173,212.16) ;
		\draw    (161,222.16) -- (173,222.16) ;
		\draw    (127,252.26) -- (200,252.26) ;
		\draw [shift={(202,252.26)}, rotate = 180] [color={rgb, 255:red, 0; green, 0; blue, 0 }  ][line width=0.75]    (10.93,-3.29) .. controls (6.95,-1.4) and (3.31,-0.3) .. (0,0) .. controls (3.31,0.3) and (6.95,1.4) .. (10.93,3.29)   ;
		\draw [shift={(125,252.26)}, rotate = 0] [color={rgb, 255:red, 0; green, 0; blue, 0 }  ][line width=0.75]    (10.93,-3.29) .. controls (6.95,-1.4) and (3.31,-0.3) .. (0,0) .. controls (3.31,0.3) and (6.95,1.4) .. (10.93,3.29)   ;
		\draw    (96.97,225.26) -- (96.03,155.26) ;
		\draw [shift={(96,153.26)}, rotate = 89.23] [color={rgb, 255:red, 0; green, 0; blue, 0 }  ][line width=0.75]    (10.93,-3.29) .. controls (6.95,-1.4) and (3.31,-0.3) .. (0,0) .. controls (3.31,0.3) and (6.95,1.4) .. (10.93,3.29)   ;
		\draw [shift={(97,227.26)}, rotate = 269.23] [color={rgb, 255:red, 0; green, 0; blue, 0 }  ][line width=0.75]    (10.93,-3.29) .. controls (6.95,-1.4) and (3.31,-0.3) .. (0,0) .. controls (3.31,0.3) and (6.95,1.4) .. (10.93,3.29)   ;
		
		\draw (141,173.56) node [anchor=south] [inner sep=0.75pt]    {$T$};
		\draw (186,173.56) node [anchor=south] [inner sep=0.75pt]    {$T$};
		\draw (142,222.56) node [anchor=south] [inner sep=0.75pt]    {$T$};
		\draw (187,222.56) node [anchor=south] [inner sep=0.75pt]    {$T$};
		\draw (156,132.96) node [anchor=north west][inner sep=0.75pt]   [align=left] {...};
		\draw (158,240.96) node [anchor=north west][inner sep=0.75pt]   [align=left] {...};
		\draw (218.71,185.65) node [anchor=north west][inner sep=0.75pt]  [rotate=-89.04] [align=left] {...};
		\draw (111.71,186.65) node [anchor=north west][inner sep=0.75pt]  [rotate=-89.04] [align=left] {...};
		\draw (154,257.46) node [anchor=north west][inner sep=0.75pt]    {$L_{x} /2$};
		\draw (61,185.46) node [anchor=north west][inner sep=0.75pt]    {$L_{y} /2$};
	\end{tikzpicture}}\ \ .
	\label{Trew}
\end{equation}
where $T$ is the contraction of four $A$ tensors defined in {\eqref{Tex}}. The map $A\mapsto T$ is a simple RG transformation. It has lattice rescaling factor $b=2$.

\subsubsection{Gauge transformations}
\label{sec:gauge}
Gauge transformations map a TN to an equivalent one having the same number of tensors. This can be thought as a tensor RG transformation having lattice rescaling factor $b=1$.

Let $G$ be a bounded, invertible linear map on $\ell_2 (\mathcal{I})$ whose inverse
$G^{- 1}$ is also bounded.
We denote the corresponding 2-tensors by
the same symbols $G$, $G^{- 1}$. In graphic notation:
\begin{equation}
	\myinclude{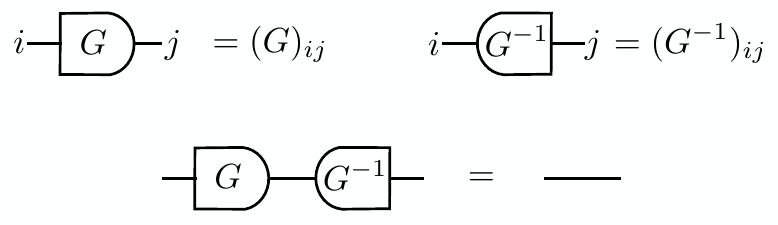}
	\label{GGinv}
\end{equation}

Take two bounded maps $G_x$ and $G_y$ as above and consider network
{\eqref{TNex}} built out of tensors $A$. Gauge transformation of this network
consists of two operations:
\begin{itemize}
	\item insert the product $G_x G_x^{- 1} = I$ on every horizontal link, and
	$G_y G_y^{- 1} = I$ on every vertical link of the network;
	
	\item regroup the tensors, passing to the equivalent network built out of
	tensor $A' $ (which is HS by Prop. \ref{prop-contr}(b)):
	\begin{equation}
\myinclude{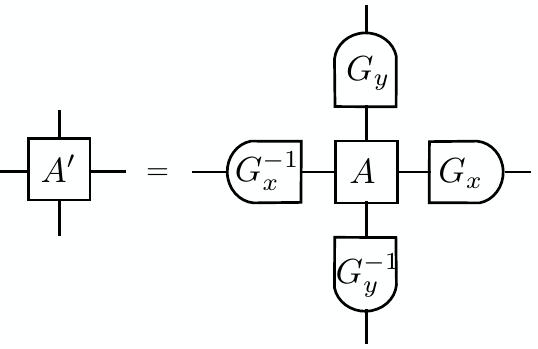}
		\label{gauge-ex}
	\end{equation}
\end{itemize}

\subsubsection{Disentangler transformations}
\label{sec:disentanglers}

Like gauge transformations, disentangler transformations will preserve the number of tensors in the network.

Let us first define ``disentanglers''. Disentanglers are like gauge
transformation tensors but acting on two legs at the same time. Namely,
a disentangler $R$ is a bounded, invertible linear map on $\ell_2 (\mathcal{I} \times
\mathcal{I})$ whose inverse $R^{- 1}$ is also bounded:
\begin{equation}
\myinclude{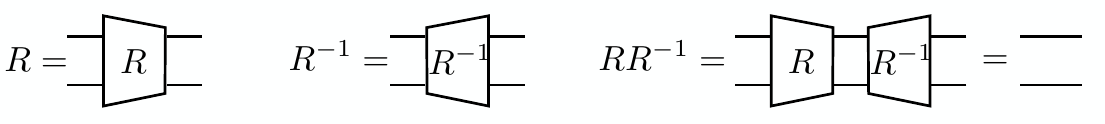}
	\label{RRinv}
\end{equation}
Let us now define disentangler transformations. Let $X$ and $Y$ be two 4-tensors indexed by the same index set. They may be different or identical. Consider a network made of these two tensors, putting them in alternation on the odd and even rows:
\begin{equation}
\myinclude{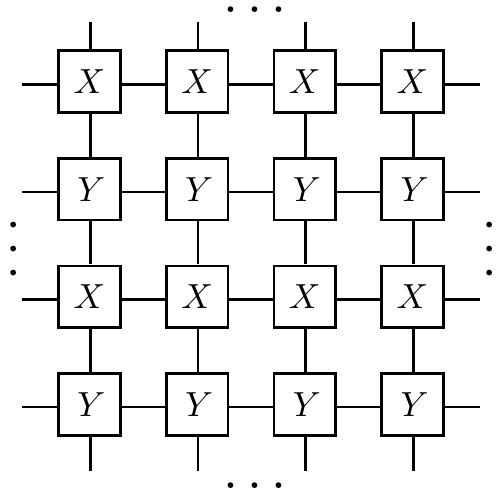} \label{XYtn}
\end{equation}
A disentangler transformation of this network consists of two steps:
\begin{enumerate}
	\item[\bf D1.]\label{D1} Insert $R R^{- 1} = I$, where $R$ is a disentangler, on the pairs of
	horizontal links: 
	\begin{equation}
\myinclude{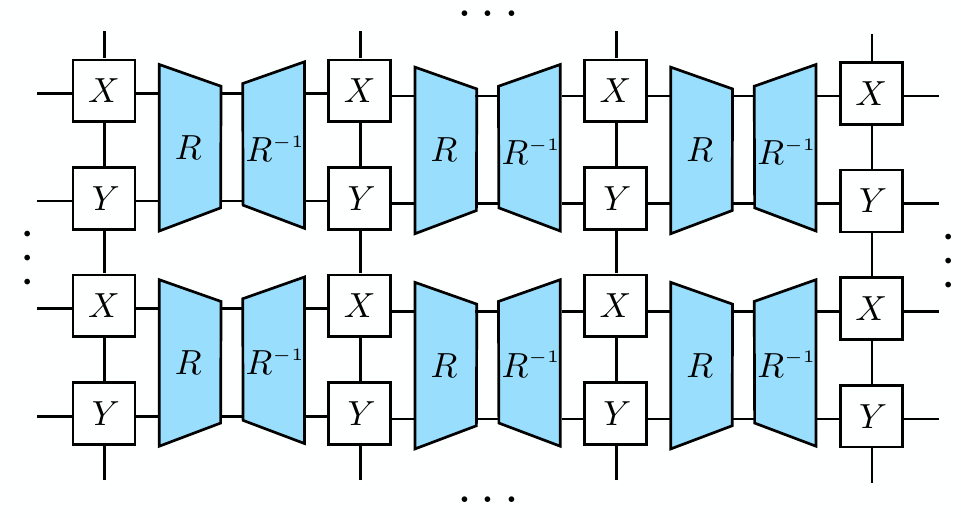}
		\label{XYdis}
	\end{equation}
	The resulting network can be equivalently obtained as a periodic contraction of the following single tensor:
	\begin{equation}
\myinclude[scale=1.3]{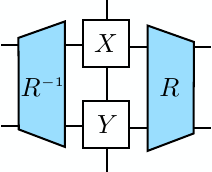}
		\label{XYdis0}
	\end{equation}
	\item[\bf D2.]\label{D2} Represent tensor \eqref{XYdis0} as a contraction of two new tensors
	$X'$ and $Y'$:
	\begin{equation}
\myinclude[scale=1.3]{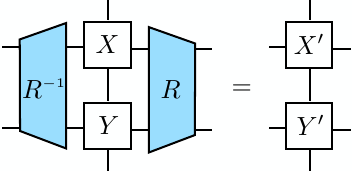}
		\label{XYdis1}
	\end{equation}
\end{enumerate}
Then, the TN built of $X'$ and $Y'$ in place of $X$ and $Y$ is
equivalent to the original network.

This completes the definition of disentangler transformation acting in the horizontal direction. Disentangler transformations acting in the vertical direction could be defined analogously, but we will not need them in this work. 

Note that for disentanglers of tensor product form $R = G_x \otimes G_x$, disentangler transformations reduce to gauge transformations on horizontal legs. 

\begin{remark} \label{disent-origine}The name ``disentangler'' comes from the fact that these transformation can
	reduce ``entanglement'' in the tensor network. We will
	not need to define this rigorously, but we will give an intuitive explanation. Roughly, tensor entanglement measures how many indices effectively contribute to tensor contractions.
	Entanglement is minimized if the contracted index space consists of a single
	element. This is analogous to quantum mechanics, where two subsystems are
	considered not entangled if the wavefunction of the total system is given by
	the product of subsystem wavefunctions. In this work, to reduce entanglement, we will aim to arrange so that the contraction of tensors $X'$ and $Y'$ mostly comes from one or two special values of the contracted index, while the other indices contribute as little as possible.
\end{remark}

Let us next discuss how disentangler transformations are performed in practice. One issue is how to find tensors $X'$ and $Y'$ satisfying {\eqref{XYdis1}}. In the finite-dimensional case $X'$ and $Y'$ always exist, and can be found e.g.~via singular value decomposition (SVD). For infinite-dimensional tensors and for general disentanglers $R$, this may not always be possible. 

In this work we will use infinite-dimensional tensors, and we will not rely on SVD. Instead, we will use disentanglers of a special form: we will assume that
$R$ and $R^{- 1}$ can be represented as the identity plus a contraction of two HS
3-tensors:
\begin{equation}
	\myinclude{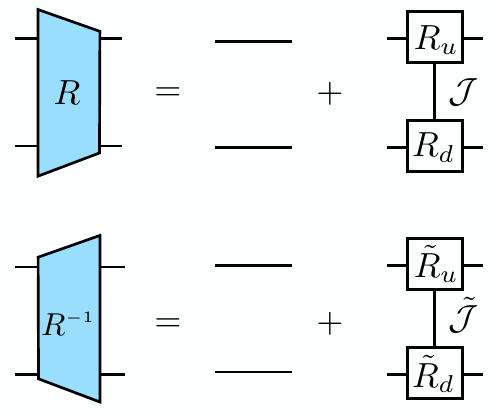},
	\label{Rfact}
\end{equation}
where $\mathcal{J}, \tilde{\mathcal{J}}$ denote the index sets of the contracted
legs in these equations. 

For such disentanglers, $X'$ and $Y'$ satisfying Eq. {\eqref{XYdis1}} can be found explicitly. Namely, we take:
\begin{equation}
\myinclude{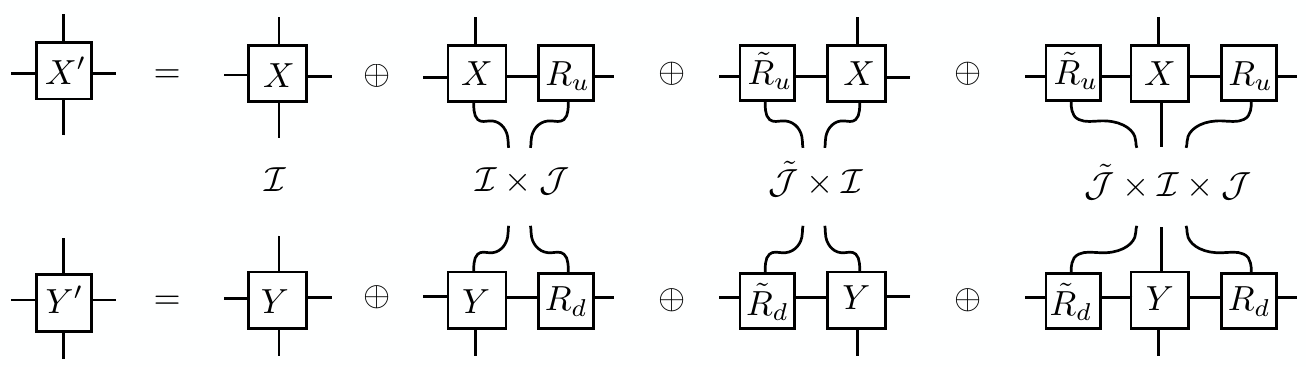}\ \ ,
	\label{XprimYprimSolve}
\end{equation}
where we indicated in the middle the index sets for each of the four terms in
the r.h.s. These four index sets are disjoint, and the total index set for the
common leg of $X'$ and $Y'$ is their union.
The disjointness of these index sets is crucial. It implies that when we contract $X'$ and $Y'$ we only
obtain the four ``diagonal'' terms. 

In turn, disentanglers satisfying {\eqref{Rfact}} will be obtained through the following exponentiation procedure. Consider a 4-tensor
$\rho$ written as a contraction of two HS 3-tensors (contracted leg index space
$\mathcal{K}$):
\begin{equation}
	\myinclude{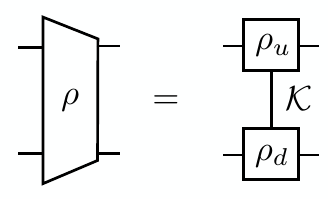}, 
	\label{r}
\end{equation}

\begin{lemma}\label{Rfromr}
	For $\rho$ as in {\eqref{r}}, viewed as a linear map on $\ell^2 (\mathcal{I}\times \mathcal{I})$, the map
	\begin{equation}
		R = \exp (\rho) 
	\end{equation}
	is a disentangler satisfying {\eqref{Rfact}}. More precisely we have 
	\begin{equation}
\myinclude{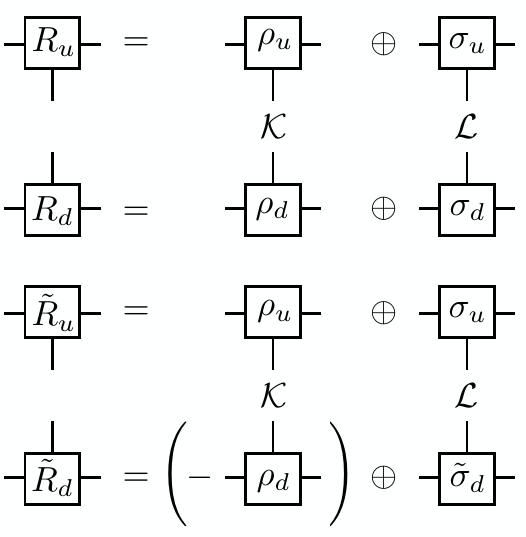}\ \ ,
		\label{LemmaR}
	\end{equation}
	where $\mathcal{K}$ and $\mathcal{L}$ are disjoint index spaces, and
	$\sigma_u, \sigma_d, \tilde{\sigma}_d$ are HS tensors with norms
	\begin{equation}
		\| \sigma_u \|  =O (\| \rho_u \|^2), \qquad \| \sigma_d \|, \| \tilde{\sigma}_d \|
		=O (\| \rho_d \|^2) \label{SuSdnorm},
	\end{equation}
	 which analytically\footnote{See Appendix \ref{abstract} for a brief reminder about analytic functions in Banach spaces.} depend on $\rho_u,\rho_d$.
	  
\end{lemma}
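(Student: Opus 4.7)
The plan is to unfold the exponential series $R=\exp(\rho)=I+\rho+\sum_{n\ge 2}\rho^n/n!$ (and likewise $R^{-1}=\exp(-\rho)$) and to repackage the $n\ge 2$ tail as a single contraction through a new index space $\mathcal{L}$, declared disjoint from the $\mathcal{K}$ used for $\rho$ itself. The key structural observation is that each power $\rho^n$, viewed as a 4-tensor on $\ell_2(\mathcal{I}\times \mathcal{I})$, admits a clean upper/lower factorization: in the composition $\rho\circ\rho\circ\cdots\circ\rho$ the intermediate upper $\mathcal{I}$-legs are contracted only with upper sides of consecutive factors, and likewise on the lower side, while the $n$ copies of the $\mathcal{K}$-leg remain free. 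So I define HS $3$-tensors $\rho_u^{(n)}$ (resp.\ $\rho_d^{(n)}$) by chaining $n$ copies of $\rho_u$ (resp.\ $\rho_d$) through their $\mathcal{I}$-legs and grouping the $n$ freed $\mathcal{K}$-legs into a single leg indexed by $\mathcal{K}^n$. Proposition~\ref{contr-gen} applies since the chain contains no self-contractions, giving $\|\rho_u^{(n)}\|\le\|\rho_u\|^n$ and $\|\rho_d^{(n)}\|\le\|\rho_d\|^n$, hence $\|\rho^n\|\le(\|\rho_u\|\|\rho_d\|)^n$ and the exponential series converges in HS norm.

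Next I set $\mathcal{L}:=\bigsqcup_{n\ge 2}\mathcal{K}^n$, disjoint from $\mathcal{K}$, and define $\sigma_u,\sigma_d,\tilde\sigma_d$ by specifying their restrictions to the $n$-th block of $\mathcal{L}$:
\[
\sigma_u\big|_{n}=\rho_u^{(n)}/\sqrt{n!},\qquad \sigma_d\big|_{n}=\rho_d^{(n)}/\sqrt{n!},\qquad \tilde\sigma_d\big|_{n}=(-1)^n\rho_d^{(n)}/\sqrt{n!}.
\]
Because the blocks are disjoint, the $\mathcal{L}$-contraction $\sigma_u\cdot\sigma_d$ is the block-diagonal sum $\sum_{n\ge 2}\rho_u^{(n)}\cdot\rho_d^{(n)}/n!=\sum_{n\ge 2}\rho^n/n!=R-I-\rho$, yielding $R = I+\rho_u\cdot_{\mathcal{K}}\rho_d+\sigma_u\cdot_{\mathcal{L}}\sigma_d$, i.e.\ exactly the form of \eqref{LemmaR}. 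The analogous computation with $\tilde\sigma_d$ in place of $\sigma_d$, paired with $-\rho_d$ in the $\mathcal{K}$-part, reproduces $R^{-1}=I-\rho+\sum_{n\ge 2}(-1)^n\rho^n/n!$. The HS norm bound follows by termwise estimation:
\[
\|\sigma_u\|^2=\sum_{n\ge 2}\frac{\|\rho_u^{(n)}\|^2}{n!}\le\sum_{n\ge 2}\frac{\|\rho_u\|^{2n}}{n!}=e^{\|\rho_u\|^2}-1-\|\rho_u\|^2=O(\|\rho_u\|^4),
\]
giving $\|\sigma_u\|=O(\|\rho_u\|^2)$; the identical argument applied to $\sigma_d$ and $\tilde\sigma_d$ yields the corresponding $O(\|\rho_d\|^2)$ bounds. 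Analyticity is then automatic: each of $\sigma_u,\sigma_d,\tilde\sigma_d$ is defined by a series whose $n$-th term is a homogeneous polynomial of degree $n$ in $\rho_u$ or $\rho_d$, and the series converges in HS norm on every bounded set, so it is an analytic Banach-space-valued function of its argument.

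The main obstacle I anticipate is the bookkeeping in the factorization step: one must verify carefully that iterated composition of the tensor $\rho=\rho_u\cdot_{\mathcal{K}}\rho_d$ really does decouple into independent upper and lower chains, so that the diagrammatic picture of $\rho^n$ is that of a ladder with $n$ rungs carrying independent $\mathcal{K}$-indices, and that the stacking over the disjoint union $\mathcal{L}$ produces only diagonal-in-$n$ cross terms under contraction. Once that combinatorial picture and the disjoint-union definition of $\mathcal{L}$ are in place, everything else is a standard exponential-series estimate.
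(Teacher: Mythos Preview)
Your proposal is correct and follows essentially the same route as the paper: expand $\exp(\pm\rho)$ as a power series, observe that $\rho^n$ factorizes into chained copies $\rho_u^{(n)}$ and $\rho_d^{(n)}$ contracted over $\mathcal{K}^n$, set $\mathcal{L}=\bigsqcup_{n\ge 2}\mathcal{K}^n$, and absorb the $1/n!$ weights as $1/\sqrt{n!}$ on each side. The paper bounds $\|\sigma_u\|$ by the triangle-inequality estimate $\sum_{n\ge 2}\|\rho_u\|^n/\sqrt{n!}$, whereas your Pythagorean computation $\|\sigma_u\|^2=\sum_{n\ge 2}\|\rho_u^{(n)}\|^2/n!$ exploits the disjointness of the blocks and is slightly sharper, but both yield $O(\|\rho_u\|^2)$ and the argument is otherwise identical.
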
	  

\begin{proof}
	(see Lemma 2.1 in \cite{paper1}) For $R$ we have
	\begin{equation}
	\myinclude{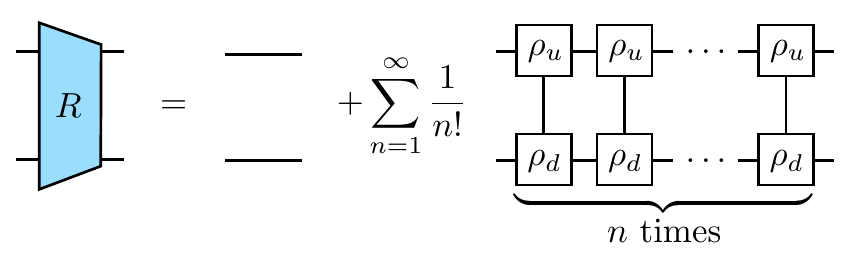},
		\label{Rexp}
	\end{equation}
	from where we see that {\eqref{Rfact}} can be satisfied for $R_u, R_d$
	as in {\eqref{LemmaR}} with $\mathcal{L}= \cup_{n = 2}^{\infty}
	\mathcal{K}^n$.\,
	\begin{equation}
\myinclude{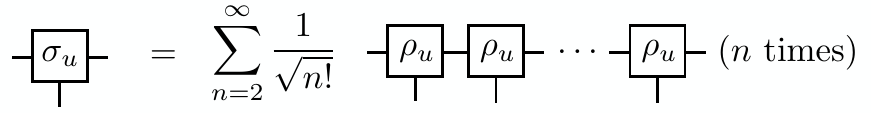}.
		\label{sigmau}
	\end{equation}
	and an analogous series for $\sigma_d$. Estimating the norm of individual summands in these series by Prop.~\ref{prop-contr}, we get that $\sigma_u, \sigma_d$ are HS with norms bounds $\| \sigma_u \| \leqslant f (\| \rho_u \|)$, $\| \sigma_d \|, \| \tilde{\sigma}_d \|
	\leqslant f (\| \rho_d \|)$ where $f (x) = \sum_{n = 2}^{\infty} \frac{1}{\sqrt{n!}} x^n=O(x^2)$.

The case of $R^{- 1}$ is considered similarly using
	$R^{- 1} = \exp (- \rho)$.
\end{proof}

\begin{remark}
  In the representation of $R^{-1}$ as the contraction of $\tilde{R}_u$ and $\tilde{R}_d$ we have made a choice of where to put the minus
  sign in $\exp(- \rho)$. We have arbitrarily chosen to put it in the lower diagrams. Thus this minus sign appears in front of the
  $\rho_d$ term in $\tilde{R}_d$ and appears in front of the odd terms in the series for $\tilde{\sigma_d}$. 
\end{remark}

The tensor $\rho$ in Lemma \ref{Rfromr} will be chosen to reduce the entanglement between tensors $X'$ and $Y'$ in Eq. \eqref{XYdis1}; see Remark \ref{disent-origine}. How to do this and why this is important, will become clear in Section \ref{RG_lowT} where we construct the RG map.

\section{From lattice models to tensor networks}\label{models2networks}

In this section we will discuss how partition functions of lattice spin models 
can be transformed to TN form. In other words, we want to find a TN whose partition function equals the partition function of a lattice model. We will start with the nearest-neighbor (NN) 2D Ising model, and then
consider more complicated models.

\subsection{NN Ising model}\label{nnising}

Consider the 2D square-lattice ferromagnetic NN Ising model in a magnetic field, defined by the partition function\footnote{\label{hh0}$\beta=1/T$ with $T$ the temperature. Note that $h=\beta h_0$, where $h_0$ is the temperature-independent magnetic field. For us it will be more convenient to work in terms of $h$ and not $h_0$.}
\begin{equation}
	Z = \sum_{\{ s \}} e^{H},\qquad H = \beta \sum_{\< i j \>} (s_i s_i-1) + h \sum s_i, \quad s_i = \pm 1\,.
	\label{NNIsingdef}
\end{equation}
We would like to represent this partition function by a
TN. The simplest way to do this is as follows \cite{TEFR}. Split the
spin lattice into black and white squares like a chessboard. For a black square $B$ having spins $s_1,s_2,s_3,s_4$ at its vertices, we define the quantity
\begin{equation}
	H_B(s_1,s_2,s_3,s_4) = \beta (s_1 s_2 + s_2 s_3 + s_3 s_4 + s_4 s_1-4) +
	\frac{h}{2} (s_1 + s_2 + s_3 + s_4)\,.
\end{equation}
We also define a tensor $A=A(\beta,h)$ with four indices $s_1,s_2,s_3,s_4$ corresponding to the spin values at the vertices of a black square and whose elements are given by
\begin{equation}
	A_{s_1 s_2 s_3 s_4} = e^{H_B(s_1,s_2,s_3,s_4)}= e^{\beta (s_1 s_2 + s_2 s_3 + s_3 s_4 + s_4 s_1-4) +
		\frac{1}{2} h (s_1 + s_2 + s_3 + s_4)}. \label{Ntensor}
\end{equation}
The point is that summing $H_B$ over all black squares, we get $H$ for the given spin configuration. This guarantees that the TN contraction of $A$'s reproduces exactly the spin model partition function $Z$. This is illustrated by this figure:
\begin{equation}
\myinclude{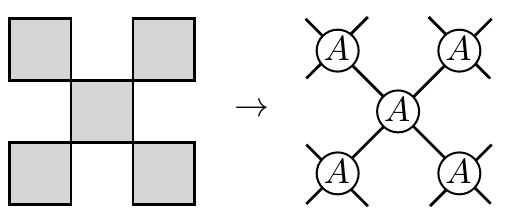} .\label{checkerboard}
\end{equation}

\begin{remark} In this work we will always consider the usual periodic boundary conditions (b.c.) for tensors, as in \eqref{TNex}. Since in \eqref{checkerboard} the tensor lattice is rotated by 45 degrees, the usual periodic b.c.~for tensors correspond to 45-degree-rotated b.c.~for spins, and vice versa. There exist other constructions where the tensor network lattice is not rotated and the usual periodic b.c.~for tensors and spins are mapped to each other. See e.g.~App.~A.2 of \cite{paper1} or Section \ref{GenFinRange} below.
\end{remark}

It is interesting to examine the tensor $A$ given by \eqref{Ntensor} at high and low
temperatures. The high-T fixed point corresponds to setting $\beta =
0$ and $h = 0$. In this case the tensor
\begin{equation}
	A_{s_1 s_2 s_3 s_4} = 1 \quad \forall s_i=\pm 1\,. \label{HT0}
\end{equation}
Let us do an orthogonal gauge transformation corresponding to a change of basis from the states $| \pm \rangle$ to their $\mathbb{Z}_2$-even and odd combinations $| 0 \rangle ,  |1 \rangle=
\frac{1}{\sqrt{2}} (| + \rangle \pm | - \rangle)$. The high-T fixed point tensor \eqref{HT0} has only one nonzero component in this basis.
After rescaling by a factor of $1/16$, it is given by: 
\begin{equation}
	(A_{\ast})_{0000} = 1\ . \label{HTFP}
\end{equation}
If we turn on small $\beta$ or small magnetic field $h$ we will have, in the $0,1$ basis, 
\begin{equation}
	A = A_{\ast} + B,\quad B=B(\beta,h),\quad\| B \| \le O (\beta)+O(h)\ .
\end{equation}
We
can then apply the tensor RG from our previous work {\cite{paper1}} to get
recursively closer to the high-T fixed point. We can also apply
a cluster expansion to study this regime. We will review these considerations in Section \ref{HighT} below.\footnote{\label{whereapplies}We can also consider arbitrary $h$. For $\beta=0$ we have $A_{s_1 s_2 s_3 s_4}=e^{(h/2)s_1} e^{(h/2) s_2} e^{(h/2) s_3} e^{(h/2) s_4}$. We define the basis $|0\rangle \propto e^{h/2}|+\rangle +e^{-h/2}|-\rangle$, $|1\rangle$ an orthogonal state. In the new basis $A$ is proportional to $A_*$. Applying the same rotation at a small nonzero $\beta$ we get $A=A_*+O(\beta)$. Thus our previous work {\cite{paper1}} applies.}

Consider next low temperatures. The low-T fixed point corresponds to taking the limit $\beta
\rightarrow \infty$ in {\eqref{Ntensor}} at $h = 0$. The limiting tensor has two nonzero components:
\begin{equation}
	A_{+ + + +} = A_{- - - -} = 1 . \label{LTFP}
\end{equation}
In this case we don't apply a gauge transformation and we analyze the tensor directly in the $\pm$ basis. We introduce tensors $A^{(\pm)}$ each having just one of the two nonzero components:
\begin{equation}
	(A^{(q)})_{qqqq}=1,\qquad q=\pm \label{Apm}\ .
\end{equation}
Each of these two tensors is by itself a tensor RG fixed point of the same form as
the high-T fixed point \eqref{HTFP}, up to renaming of the state $0$ by $\pm$. In other words, the low-T
fixed point is the direct sum
\begin{equation}
	A_{\rm LT}= A^{(+)} \oplus A^{(-)}\ . \label{LTFP1}
\end{equation}
of two high-T
fixed points, representing the positive and negative
magnetized phases of the Ising model. 

Consider next the vicinity of the low-T fixed point. First let us take a finite large $\beta$ keeping $h = 0$. In this case we have
\begin{equation}
	A = A_{}^{(+)} + A^{(-)} + \delta A\ .
\end{equation}
where $\| \delta A \| = O (e^{- 4 \beta})$, $(\delta A)_{+ + + +} = (\delta
A)_{- - - -} = 0$, and $\delta A$ respects $\mathbb{Z}_2$ invariance. We expect
that such a tensor $A$ should flow to the low-T fixed point {\eqref{LTFP1}} under an
appropriate tensor RG transformation. This will be one of our results in
this paper.

Finally let us take large $\beta$ and nonzero $h$. Then we have
\begin{equation}
	A_{+ + + +} = e^{2 h}, \quad A_{- - - -} = e^{- 2  h},
\end{equation}
while other components are down by $e^{- 4 \beta}$. 
If we take $h$ large then this tensor, after rescaling, will be one of the
fixed points $A^{(+)}$ or $A^{(-)}$ plus a small perturbation.\footnote{\label{whereapplies1}One can see from Eq.~\eqref{Bexpl} below that this is true for $|h|$ large and any $\beta$, large or small. Combining this remark with footnote \ref{whereapplies}, we conclude that our previous work {\cite{paper1}} covers $|h|\gg 1$, $\beta$ any, as well as $\beta\ll 1$, $h$ any.} In this
situation, the tensor RG of our previous work {\cite{paper1}} applies. Also,
a cluster expansion can be used to show that the model is in the corresponding
magnetized phase.

On the other hand, if we take $\beta$ large and $h=O(1)$, the $A_{+ + + +}$ and $A_{- - - -}$ components will be comparable.
So the initial tensor is not one of the two high-T fixed points plus a small
perturbation. Our previous result from {\cite{paper1}} on the stability of the
high-T fixed point cannot be applied here. This will be our main case of interest in the present paper. 
After rescaling by $e^{2h} + e^{- 2 h}$, the initial
tensor $A(\beta,h)$ can be written as
\begin{eqnarray}
	A &=& \alpha A^{(+)} + (1 - \alpha) A^{(-)} + B,\qquad\text{where}\label{nearLT}\\
	\alpha &=& {e^{2 h}}/({e^{2h} + e^{- 2 h}}) \in (0, 1\ ,)\nn\\
	B&=&B(\beta,h),\quad \| B \| = O (e^{- 4 \beta}), \quad B_{+ + + +} = B_{- - - -} = 0\ . \nn
\end{eqnarray}
Explicitly, tensor $B$ has the following three nonzero components (up to cyclic permutations of indices, and the $\mathbb{Z}_2$ transformation $+\leftrightarrow -$ accompanied by $h\to -h$) 
\begin{eqnarray}
 \label{Bexpl}	B_{+++-} &=& e^{-4\beta +h}/(e^{2h} + e^{- 2 h})\ ,\\
	B_{++--} &=& e^{-4\beta}/(e^{2h} + e^{- 2 h})\ ,\nn\\
	B_{+-+-} &=& e^{-8\beta}/(e^{2h} + e^{- 2 h})\ .\nn
\end{eqnarray}
In this paper, we will devise a new form of tensor RG which acting 
a tensor of the form \eqref{nearLT} recursively will converge towards $A^{(+)}$ or $A^{(-)}$ depending on
whether the initial field $h$ was positive or negative, no matter how small.
This will be developed in Sections \ref{LTRG}, \ref{RG_lowT} and \ref{LTprop}.

\subsection{General finite-range model}\label{GenFinRange}

The previous section gave one way to represent the nearest-neighbor 2D Ising model as a TN. Here we will show a general result: an arbitrary spin model with finite range interactions can be reduced to a TN. 

\begin{prop}
	\label{finite-range} The partition function of any spin model with discrete spin space on the 2D square
	lattice with finite-range translation-invariant interactions can be
	represented in tensor network form.
\end{prop}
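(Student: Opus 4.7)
The plan is to generalize the construction of Section~\ref{nnising} by coarse-graining the spin lattice to a scale strictly larger than the interaction range. Using finite range and translation invariance, we may write $H=\sum_\alpha h_\alpha$ where the $h_\alpha$ are translates of finitely many local terms, each supported on a set $D_\alpha$ of diameter at most some fixed integer $R$. I would choose a blocking length $L\ge R+1$, partition $\mathbb{Z}^2$ into $L\times L$ super-cells, and identify the joint configuration of the $L^2$ fine spins inside each super-cell with a single super-spin taking finitely many values.

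I would then place one 4-leg tensor on each super-cell, arranged on a square super-lattice of the form~\eqref{TNex}. Each leg would carry a finite index encoding the configuration of the fine spins in a sufficient boundary neighborhood shared by the two adjacent super-cells (for instance, the pair of width-$R$ boundary strips on either side of the shared edge, including corner blocks). Each local term $h_\alpha$ is then assigned to a unique super-cell by a deterministic rule; in particular, in the case that $D_\alpha$ reaches across a shared corner into a diagonal super-cell, $h_\alpha$ is assigned to one of the two edge-neighbors that share that corner rather than to a diagonal super-cell. The tensor component at fixed leg indices is defined by summing $\exp\bigl(\sum_{\alpha\text{ assigned here}} h_\alpha\bigr)$ over the remaining interior fine spin configurations of the super-cell that are consistent with the leg data.

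The main step to verify is a bookkeeping statement: because $L\ge R+1$ and $\mathrm{diam}(D_\alpha)\le R$, the support $D_\alpha$ fits in a $2\times 2$ block of super-cells, and the above assignment rule places $h_\alpha$ at a super-cell whose four legs together expose every spin of $D_\alpha$ through the boundary-strip data of its edge-neighbors. Once this inspection is completed, the rest is routine: the leg contractions enforce agreement of adjacent tensors on their shared boundary data, so summing over the leg indices together with the interior sums inside each tensor reproduces the full sum over all spin configurations on $\mathbb{Z}^2$, and by the assignment rule each Boltzmann factor $e^{h_\alpha}$ is counted exactly once, yielding $Z$. The main obstacle is precisely the geometric bookkeeping that shows every spin of every assigned $D_\alpha$ is accessible from the legs of the assigned tensor; this is elementary but requires a small case analysis over the position of $D_\alpha$ inside its $2\times 2$ block.
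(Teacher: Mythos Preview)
Your blocking step matches the paper's first move: coarse-grain so that every interaction term fits inside a $2\times 2$ block of super-cells. Where you diverge is in the subsequent tensor construction, and there your sketch has a genuine gap. Consider a term $h_\alpha$ whose support $D_\alpha$ contains one spin from each of the four super-cells meeting at a common corner --- for instance the four fine sites $\{(-1,-1),(0,-1),(-1,0),(0,0)\}$ around the super-cell corner at the origin; this set has $\ell^\infty$-diameter $1$ and hence survives for any $R\ge 1$ and any $L$. With your stated leg data (width-$R$ strips on the two sides of each shared edge), the tensor at any one of these four super-cells sees through its legs the strips of its \emph{edge}-neighbors but nothing of its diagonal neighbor. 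Hence no single super-cell can evaluate this $h_\alpha$, and your reassignment rule does not help: whichever of the four cells you choose, exactly one of the remaining three is diagonal to it. If ``including corner blocks'' is instead meant to extend each bond's data into the $R\times R$ corners of the diagonally adjacent super-cells, then accessibility is restored, but now each corner spin appears in four distinct bonds and you must check that the consistency constraints imposed by the four surrounding tensors collapse the four copies into a single free sum. That is true, but it is the actual crux, not the accessibility check you flag as the main obstacle.

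The paper sidesteps all of this. After the same blocking reduction, one has a super-spin model whose interactions fit in $2\times 2$ plaquettes. The paper then places a tensor $B_{s_1 s_2 s_3 s_4}=\exp(V_{s_1 s_2 s_3 s_4})$ on each plaquette and a Kronecker $4$-tensor $\delta_{s_1 s_2 s_3 s_4}$ on each vertex; the Kronecker tensor cleanly implements the sharing of each spin among the four plaquettes that meet there. Splitting each vertex Kronecker $4$-tensor into four $3$-leg Kronecker pieces and absorbing one into each adjacent $B$ yields a single $4$-leg tensor $A$ with leg index $\mathcal{S}\times\mathcal{S}$ on a square lattice. No corner bookkeeping arises because the spin-sharing is handled by the Kronecker tensors rather than by overlapping leg data.
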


\begin{proof} It is sufficient to consider a spin model where interactions have range at
	most $\sqrt{2}$. This means that all interactions fit into $2 \times 2$
	plaquettes. A general finite-range interaction can be reduced to
	this case by dividing the lattice into sufficiently large squares and
	introducing new spin variables corresponding to all possible configurations of
	the original spins in each square.
	
	We can write $H$ as $\sum_X V_X$ where $X$ is summed over the 2 by 2 plaquettes, and $V_X$ only depends on the spins in $X$. To get this representation, the NN interactions along an edge have to be equally split between the 2 plaquettes sharing that edge. Analogously, the single-spin interactions localized at a vertex have to be split among the four plaquettes sharing this vertex.
	
	Consider one $2 \times 2$ plaquette with spins $s_1, s_2, s_3, s_4 \in \mathcal{S}$ at
	its vertices, where $\mathcal{S}$ is the spin space of the model, assumed finite.
        Then we define a 4-index tensor $B$ with $s_1, s_2, s_3, s_4$ as indices, and
	components given by
	\begin{eqnarray}
		\label{figBspin}
		\myinclude{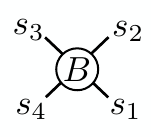} & = & \exp(V_{s_1 s_2 s_3 s_4})\ .
	\end{eqnarray}
	Then the partition function can be written as a tensor network:
	\begin{equation}
		\myinclude{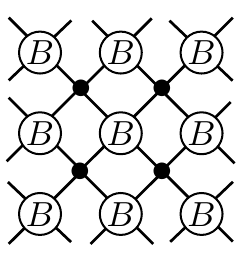}
		\label{BspinTN}
	\end{equation}
	made of the tensor $B$ and the Kronecker 4-tensor $\bullet$ which simply sets the
	indices of four $B$ tensors to the same value, i.e.
	\begin{equation}
\myinclude{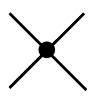} = \delta_{s_1
			s_2 s_3 s_4}\ .
	\end{equation}
	The Kronecker 4-tensor can be written as a contraction of 4 Kronecker
	3-tensors:
	\begin{equation}
\myinclude{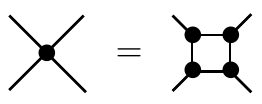}
	\end{equation}
	We see that we can write network \eqref{BspinTN} equivalently in terms of one 4-tensor $A$
	with index space $\mathcal{S} \times \mathcal{S} $, defined as:
	\begin{equation}
\myinclude{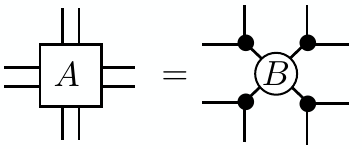}\ .
\label{figAspin}
	\end{equation}
	This proves the proposition. \end{proof}

Let us illustrate this construction on a couple of examples. 

{\it Example 1.} Consider the 2D square-lattice $\mathbb{Z}_2$-invariant Ising
model with nearest-neighbor, next-to-nearest-neighbor, and four-spin plaquette interactions:
\beq
H = J \sum_{\langle i j \rangle} s_i s_j + J' \sum_{\langle \langle
	i j \rangle \rangle} s_i s_j + J_{\Box} \sum_{\langle i j k l \rangle} s_i
s_j s_k s_l . 
\eeq
In this case $\mathcal{S} = \{ +, - \}$. We have 
\beq
H= \sum_X V_X,\quad V_{s_1 s_2 s_3 s_4}=\frac{J}2 (s_1s_2+s_2 s_3+s_3 s_4+s_4 s_1)+ J'(s_1 s_3+s_2 s_4) + J_{\Box} s_1 s_2 s_3 s_4.
\eeq
 Tensors $B$ and $A$ have 4 nonzero components
(up to rotation and $\mathbb{Z}_2$ transformations):
\begin{equation}
\myinclude{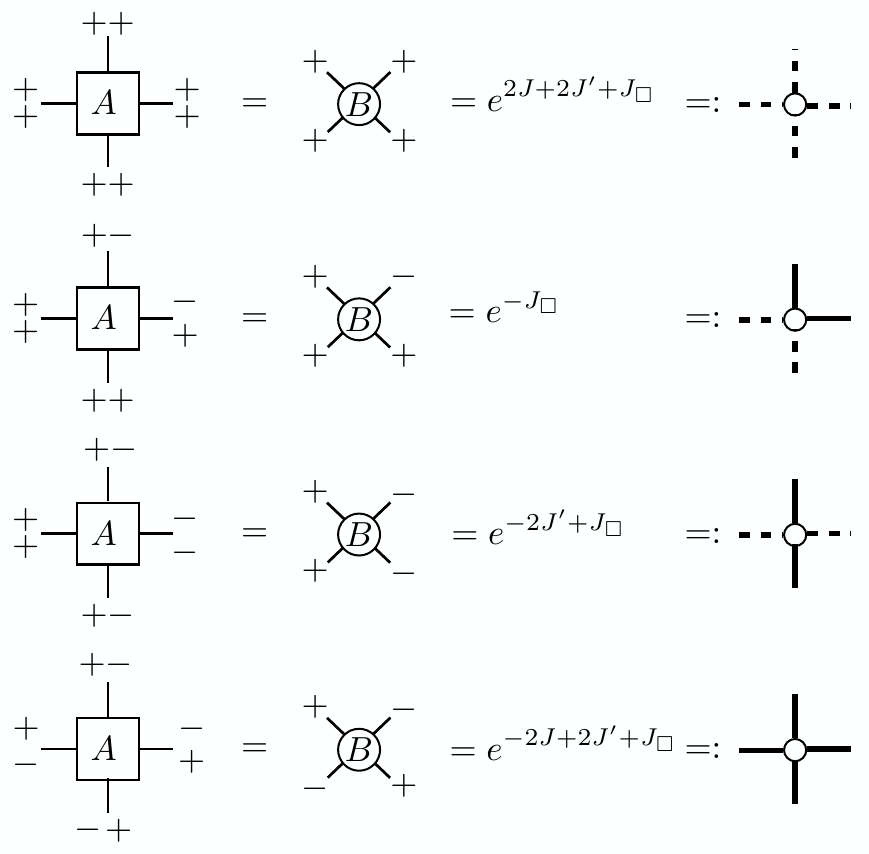}\ . \label{dw}
\end{equation}
The last column of {\eqref{dw}} corresponds to a trick by which one can reduce the bond dimension down to 2 in this special case \cite{Fan-Wu}. Nonzero components of $A$ are in one-to-one
correspondence with configurations of domain walls separating pairs of spins
indexing the legs of $A$, and these define components of a
4-tensor with index space $\{ \text{wall ( $|$ )}, \text{no wall ( $\brokenvert$ )} \}$, given in the last column of {\eqref{dw}}.
The tensor network made of this
``domain wall tensor'' reproduces the partition function up to a factor of 2
(because each domain wall configuration has exactly two spin
configurations giving rise to it). 

{\it Example 2.} Consider next the 2D square-lattice Ising
model with the nearest-neighbor coupling and two $\mathbb{Z}_2$-breaking interactions: magnetic field and the cubic coupling of three nearby spins. The Hamiltonian is
\beq
\label{IsingCubic}
H = \beta \sum_{\langle i j \rangle} s_i s_j + h \sum_{i}
 s_i  + \gamma \sum_{\langle i j k \rangle} s_i
s_j s_k ,
\eeq
where the last term includes the sum over all triangles which fit into a $2\times2$ plaquette.

In this example we have $H= \sum_X V_X$ with
\beq
V_{s_1 s_2 s_3 s_4}=\frac{\beta}2 (s_1s_2+s_2 s_3+s_3 s_4+s_4 s_1)+ \frac{h}{4}(s_1+s_2+s_3+ s_4) + \gamma (s_1 s_2 s_3 + s_2 s_3 s_4+ s_3 s_4 s_1 + s_4 s_1 s_2).
\eeq
Tensor $A$ is easily obtained from the general formulas \eqref{figBspin},\eqref{figAspin}; it is indexed by $\mathcal{S}\times \mathcal{S}$ with $\mathcal{S} = \{ +, - \}$. The all $+$ and all $-$ components are given by:
\beq
\myinclude{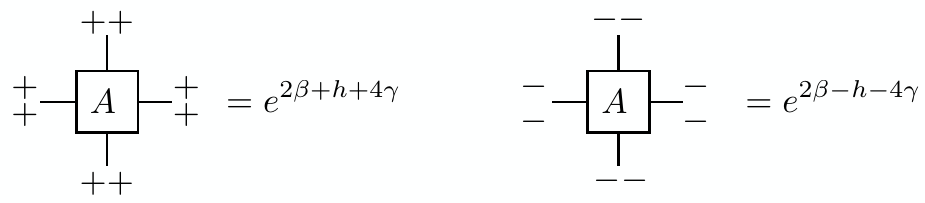}\ \ ,
\eeq
while other nonzero components are (up to rotation and $\mathbb{Z}_2$ transformations which interchange $+$ with $-$ and simultaneously flip the sign of $h$ and $\gamma$):
\begin{equation}
	\myinclude{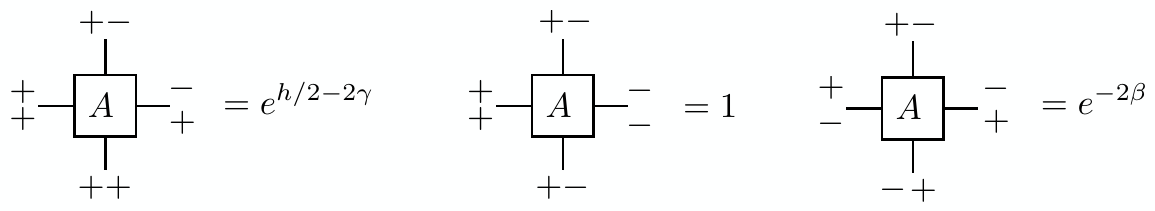}\ .
\end{equation}
We wish to examine the structure of $A$ in the low-T limit. Let us reindex states $\{++,--,+-,-+\}$ by $\{+,-,1,2\}$. Rescaling $A$ by $e^{2\beta}(e^{h+4\gamma}+e^{-h-4\gamma})$, we obtain a tensor of the form
\beq
\alpha A^{(+)}+(1-\alpha) A^{(-)}+B\ ,
\label{ex2}
\eeq
where $\alpha= e^{h+4\gamma}/(e^{h+4\gamma}+e^{-h-4\gamma})$ and $B=O(e^{-2\beta})$.

Compare this to what we had in \eqref{nearLT} for the NN Ising model in magnetic field (but without the cubic coupling). In both cases we have a convex linear combination of $A^{\pm}$ with a correction which is small for $\beta\gg 1$.\footnote{Recall that \eqref{nearLT} was obtained by a different method, which works for the NN Ising but not for more general models, and which results in a tensor network having half as many tensors. This explains the difference in coefficients of $\beta$ and $h$ between \eqref{nearLT} and \eqref{ex2}.} In both cases $\alpha$ is a monotonic function of $h$. These are the similarities.

The main difference is that in \eqref{nearLT} there is a special value $h_0=0$ for which $\alpha=1/2$ and, moreover, $B$ preserves $\mathbb{Z}_2$ invariance. On the other hand, in \eqref{ex2} we have $\alpha=1/2$ at $h_0=-4\gamma$, and the Hamiltonian does not have any special symmetry at this value of $h$. 

This difference will become important when discussing the first-order phase transition at low $T$ as a function of $h$. At $T=0$ the transition happens at $h_0$ in both models. At $T$ slightly above zero, the transition in \eqref{nearLT} is still at $h=0$, while in \eqref{ex2} its location is expected to shift. We will discuss this further in Section \ref{LTRG}.

It is not hard to convince yourself that the observed features of \eqref{nearLT} and \eqref{ex2} are in fact generic. Namely, the following proposition is true.

\begin{prop}
Take any perturbation of the NN Ising model by a magnetic field $h$ and some other couplings:
\beq
\label{smallpert}
H = \beta \sum_{\langle i j \rangle} s_i s_j + h \sum_{i}
s_i  + \ldots ,
\eeq
where $\ldots$ stands for a finite number of translationally invariant interactions, all having finite range, and controlled by couplings $\gamma_1,\ldots,\gamma_N$. We assume that the $\gamma_i$ are fixed, while $h\in \mathbb{R}$ and $\beta>0$ can vary. Reduce this model to a tensor network using the method of this section. The obtained tensor, after rescaling, can then be written in the form \eqref{ex2} where $B=B(\beta,h)= O(\exp(-const.\beta))$ in the large $\beta$ limit (uniformly in $h$) while $\alpha=\alpha(h)$ is a monotonic function of $h$ such that 
\beq
\lim_{h\to- \infty} \alpha=0,\qquad \lim_{h\to+ \infty} \alpha=1\ .
\eeq
If all couplings $\gamma_i$ preserve $\mathbb{Z}_2$, then at $h=0$ we have
$\alpha=1/2$, and $B$ preserves $\mathbb{Z}_2$ invariance. 
\end{prop}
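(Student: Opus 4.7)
The plan is to combine the construction of Proposition~\ref{finite-range} with a low-temperature expansion that isolates the contributions of the two ordered ground states as the dominant tensor components. If any interaction in \eqref{smallpert} has range exceeding $\sqrt{2}$, I first group the spins into $k\times k$ blocks large enough that every interaction fits inside a $2\times 2$ block plaquette, enlarging the spin space to $\mathcal{S}=\{+,-\}^{k^{2}}$. Denote by $\oplus,\ominus\in\mathcal{S}$ the uniformly $+$ and $-$ block states. I then write $H=\sum_{X}V_{X}$ over $2\times 2$ block plaquettes, splitting the nearest-neighbor bond terms equally between the two adjacent plaquettes, the site magnetic-field terms equally among the four adjacent plaquettes, and the $\gamma_{i}$ terms analogously, and apply the construction \eqref{figAspin} to obtain the tensor $A$ with index set $\mathcal{S}\times\mathcal{S}$.

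The two dominant components of $A$ admit the explicit form
\beq
A_{\oplus\oplus\oplus\oplus}=e^{E_{\beta}+hk^{2}+E_{\gamma}^{(+)}},\qquad A_{\ominus\ominus\ominus\ominus}=e^{E_{\beta}-hk^{2}+E_{\gamma}^{(-)}},
\eeq
where $E_{\beta}$ is the ferromagnetic bond energy per plaquette (the same for both uniform configurations), the magnetic contribution is $\pm hk^{2}$ (there are $4k^{2}$ spins per plaquette, split by a factor of $1/4$), and $E_{\gamma}^{(\pm)}$ are $h$- and $\beta$-independent constants. Setting $\alpha(h)=A_{\oplus\oplus\oplus\oplus}/(A_{\oplus\oplus\oplus\oplus}+A_{\ominus\ominus\ominus\ominus})$ and rescaling $A$ by $A_{\oplus\oplus\oplus\oplus}+A_{\ominus\ominus\ominus\ominus}$ places $A$ into the form \eqref{ex2}. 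The resulting closed form
\beq
\alpha(h)=\bigl(1+\exp(-2hk^{2}+E_{\gamma}^{(-)}-E_{\gamma}^{(+)})\bigr)^{-1}
\eeq
makes monotonicity in $h$ and the limits $\alpha(-\infty)=0$, $\alpha(+\infty)=1$ manifest.

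For the uniform-in-$h$ bound $\|B\|=O(e^{-c\beta})$, I observe that any plaquette configuration $C$ other than the two uniform ones has at least one broken ferromagnetic bond (either internal to a block or across adjacent block corners), giving an energy deficit of at least $2\beta$ relative to both $V_{X}(\oplus^{4})$ and $V_{X}(\ominus^{4})$. The magnetic part of $V_{X}(C)-V_{X}(\oplus^{4})$ equals $(h/4)(N_{+}^{C}-4k^{2})\leq 0$ when $h\geq 0$, while that of $V_{X}(C)-V_{X}(\ominus^{4})$ equals $(h/4)(N_{+}^{C}+4k^{2})\leq 0$ when $h\leq 0$. Comparing $A_{C}$ to the $h$-favored vacuum therefore yields
\beq
A_{C}\leq e^{-2\beta+C_{\gamma}}\max(A_{\oplus\oplus\oplus\oplus},A_{\ominus\ominus\ominus\ominus}),
\eeq
with $C_{\gamma}$ depending only on the $\gamma_{i}$ and $k$, and summing over the finitely many components produces $\|B\|\leq C'e^{-c\beta}$ uniformly in $h$. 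The main technical obstacle is precisely this uniformity: for large $|h|$ of the ``wrong'' sign one of the vacua is exponentially suppressed, so the comparison vacuum must be chosen according to $\operatorname{sgn}(h)$ to keep the magnetic contribution non-positive.

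In the $\mathbb{Z}_{2}$-invariant case, the spin-flip $\sigma$ on $\mathcal{S}$ induces a permutation of the index set that exchanges $A^{(+)}\leftrightarrow A^{(-)}$. If all $\gamma_{i}$ are even under $\sigma$, then $V_{X}$ is $\sigma$-invariant at $h=0$, so $E_{\gamma}^{(+)}=E_{\gamma}^{(-)}$ and $\alpha(0)=1/2$. Since $A$ is then also $\sigma$-invariant at $h=0$, the remainder $B=A-\tfrac{1}{2}A^{(+)}-\tfrac{1}{2}A^{(-)}$ satisfies $\sigma B=B$, i.e.\ $B$ preserves $\mathbb{Z}_{2}$ invariance.
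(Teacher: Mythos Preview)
The paper states this proposition without proof; it is introduced by ``It is not hard to convince yourself that the observed features of \eqref{nearLT} and \eqref{ex2} are in fact generic,'' and the text immediately moves on to Section~\ref{HighT}. Your argument supplies the omitted details, and the approach is the natural one: the closed sigmoid form of $\alpha(h)$ makes monotonicity and the limits immediate, and comparing each non-uniform component to the $h$-favored vacuum (rather than to a fixed one) is precisely the right device to secure uniformity of the $O(e^{-c\beta})$ bound in $h$.

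One small correction: the asserted deficit of ``at least $2\beta$'' is not always accurate after blocking. In the plaquette decomposition, an internal-to-block bond is shared among four plaquettes and contributes only $\beta/4$ to $V_X$, while an inter-block bond is shared between two plaquettes and contributes $\beta/2$. For $k\ge 2$, a single flipped spin at the outer corner of a corner block has only its two intra-block bonds visible to $V_X$, yielding a deficit of $\beta$, not $2\beta$. This does not affect the conclusion, which only needs some positive constant $c$; just adjust the claimed constant. (For $k=1$ your count of $2\beta$ is correct.) Also, the inter-block broken bonds run across shared block \emph{faces}; diagonally adjacent blocks share no nearest-neighbor bonds, so ``across adjacent block corners'' should be reworded.
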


\section{Tensor RG analysis near the high-T fixed point}
\label{HighT}

The tensor RG near the high-T fixed point was studied in {\cite{paper1}}. Here we will review that construction. This will be useful as preparation for the subsequent sections, where we discuss the vicinity of the low-T fixed point.

In {\cite{paper1}} we worked with real tensors, but here we will consider
complex tensors. This will allow us to emphasize the analyticity\footnote{See Appendix \ref{abstract}.} of the RG map. 

In this section $\mathcal{H}$ will denote the complex Hilbert space of complex 4-tensors indexed by $\mathcal{I}=\mathbb{Z}_{\ge 0}$ equipped with the HS norm. $\H0$ will be the complex Hilbert space of such tensors satisfying in addition the condition $B_{0000} = 0$. Let $U_{\varepsilon}$ be the ball $\|B\| < \varepsilon$ in $\H0$. 

We will consider tensors $A$ which are small perturbations of the high-T fixed point tensor $A_{\ast}$ with the only nonzero component $(A_{\ast})_{0000} = 1$:
\begin{equation}
	A = A_{\ast} + B , \quad B \in
	U_{\varepsilon}. 
\label{AstpB}
\end{equation}

We will consider tensor RG transformations which map $A\mapsto A'$ and preserve the partition function as described in Section \ref{tensorRG}. The lattice rescaling factor will be $b=4$. The tensor $A'$ will have the form
\begin{equation}
	A'=\normfactor (B) \cdot [A_{\ast} + R (B)], \quad R (B)\in\H0,
\end{equation}
where $\normfactor (B)$ is a scalar called the normalization factor.\footnote{The normalization factor was denoted by letter $\mathcal{N}$ in \cite{paper1}.} 

Factoring out the normalization factors, we can equivalently write the equation \eqref{genTRG} expressing preservation of the partition function as
\begin{equation}
	Z(A_\ast+B ,L_x,L_y) =\normfactor (B)^{L_x L_y/b^2} Z\Bigl (A_{\ast} + R (B), {L_x}/{b},{L_y}/{b}\Bigr), \label{RGproperty}
\end{equation}
This equation will play a significant role below.

A tensor RG map $\tau=(U_\eps,\nu,R)$ is thus specified by the $U_\epsilon$ neighborhood of the HT fixed point where it is defined, and by two functions $\normfactor :U_\epsilon\to \mathbb{C} $ and $R :U_\epsilon\to \H0$ defined on this neighborhood, so that Eq.~\eqref{RGproperty} holds. 

We say that the high-T fixed point is stable if for a sufficiently small $\epsilon$, there exists a tensor RG map such that $\|R(B)\|<\gamma \|B\|$ ($B\in U_\epsilon$) with $\gamma<1$. The following theorem implies that the high-T fixed point is indeed stable. Later on we will see this fact can be used to establish analyticity of the free energy near the high-T fixed point.

\begin{theorem}[{\cite{paper1}}]
	\label{HTmap}
	There exists an $\bar \epsilon>0$ such that for any $\epsilon\in (0,\bar \epsilon)$ there exists a tensor RG map $\tau_\epsilon=(U_\eps, \normfactor ,R)$ with lattice rescaling factor $b=4$, preserving the partition function as in \eqref{RGproperty}, and having the following
	additional properties:\\[-14pt]
	\begin{subequations}
		\begin{align}
			& \text{$\normfactor$ and $R$ are analytic, taking real values for real $B$\ ;}\\
			& \normfactor  (B) = 1 + O (\| B \|^2)\ ; \label{NBbound}\\
			& \| R (B) \| \leqslant C \epsilon^{1/2}  \| B \|\ . \label{RBbound}
		\end{align}
	\end{subequations}
\end{theorem}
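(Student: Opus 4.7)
The plan is to build the RG map as a composition of the three equivalence transformations introduced in Section \ref{tensors}: a disentangler transformation on the raw network of $A$'s, followed by two rounds of $2\times 2$ simple-RG blocking, with a gauge/reindexing step at the end to put the output back on the index set $\mathbb{Z}_{\ge 0}$. Two blocking rounds give lattice rescaling factor $b=4$, matching the statement of the theorem. Write $A=A_\ast+B$. Without the disentangler, expanding the $4\times 4$ patch of tensors in powers of $B$, contracting, and isolating the $(0000)$ component as a scalar $\nu(B)$ would yield a remainder $R(B)$ that is merely $O(\|B\|)$, i.e.\ not contracting; the purpose of the disentangler is to eliminate the offending linear part.

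Concretely, I would parametrize the disentangler via a small 4-tensor $\rho=\rho(B)$ of the factorized form \eqref{r} with $\rho_u,\rho_d$ depending analytically on $B$, and set $R_{\rho}=\exp(\rho)$. Lemma \ref{Rfromr} then supplies the decomposition $R_\rho=I+(\text{contraction of HS 3-tensors})$ with error terms $\sigma_u,\sigma_d$ of order $\|\rho_\bullet\|^2$, and guarantees analyticity of $R_\rho$ in $\rho_\bullet$, hence in $B$. The 3-tensors $\rho_u,\rho_d$ must be chosen so that after inserting $R_\rho R_\rho^{-1}=I$ on horizontal bonds as in step D1, applying step D2 with the explicit formula \eqref{XprimYprimSolve} to obtain new tensors $X',Y'$, and then performing the two blockings, the part of the output that is first-order in $B$ and not purely supported on the $(0000)$ component cancels. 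This reduces to solving a linear equation $\mathcal{L}\rho=\Pi B$ on $\widetilde{\mathcal H}$, where $\Pi$ projects onto the dangerous linear modes and $\mathcal{L}$ is an explicit bounded operator determined by contractions against $A_\ast$; invertibility of $\mathcal{L}$ on the relevant subspace is the key structural input taken over from \cite{paper1}.

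Once $\rho(B)$ is fixed as a bounded linear function of $B$, the map $A\mapsto A'$ is an explicit finite composition of contractions (Prop.~\ref{prop-contr}, Prop.~\ref{contr-gen}), exponentials of HS tensors (Lemma \ref{Rfromr}), leg-grouping/reindexing, and a final rescaling by $\nu(B)^{-1}$. Each ingredient is analytic in $B$ in the sense of Appendix \ref{abstract} on a neighborhood $U_\epsilon$ of the origin in $\widetilde{\mathcal H}$. Taking $\nu(B)$ to be the $(0000)$ component of the coarse-grained tensor makes $R(B)=\nu(B)^{-1}\cdot(\text{coarse-grained})-A_\ast$ lie in $\widetilde{\mathcal H}$ by definition; for $\epsilon$ small enough $\nu(B)=1+O(\|B\|^2)$ stays bounded away from zero so the division is analytic, giving the first two listed properties. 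Reality of $\nu,R$ for real $B$ holds because every building block (real $A_\ast$, real contractions, $\exp$ of a real-linear $\rho(B)$) preserves reality.

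The norm bound \eqref{RBbound} is the main quantitative obstacle and the real heart of the argument. The strategy is to expand $R(B)$ as a series in $B$ and $\rho$ (the latter itself linear in $B$), then bound each term by Prop.~\ref{contr-gen}. Linear-in-$B$ terms vanish by construction of $\rho$; quadratic-in-$B$ terms involve either two $B$'s in the patch or one $\sigma_\bullet=O(\|\rho\|^2)=O(\|B\|^2)$ factor, each contributing $O(\|B\|^2)$; higher-order terms are summable because the exponential series in Lemma \ref{Rfromr} converges with $n!^{-1/2}$ factors. Thus $\|R(B)\|\le C\|B\|^2$, and on $U_\epsilon$ this rewrites as $\|R(B)\|\le C\epsilon\cdot \|B\|$, which after absorbing a factor into the definition of $\epsilon$ (or simply taking $\bar\epsilon$ small) gives the stated $\epsilon^{1/2}\|B\|$ bound, equivalent to the $\|B\|^{3/2}$ contraction of \cite{paper1}. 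The hardest part is the algebraic step of identifying the right cancellation for $\rho$ and checking that $\mathcal{L}$ is invertible on the dangerous subspace; once this is in place, analyticity and norm estimates follow by organized bookkeeping.
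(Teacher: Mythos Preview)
Your architecture and the mechanism you describe both miss the point of the construction in \cite{paper1}, and the claimed bound $\|R(B)\|\le C\|B\|^2$ cannot be obtained this way.

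First, the order and role of the steps. In the paper the map is Type~0 (gauge), then Type~I (a $2\times 2$ simple RG), then Type~II (another $2\times 2$ step that contains the disentangler \emph{and} a reconnection). The gauge step is what kills the dangerous \emph{linear} components (the $0\,0\,0\,I$ slots); the disentangler, applied only in Type~II after one blocking, is not there to cancel linear terms at all --- it cancels a specific $O(B^2)$ ``dangerous diagram'' that obstructs the splitting $S=S^u S^d$ in the reconnection step. Your plan ``disentangler on the raw network, then two blockings'' omits reconnection entirely, and without that step the disentangler buys nothing. The linear equation you write, $\mathcal{L}\rho=\Pi B$, is closer in spirit to choosing the gauge $g_x,g_y$ than to choosing $\rho$.

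Second, and more seriously, the $\varepsilon^{1/2}$ in \eqref{RBbound} does not come from ``$O(\|B\|^2)\le \varepsilon\|B\|$''. It comes from an explicit $\varepsilon$-dependence built into the Type~II map: after the dangerous diagram is removed, the remaining ``missing'' diagrams are $O(B^3)$ but their natural upper/lower factorization is $O(B)\times O(B^2)$; one then rescales the two factors by $\varepsilon^{\pm 1/2}$ (see Eq.~\eqref{B3} and the paragraph following it, and Eqs.~\eqref{kalpha}--\eqref{Sudbound} in the low-$T$ analogue). This yields $\Delta S^u,\Delta S^d=\varepsilon^{1/2}O(B)$ and hence $\|B_3\|\le C(\varepsilon^{1/2}\|B_2\|+\varepsilon^{-1/2}\|B_2\|^2)\le C\varepsilon^{1/2}\|B_2\|$. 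The map $\tau_\varepsilon$ genuinely depends on $\varepsilon$; your map does not, so it cannot produce this factor. Without the rescaling, $\Delta S^u$ or $\Delta S^d$ stays $O(B)$, the reconnected tensor picks up $O(B)$ pieces, and you are back to a non-contracting $\|R(B)\|=O(\|B\|)$. The ``organized bookkeeping'' you allude to will not close without this trick.
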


\begin{proof} 
	The proof consists in critically rereading the argument in
	{\cite{paper1}}, paying attention to two main differences. First, Ref.
	{\cite{paper1}} assumed that $B$ was real. However, all maps were given by
	power series expansions in $B$ so they make perfect sense for complex $B$ and
	are analytic when convergent. It's easy to check that there exists an $\bar \epsilon$ so that the series are convergent for $\|B\|<\epsilon$, $\epsilon <\bar \epsilon$. The reality condition holds because the series
	coefficients are all real. This proves (a).
	
	The second difference is that Ref. {\cite{paper1}} stated (b)
	and (c) in a weaker form as $\normfactor  (B) = 1 + O
	(\varepsilon^2)$ and $\| R (B) \| = O (\varepsilon^{3 / 2})$, and one has to
	check that the same argument in fact proves the stronger bounds stated here. We
	will provide here the roadmap to such a check.
	
	The tensor RG map in {\cite{paper1}} is a product of maps from several steps:
	\begin{eqnarray}
		\text{Type 0:} &  & A_{\ast} + B_{} \mapsto \normfactor _1 \cdot [A_{\ast} + B_1]\ ,
		\\
		\text{Type I} : &  & A_{\ast} + B_1 \mapsto \normfactor _2 \cdot [A_{\ast} + B_2]\ ,
	\nn	\\
		\text{Type II:} &  & A_{\ast} + B_2 \mapsto \normfactor _3 \cdot [A_{\ast} + B_3]\ ,
		\nonumber
	\end{eqnarray}
	with $R (B) = B_3$.
	
	Type 0 step does not change the number of tensors, while Type I and Type II
	steps have rescaling factor $b = 2$. The overall rescaling factor is thus $b =
	4$. We also have $\normfactor  (B) =(\normfactor _1)^{16} (\normfactor _2)^4
	\normfactor _3$.
	
	As mentioned one easily checks that $\normfactor _{i + 1}, B_{i + 1}$ are
	holomorphic in $B_i$ for each step ($B_0 \equiv B$). It's also easy to check
	that
	\begin{gather}
	\label{B1B2}	\normfactor _1 = 1 + O (\| B_0 \|^2), \quad \normfactor _2 = 1 + O (\| B_1
		\|^4), \quad \normfactor _3 = 1 + O (\| B_2 \|^2),   \\
		\| B_1 \| \leqslant \tmop{const} \| B_0 \|, \qquad \| B_2 \| \leqslant
		\tmop{const} \| B_1 \| \ .    \nn
	\end{gather}
	The bound on $B_3$ is a bit more subtle, because the Type II step map depends
	explicitly on $\varepsilon$ (it is the only such step). 
	In {\cite{paper1}}, Prop. 2.3, we proved the bound $\| B_3 \| = O
	(\varepsilon^{3 / 2})$. Following the argument in {\cite{paper1}} with a minor modification (see Remark \ref{modification} in the proof of Theorem \ref{LTmap} below), it is possible to prove the following more refined bound:
	\begin{equation}
		\| B_3 \| \leqslant \tmop{const} . (\varepsilon^{1 / 2} \| B_2 \| +
		\varepsilon^{- 1 / 2} \| B_2 \|^2) \leqslant C \varepsilon^{1 / 2} \| B_2 \|\,.
		\label{B3}
	\end{equation}
The first inequality in {\eqref{B3}} comes from the fact that
	some terms which are quadratic or higher order in $B_2$ are rescaled by $\varepsilon^{-1 / 2}$,
        while some other terms which multiply them and are linear in $B_2$ are rescaled by
	$\varepsilon^{1 / 2}$. The tensor network value depends on the product of the
	rescaled terms and stays invariant. This rescaling by $\varepsilon^{\pm 1 / 2}$ was one of the main tricks in {\cite{paper1}}. The second inequality in {\eqref{B3}} then follows from  $\| B_2 \|=O(\eps)$.
	
	Putting together {\eqref{B1B2}} and {\eqref{B3}} we get \eqref{NBbound} and \eqref{RBbound}.
\end{proof}

We will now show how the above result can be used to study the infinite-volume limit and the
analyticity of the free energy for tensor networks. We denote by $f (B, L)$ the free energy per site of the $L \times L$ network made of tensors
{\eqref{AstpB}}:
\begin{equation}
	f (B, L) := \frac{1}{L^2} \ln Z (A_\ast+B, L,L) .
\end{equation}
The analyticity of $f (B, L)$ for small $\| B \|$, uniformly in $L$, can be shown
using a cluster expansion, as discussed in {\cite{paper1}}, App. B. Here we
would like to recover a similar conclusion without a cluster expansion, but via tensor RG.

Note first of all that for any finite $L$ we can obtain analyticity of $f (B, L)$ for $\| B
\| \le \delta_L$ (not uniformly in $L$) via the following simple expansion
argument. We expand the tensor network in powers of $B$, getting $2^{L^2}$
terms. Estimating each term by $\| B \|$ to the corresponding power, we get
the bound
\begin{equation}
	Z (A_\ast+B, L,L) = 1 + z, \qquad | z | \leqslant (1 + \| B \|)^{L^2} - 1 < e^{L^2
		\| B \|} - 1 .
\end{equation}
If we assume that
\begin{equation}
	\|B\| <\delta_L:= 1/ (2 L^2) , \label{deltaL}
\end{equation}
then $| z | < e^{1/2} - 1 < 1$ and $Z (A_\ast+B, L,L)$ is strictly separated from
zero. We obtain the following lemma.

\begin{lemma}
	\label{2x2}$f (B, L)$ is bounded and analytic for $\| B \| < \delta_L$.
\end{lemma}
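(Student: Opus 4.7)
The plan is to make precise the expansion argument already sketched in the paragraph preceding the lemma. First I would write
$$Z(A_{\ast} + B, L, L) = \sum_{S} Z_S(B),$$
where $S$ runs over subsets of the $L^2$ lattice sites and $Z_S(B)$ denotes the value of the tensor network obtained by placing $B$ at the sites in $S$ and $A_{\ast}$ at the complementary sites. The empty set contributes $Z_\emptyset = 1$, since $A_{\ast}$ has the single nonzero component $(A_{\ast})_{0000} = 1$. Each nonempty $Z_S$ is an $|S|$-linear form in the components of $B$, and Proposition \ref{contr-gen} applies (the no-self-contraction hypothesis holds for $L \ge 2$) together with $\|A_{\ast}\| = 1$ to give $|Z_S(B)| \le \|B\|^{|S|}$. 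Thus each $Z_S$ extends to a continuous, and hence analytic, $|S|$-linear form on $\mathcal{H}$.

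Summing the $|S|$-dependent bounds, I would obtain
$$|Z(A_{\ast} + B, L, L) - 1| \le (1 + \|B\|)^{L^2} - 1 < e^{L^2 \|B\|} - 1,$$
which is strictly less than $e^{1/2} - 1 < 1$ on the ball $\|B\| < \delta_L = 1/(2L^2)$. This gives uniform absolute convergence of $\sum_S Z_S$ on the ball, so $Z$ is analytic there, and it simultaneously forces $|Z| \ge 2 - e^{1/2} > 0$, so that $Z$ stays inside a fixed disk around $1$ on which the principal branch of $\ln$ is analytic. Hence $f(B, L) = L^{-2} \ln Z(A_{\ast} + B, L, L)$ is analytic for $\|B\| < \delta_L$ and bounded there by $L^{-2} \max\bigl(\tfrac{1}{2},\; -\ln(2 - e^{1/2})\bigr)$.

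The essential content of the argument is the one-line inequality already displayed just above the lemma, combined with Proposition \ref{contr-gen}; there is no real obstacle. The only point meriting a moment of care is that each $Z_S$, which is defined a priori as an absolutely convergent sum over internal index configurations, is genuinely a bounded multilinear form in $B$ on the HS Hilbert space, so that the finite sum $Z = \sum_S Z_S$ defines an analytic function in the Banach-space sense and the composition with $\ln$ is justified.
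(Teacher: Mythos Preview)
Your proposal is correct and follows essentially the same approach as the paper: the paper's proof is precisely the expansion argument in the paragraph immediately preceding the lemma statement, and you have reproduced it with the details spelled out more carefully (invoking Proposition~\ref{contr-gen} for the term-by-term bound, noting the $L\ge 2$ requirement for no self-contraction, and making explicit that the composition with the principal branch of $\ln$ is analytic).
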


To reach the infinite volume limit, we will combine this lemma with tensor RG,
as follows.

\begin{prop} \label{HTfree}
	Let $\varepsilon < \min (\bar\epsilon, 1 / C^2, \delta_4)$ where $\bar\epsilon$ and $C$ are the constants in Theorem \ref{HTmap}. Then
	\begin{itemize}
		\item[(a)]
		$f (B, 4^n)$ is a bounded analytic function of $B \in U_{\varepsilon}$, uniformly bounded in $n=1,2,\ldots$
		
		\item[(b)] The limit (infinite-volume free energy)
		\begin{equation}
			f (B) = \lim_{n \rightarrow \infty} f (B, 4^n)
		\end{equation}
		exists and is a bounded analytic function of $B \in U_{\varepsilon}$.
		
		\item[(c)]
		We have $f (B) = O (\| B \|^2)$.
	\end{itemize}
\end{prop}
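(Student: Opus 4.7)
The plan is to use the RG property \eqref{RGproperty} to express $f(B,4^n)$ as a telescoping sum obtained by applying the RG map repeatedly, until the surviving finite-volume free energy lives on a $4\times 4$ lattice, where Lemma \ref{2x2} gives direct analyticity. Specifically, define the RG iterates $B^{(0)}=B$ and $B^{(k+1)}=R(B^{(k)})$. By Theorem \ref{HTmap}(c), $\|B^{(k)}\|\le (C\varepsilon^{1/2})^k\|B\|$, and since $C\varepsilon^{1/2}<1$ under the hypothesis $\varepsilon<1/C^2$, the iterates stay in $U_\varepsilon$ and shrink geometrically. Each $B^{(k)}$ is analytic in $B$ on $U_\varepsilon$ as a finite composition of the analytic map $R$.

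Next I would take the log of \eqref{RGproperty} for $L_x=L_y=L$, divide by $L^2$, and read off the recursion
\begin{equation*}
f(B,L)=\tfrac{1}{16}\ln\nu(B)+\tfrac{1}{16}\,f(R(B),L/4).
\end{equation*}
Iterating this $n-1$ times starting at $L=4^n$ gives the key identity
\begin{equation*}
f(B,4^n)=\sum_{j=0}^{n-2}\frac{1}{16^{j+1}}\ln\nu(B^{(j)})+\frac{1}{16^{n-1}}\,f(B^{(n-1)},4).
\end{equation*}
For the last term, $\|B^{(n-1)}\|\le\|B\|<\varepsilon<\delta_4$, so Lemma \ref{2x2} makes $f(B^{(n-1)},4)$ a bounded analytic function of $B$ on $U_\varepsilon$. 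For the summands, estimate \eqref{NBbound} gives $\ln\nu(C)=O(\|C\|^2)$ on a small enough ball, hence $|\ln\nu(B^{(j)})|\le C'\|B^{(j)}\|^2\le C'(C^2\varepsilon)^j\|B\|^2$, which is summable because $C^2\varepsilon/16<1$.

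To conclude, I would introduce the candidate infinite-volume object
\begin{equation*}
\tilde f(B):=\sum_{j=0}^\infty \frac{1}{16^{j+1}}\ln\nu(B^{(j)}),
\end{equation*}
observe that the series converges uniformly on $U_\varepsilon$ by the geometric bound above, and invoke the Weierstrass theorem for analytic functions on open subsets of a Banach space (Appendix \ref{abstract}) to conclude $\tilde f$ is analytic on $U_\varepsilon$. Part (a) then follows since the identity above writes $f(B,4^n)$ as the partial sum up to $j=n-2$ plus the term $16^{-(n-1)} f(B^{(n-1)},4)$, both uniformly bounded in $n$; part (b) follows by letting $n\to\infty$, since the tail of the series vanishes and the finite-volume remainder is $O(16^{-(n-1)})$, giving $f(B)=\tilde f(B)$; and part (c) follows from the term-wise bound $|\tilde f(B)|\le\sum_j 16^{-(j+1)}C'(C^2\varepsilon)^j\|B\|^2 = O(\|B\|^2)$.

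I do not expect a serious obstacle: the argument is standard once the RG recursion is in hand. The only points requiring care are (i) verifying that each of the three smallness assumptions on $\varepsilon$ is used exactly where needed ($\bar\varepsilon$ to guarantee the RG map exists, $1/C^2$ to ensure $R$ is contractive so the iterates shrink, and $\delta_4$ to apply Lemma \ref{2x2} to the final $4\times 4$ network), and (ii) writing the Weierstrass/analyticity step correctly in the Banach-space setting, noting that $\ln\nu$ is analytic in a neighborhood of $1$, that $\nu(B^{(j)})$ stays arbitrarily close to $1$ uniformly in $j$ by \eqref{NBbound}, and that compositions and uniformly convergent sums of analytic Banach-valued functions are analytic.
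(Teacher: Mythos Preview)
Your proposal is correct and follows essentially the same approach as the paper's own proof: iterate the RG recursion down to a $4\times4$ lattice, use Lemma~\ref{2x2} for the remainder term, and sum the geometrically convergent series $\sum 16^{-(j+1)}\ln\nu(B^{(j)})$ to obtain the infinite-volume limit. The only cosmetic differences are indexing (the paper writes $L=4^{n+1}$ and iterates $n$ times, you write $L=4^n$ and iterate $n-1$ times) and that the paper bounds the tail using only $\|B_j\|\le\|B\|$ together with the $16^{-j}$ factor, whereas you additionally exploit the contraction $\|B^{(j)}\|\le(C\varepsilon^{1/2})^j\|B\|$; both suffice.
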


\begin{proof} We use the RG map
	$\tau_{\varepsilon} = (U_\epsilon, \normfactor , R)$ provided by Theorem \ref{HTmap}. We denote $g (B) = \ln \normfactor  (B)
	= O (\| B \|^2)$. Taking the logarithm of Eq.~\eqref{RGproperty}, we get
	\begin{equation}
		f (B, L) = \frac{1}{16} [g (B) + f \bigl(R (B), L / 4\bigr) ] \label{RGprop}.
	\end{equation}
	This equation expresses the fact that the partition function is preserved under a tensor RG step, while the number of lattice sites is reduced by $16 = b^d$ with $b=4$ the rescaling factor of the RG map and $d = 2$ the number of lattice dimensions.
	
	We iterate the RG map and let $B_j$ be the resulting RG trajectory with $B_0 =
	B$ and $B_{j + 1} = R (B_j)$. Eq. {\eqref{RBbound}} guarantees that all $B_j
	\in U_{\varepsilon}$. Iterating {\eqref{RGprop}}, we have
	\begin{equation}
		f (B, L) = f_n (B) + 16^{- n} f (B_n, L / 4^n) , \qquad f_n (B) :=
		\sum_{j = 1}^n 16^{- j} g (B_{j - 1}) .
	\end{equation}
	We will use this equation for $L = 4^{n + 1}$. We iterate the RG $n$ times, until
	the lattice becomes $4 \times 4$. We have
	\begin{equation}
		f (B, 4^{n + 1}) = f_n (B) + 16^{- n} f (B_n, 4)\ . \label{upto2x2}
	\end{equation}
	The first term in the r.h.s. is an analytic function of $B \in
	U_{\varepsilon}$. Moreover it is uniformly bounded in $n$ and converges
	uniformly to
	\begin{equation}
		f_{\infty} (B) := \sum_{j = 1}^{\infty} 16^{- j} g (B_{j - 1}) .
		\label{finfty}
	\end{equation}
	This follows from the boundedness of $g$. Since $f_n (B)$ are analytic,
	uniformly bounded and converge uniformly, the limiting function $f_{\infty}
	(B)$ is also bounded and analytic.
	
	As to the second term in the r.h.s.~of
	{\eqref{upto2x2}}, it is a bounded analytic function of $B \in U_{\varepsilon}$ by Lemma \ref{2x2}.
	Moreover it tends uniformly to zero as $n \rightarrow \infty$.
	
	Putting these observations together, we see that $f (B, 4^{n + 1})$ is a
	uniformly in $n$ bounded analytic function of $B \in U_{\varepsilon}$ which uniformly
	converges to $f_{\infty} (B)$ as $n \rightarrow \infty$. This proves part (a),
	and part (b) with $f (B) = f_{\infty} (B)$.
	
	Since $g (B) = O (\| B \|^2)$ and $\| B_j \| \leqslant \| B \|$ for all $j$,
	we also have $f_{\infty} (B) = O (\| B \|^2)$, proving part (c).
\end{proof}

\begin{remark}
	More generally, we could start the RG when $L = k 4^n$ and iterate until $L =
	k$. The above proof corresponds to $k = 4$. For general $k$, Eq.
	{\eqref{upto2x2}} becomes $f (B, k 4^n) = f_n (B) + 16^{- n} f (B_n, k) $. We
	would then prove that $f (B, k 4^n)$ is analytic for $\| B \| < \min (\epsilon_0, 1 /
	C^2, \delta_k)$ and has an infinite volume limit as $n \rightarrow \infty$.
	This infinite volume limit is given by the same formula {\eqref{finfty}} and
	does not depend on $k$.
\end{remark}

\begin{remark}
	In the above proof, we think of $\varepsilon$ as small and fixed. The RG map
	$\tau_{\varepsilon}$ adapted to $U_{\varepsilon}$ is chosen at the beginning
	and kept fixed for the rest of the argument. It is possible to do things
	differently and reduce $\varepsilon$ as we iterate the RG map. Such a point
	of view was taken originally in {\cite{paper1}}, where we were choosing
	$\varepsilon$ adapted to the size of $B$, i.e. $\varepsilon \sim \| B \|$. The fixed point is then approached superexponentially fast.
	
	To see an example of this in the present context, consider the sequence of
	RG maps $\tau_{\varepsilon_j} = (U_{\epsilon_j}, \normfactor _j, R_j)$ provided by Theorem
	\ref{HTmap} and adapted to the balls of $U_{\varepsilon_j}$ of radii $\varepsilon_0, \epsilon_1,\ldots $ where 
	\begin{equation}
		\varepsilon_0 < \min (\bar \epsilon, 1 / (2 C^2), \delta_4), \qquad \varepsilon_{j + 1} =
		C \varepsilon_j^{3 / 2} .
	\end{equation}
	Note that $\epsilon_{j}$ form a decreasing sequence tending to zero superexponentially fast. By Eq. {\eqref{RBbound}} we have
	\begin{equation}
		R_j : U_{\varepsilon_j} \rightarrow U_{\varepsilon_{j + 1}} .
	\end{equation}
	So we can run the above proof, applying the RG maps $T_{\varepsilon_j}$ in
	succession. We obtain that $f (B)$ is analytic on $U_{\varepsilon_0}$ and
	has a representation similar to {\eqref{finfty}} but with
	$j$-dependent terms:
	\begin{equation}
		f_{} (B) = \sum_{j = 1}^{\infty} 16^{- j} g_j (B_{j - 1}), \quad g_j : =
		\ln \normfactor _j .
	\end{equation}
	Since the radii $\varepsilon_j$ tend to zero superexponentially fast, this
	representation converges even faster than {\eqref{finfty}}.
\end{remark}

\section{RG analysis at low $T$: general strategy}
\label{LTRG}

We will now move towards discussing the 2D Ising model \eqref{NNIsingdef}, and its small perturbations like \eqref{smallpert}, at low temperatures. Let us state some well-known facts. The phase diagram of the NN Ising model in the $(T, h)$ plane
contains, at $T < T_c$ and $h = 0$, a line of first-order phase transitions:
\begin{equation}
\myinclude{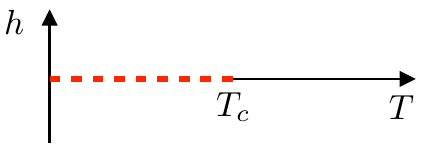} .
	\label{phase}
\end{equation}
We define the infinite-volume free energy per spin and magnetization as
\begin{eqnarray}
	{f_{\tmop{spin}}}  (T, h) = \lim_{L\to \infty} (\ln Z)/L^2\,,\qquad
	m_{\tmop{spin}} (T,h)   =  \partial f_{\tmop{spin}} / \partial h\,.
\end{eqnarray}
Away from the phase transition line, these are continuous (in fact analytic) functions. Approaching the phase transition line from above and below, the free energy is continuous while the magnetization experiences a discontinuous jump. 

The standard proofs of these facts for the NN Ising model (see e.g.~\cite{friedli_velenik_2017}) use the Peierls argument to show that the magnetization is nonzero at $h=0$ with "all $+$" or "all $-$" boundary conditions. (One can then use correlation inequalities to show that magnetization only increases in absolute value as $|h|$ is increased. The Lee-Yang theorem shows that the free energy is analytic at $h\ne 0$.) This crucially uses that $h$ is the only coupling breaking the spin flip symmetry, and that the transition is at $h=0$.

Consider now one of perturbations of the NN Ising model which involve several other couplings breaking the spin flip symmetry $\gamma_i$, like \eqref{IsingCubic} or \eqref{smallpert}. We still expect a first-order phase transition at some value of $h=h_c$ but now $h_c$ will depend on $\gamma_i$ and $T$. (The value of $h_c$ at $T=0$ is easy to determine by comparing the energy of the "all $+$" and the "all $-$" states. E.g.~for the model \eqref{IsingCubic} it happens for $h=-4\gamma$.) The existing proofs of the first-order transition require either the Pirogov-Sinai theory {\cite{Pirogov1975,Pirogov1976,sinai-book}}, or the coarse-graining approach of \cite{Gawedzki1987}. Both these approaches use the contour representation of the partition function, but they are more subtle than the Peierls argument for the Ising model at $h=0$.


In the remainder of the paper, our goal will be to develop a rigorous tensor RG that can be applied in the low temperature regime $T\ll 1$ and used to recover the expected properties of the free energy and magnetization. Our results will apply also to general small perturbations of the Ising model which do not necessarily preserve the spin flip symmetry. 

As discussed in Section \ref{models2networks}, the partition function of such models can be represented as a tensor network built from
\begin{equation}
	A = A(\alpha,B) = \alpha A^{(+)} + (1 - \alpha) A^{(-)} + B  ,
	\label{LTTN1}
\end{equation}
where $A^{(\pm)}$ are the two tensors defined in \eqref{Apm}, $\alpha\in(0,1)$ is a function of $h$ and the other couplings $\gamma_i$, and $B$ is a tensor which is small at low temperatures. For the NN Ising model we have $\alpha = \frac{e^{2 h}}{e^{2 h} + e^{- 2 h}}$ and $B=B(\beta,h)$ is given in Eqs.~\eqref{nearLT}, \eqref{Bexpl},  with $\| B \| = O (e^{- 4 \beta})$.

We denote by $f = f (\alpha, B)$ the free energy per tensor of the network made of tensors \eqref{LTTN1}. For the NN Ising model, it is related to $f_{\rm spin}$ by:\footnote{The factor $1/2$ is because there is one tensor per two spins. The term $\ln (e^{2 h} + e^{- 2 h})$ is added in parentheses because the tensor was rescaled by $e^{2 h} + e^{- 2 h}$, as mentioned before Eq.~\eqref{nearLT}.}
\begin{equation}
	f_{\tmop{spin}} = \frac{1}{2} [ f + \ln (e^{2 h} + e^{- 2 h})]\ .
	\label{twofs}
\end{equation}
In what follows we will study $f $ and the "magnetization" $m = \partial f/\partial \alpha$. By $\eqref{twofs}$, the regularity properties of $f, m$ are the
same as for $f_{\tmop{spin}} {, m_{\tmop{spin}}} $, with $h=0$ mapped to $\alpha=1/2$.

For $T=0$ we have $B=0$. The tensor network partition function is then easy to compute: it equals $\alpha^N + (1 - \alpha)^N$ where $N$ is the number of sites. From here we obtain the $T=0$ free energy:\footnote{Note that substituting \eqref{fa} into \eqref{twofs} gives $f_{\tmop{spin}} = | h |$ which is the right answer for the spin model free energy at zero temperature.}
\begin{equation}
	f (\alpha,0) = \max (\ln (\alpha), \ln (1 - \alpha)). \label{fa}
\end{equation}
This is indeed a continuous function of $\alpha\in [0, 1]$, smooth away from $\alpha=1/2$. The magnetization is given by:
\begin{equation}\label{ma}
	m (\alpha,0) = \begin{cases} -1/(1-\alpha)& (0\le \alpha<1/2)\\
		1/\alpha & (1/2<\alpha\le1),
		\end{cases}
\end{equation} 
and it has a jump at $\alpha=1/2$. 
%

For nonzero $B$ we will of course not be able to compute $f(\alpha,B)$ explicitly, but we will derive a formula for it using a form of tensor RG, and we will use that formula to establish (dis)continuity properties of $f(\alpha,B)$. 

\subsection{RG for the $T=0$ free energy and magnetization}
To get an idea how this will work, let us first derive an RG formula for the $T=0$ free energy $f(\alpha,0)$. We consider the tensor RG map which is like the simple RG map from Section \ref{simpleRG} except that the new tensor $A'$ is defined by contracting a $4\times 4$ group of $A$ tensors, and not a $2\times 2$ group of tensors as in in {\eqref{Tex}}: 
\begin{equation}
\myinclude{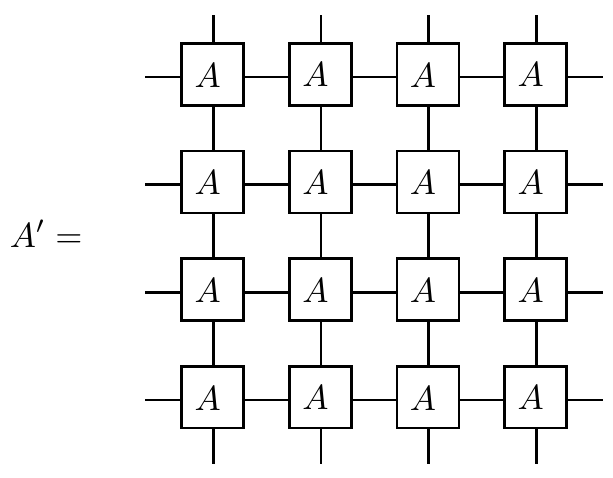} . \label{simple4x4}
\end{equation}
Thus this RG map has $b=4$.\footnote{$b = 2$ would also work here, but we will need $b = 4$ for nonzero $B$, so we do $b=4$ here to simplify the future comparison.}
We follow this RG map by reindexing, mapping the states $qqqq$ to $q$ ($q=\pm$). It's easy to see that the $T=0$ tensor 
\begin{equation}
	A = A(\alpha,0) = \alpha A^{(+)} + (1 - \alpha) A^{(-)} \,.
\end{equation}
is transformed by this RG map into
\begin{gather}
	A' = \alpha^{16} A^{(+)} + (1 - \alpha)^{16} A^{(-)} \equiv \normfactor(\alpha) 
	A(r(\alpha),0) \ ,
	\label{LT0map}\\
	\normfactor(\alpha)  :=  \alpha^{16} + (1 - \alpha)^{16} \,,\nn \\
	\boxed{r (\alpha)  :=  \frac{\alpha^{16}}{\alpha^{16} + (1 - \alpha)^{16}}}\ . \nn
\end{gather}
The map $\alpha\to r(\alpha)$ will play an important role in what follows. It has an unstable fixed point at $\alpha = 1 / 2$ (note that
$r' (1 / 2) = 16$), and two stable fixed points at $\alpha = 0, 1$. 

Consider now the RG trajectory $\alpha_0 = \alpha$, $\alpha_{j + 1} = r
(\alpha_j)$. By arguments similar to the proof of Prop. \ref{HTfree}, we have the following formulas for the $T=0$ free energy:
\begin{align}
\label{RGfT0a}	f (\alpha,0) &= 16^{-1} \ln \nu (r(\alpha))+16^{-1} f (r(\alpha),0)\,,
	\\ 
	\label{RGfT0} 	f (\alpha,0)&= \sum_{j = 1}^{\infty} 16^{- j} \ln \nu (\alpha_{j-1})\,. 
\end{align}
From here one gets an intuitive understanding of the discontinuity of magnetization at $\alpha=1/2$, relating it to the discontinuity of the RG trajectory. Namely, for the initial point on one or the other side of $\alpha=1/2$, the RG trajectory converges to one or the other stable fixed point. The trajectory jumps when crossing $\alpha=1/2$, although it varies continuously on both sides of this point.

A crucial aspect of Eq.~\eqref{RGfT0} is that the exponent $16$ in the geometric factor $16^{-j}$ exactly equals the eigenvalue $r'(1/2)$ by which $\alpha$ expands near the unstable fixed point. This is precisely the discontinuity fixed point condition of Nienhuis and Nauenberg \cite{NienhuisNauenberg}. If this condition were not satisfied, then the magnetization would not have a jump but a stronger or a weaker singularity. 

To see this, differentiate both sides of \eqref{RGfT0a} in $\alpha$, take the limits $\alpha\to 1/2^\pm$, and subtract the two expressions. We get for the discontinuity $\Delta m$ the equation:
\beq
\Delta m = 16^{-1} r'(1/2) \Delta m\,.
\eeq
Thus a nonzero finite discontinuity requires $16^{-1} r'(1/2) =1$. 

It is also instructive to see how the condition $16^{-1} r'(1/2) =1$ plays out in the expression \eqref{RGfT0}. If $\alpha=1/2\pm O(16^{-j_0})$, then the leading contribution to the \emph{derivative} of the r.h.s.~comes from the term with $j\sim j_0$, while terms with $|j-j_0|\gg  1$ are exponentially suppressed. This leading contribution is $O(1)$, independently of $j_0$. It is positive for $\alpha>1/2$ and negative for $\alpha<1/2$, ensuring a jump of magnetization.

Note that the argument of the previous paragraph only establishes a weak version of the jump, namely that 
\beq
\liminf_{\alpha\to 1/2^+} m(\alpha)  - \limsup_{\alpha\to 1/2^-} m(\alpha) >0\,.
\eeq
It does not seem obvious from Eqs.~\eqref{RGfT0a},\eqref{RGfT0} that the magnetization actually has \emph{continuous} limits as $\alpha\to1/2^\pm$. For this we have to appeal to the exact expression \eqref{ma}.

\subsection{A strategy for $T\ne0$}

To recap, at $T=0$ we have an exact formula \eqref{ma} which shows that the magnetization is discontinuous, and an RG formula \eqref{RGfT0} which is nontrivially compatible with the discontinuity due to the condition $r'(1/2)=16$. This suggests the following strategy to prove that the magnetization remains discontinuous for small nonzero $B$.
We will construct a tensor RG map with $b = 4$ which is a small perturbation of the $T = 0$ map \eqref{LT0map} in the $\alpha$ direction, and a contraction in the $B$ direction. More precisely we will have: 
\begin{align}
	\label{LTbasic}	& A = A(\alpha,B) \mapsto A'=\nu(\alpha,B) A(\alpha',B') \ ,
	\\
	& \normfactor  (\alpha, B) = \alpha^{16} + (1 - \alpha)^{16} + 
	O(\|B^2\|)\ ,\nn\\
	& \alpha'(\alpha,B) = r(\alpha) + O(\|B^2\|)\ ,\nn\\
	& \| B' (\alpha, B) \| < l \| B \|,\qquad l<1\ . \nn
\end{align}
Denoting $R(\alpha,B):=(\alpha',B')$, we will consider the RG trajectory $(\alpha_0,B_0)=(\alpha,B)$, $(\alpha_{j+1},B_{j+1})=R(\alpha_j,B_j)$. Then, using the stable manifold theorem, we will show that there exists a critical surface $\alpha=\alpha_c(B)$ such that RG trajectories starting on this surface converge to the low-T fixed point $\alpha=1/2,B=0$, while those starting with a larger or smaller value of $\alpha$ will converge towards the fixed points $\alpha=1,B=0$ and $\alpha=0,B=0$. 

Furthermore, we will have a formula for the free energy $f(\alpha,B)$ in terms of the RG trajectory:
\begin{equation}
	f (\alpha, B) = \sum_{j = 1}^{\infty} 16^{- j} \ln \nu (\alpha_{j-1}, B_{j-1})\ .
	\label{f0form}
\end{equation}
Because $\nu(\alpha,B)$ entering this formula is close to $\nu(\alpha)$ in Eq.~\eqref{RGfT0}, we will be able to compare the magnetization $m(\alpha,B)$ to $m(\alpha,0)$. We will show that $m(\alpha,B)$ has a jump on the stable manifold, of roughly the same size as the jump of $m(\alpha,0)$ at $\alpha=1/2$, which we know is nonzero from the explicit expression \eqref{ma}. 

The rest of the paper will be devoted to carrying out this strategy. In the next Section \ref{RG_lowT} we will construct an RG map with properties like \eqref{LTbasic}. 
Then, in Section \ref{LTprop}, we will derive the properties of the free energy from the existence of the RG map. There will be a minor technical difference with respect to the strategy described above. Namely, the RG map will be defined not for $\alpha\in [0,1]$ but for $\alpha\in (\delta,1-\delta)$ with a small $\delta>0$. This is not a big limitation since we know that for $\alpha$ close to 0,1 the tensor $A(\alpha,B)$ is close to one of the fixed points $A^{(\pm)}$ and the free energy is analytic there by the results of Section \ref{HighT}. Thus, instead of formula \eqref{f0form} with infinitely many terms we use a similar formula with $N$ terms plus an analytic remainder. This modification will be good enough for our purposes. In Section \ref{LTprop} we will show that the magnetization at small nonzero $B$ has a nonzero jump on the stable manifold, while it is analytic away from it. The Nienhuis-Nauenberg eigenvalue condition $r'(1/2)=16$ will play crucial role when showing the jump.

\section{RG map construction}
\label{RG_lowT}
In this section $\mathcal{H}$ will be the complex Hilbert space of tensors indexed by $\mathcal{I} = \{+,-\} \cup \mathbb{N}$, having finite HS norm, and $\HLT$ will be the complex Hilbert space of such tensors satisfying in addition the constraint:
\begin{equation}
	B_{qqqq}=0\quad(q=\pm)\ .
	\label{Breq}
\end{equation}

Continuing the line of reasoning from the previous section, we consider a tensor network built of tensors
\begin{equation}
	A = A(\alpha,B) := \alpha A^{(+)} + (1 - \alpha) A^{(-)} + B\,,\qquad B\in\HLT.
	\label{LTTN10}
\end{equation}
Recall that $A^{(q)}$ ($q=\pm$) are tensors having a single nonzero element $(A^{(q)})_{qqqq} =1$.

{\textbf{The following index conventions will be used extensively:}  
	\beq
	\begin{cases} 
		\text{$i,j,\ldots$ will vary over $\mathbb{N}$};\\
		\text{$q,q'$ will vary over $\{+,-\}$};\\
		\text{$I,J,\ldots$ will vary over the full $\mathcal{I}$.}
	\end{cases}\label{index}
	\eeq
	
	We will construct a tensor RG map $A\mapsto A'$ where
	\begin{equation}
		A' = \normfactor (\alpha, B) \cdot A(\alpha',B'),\qquad B'\in\HLT,
	\end{equation}
	where $\normfactor (\alpha, B)$ is a scalar (normalization factor). The function $R(\alpha,B):= (\alpha',B')$ will also be called the "RG map" (small abuse of terminology here).
	
	The lattice rescaling factor will be $b=4$. Since the partition function is preserved, we have (compare \eqref{RGproperty})
	\begin{equation}
		Z\Bigl(A(\alpha,B),L_x,L_y\Bigr) =\normfactor (\alpha,B)^{L_x L_y/b^2} Z\Bigl (A(\alpha',B'), {L_x}/{b},{L_y}/{b}\Bigr), \label{RGpropertyLT}
	\end{equation}
	
	Eventually we are interested in real $\alpha\in[0,1]$ and in real $B$, but in this section we will assume more generally that $\alpha$ and $B$ are complex. All functions will be analytic (see Appendix \ref{abstract}) in $\alpha$ and $B$, taking real values for $\alpha,B$ real. 
	
	We will assume that $\alpha$ belongs to the open rectangle 
	\begin{gather}
		\Pi=\{\alpha : \delta<\Re \alpha<1-\delta, |\Im \alpha|<w\}\label{rectangle}
		\\
		\myinclude{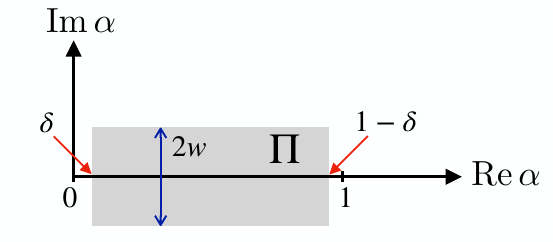}\nn
	\end{gather}
	The parameters $\delta,w>0$ will be fixed now and remain unchanged until the end of the paper:
	\begin{itemize}
		\item
		We choose and fix $w>0$ so that
		\beq
		\inf_{|\Im \alpha|< w} |\alpha^4+(1-\alpha)^4|>0,\qquad \inf_{|\Im \alpha|< w} |\alpha^{16}+(1-\alpha)^{16}|>0\ . \label{polybnd}
		\eeq
		Since both these polynomials do not vanish on the real axis, this can always be achieved by choosing $w>0$ sufficiently small. Numerically this requires $w< 0.049$.
		
		\item
		When $\alpha$ is close to $0$ or $1$ and $B$ is small, the tensor $A(\alpha,B)$ is close to $A^{(+)}$ or $A^{(-)}$. By results of Section \ref{HighT} (or standard cluster expansion methods), the infinite volume limit of the free energy
		then exists and is an analytic function. We choose and fix $\delta>0$ sufficiently small so that:
		\begin{gather}
			\label{deltachoice}
			\text{the free energy, as well as each of its derivatives, are analytic and bounded on  }\\
			\Omega = \{ (\alpha,B):  |\alpha| < 3 \delta \, \, \text{or} \, \, |1-\alpha| < 3 \delta, \|B\|<\delta \}\ .\nn
		\end{gather}
		The factors of $3$ in front of $\delta$
		are chosen to ensure an overlap with $\alpha\in \Pi$.
	\end{itemize}
	
	\begin{theorem}
		\label{LTmap} There exists an $\bar \epsilon = \bar \epsilon(\delta,w)>0$ such that for any $\epsilon\in (0,\bar \epsilon)$ there is a tensor RG map as described above, having rescaling factor $b=4$, defined on
		\begin{gather}
			\{(\alpha,B)\in \mathbb{C}\times \HLT:\quad \alpha\in \Pi ,\quad \|B\|<\varepsilon\}\ ,\label{domainLT}
		\end{gather}
		preserving the partition function as in \eqref{RGproperty}, and having the following
		additional properties: 
		\begin{subequations}\label{LTconds}
			\begin{align}
				&\text{$\normfactor$ and $R$ are analytic functions, taking real values for real $\alpha,B$};
				\label{LTreal}\\
				&\nu 
				=\alpha^{16}+(1-\alpha)^{16}+O(B^2);
				\label{LTnu}\\
				&\alpha' = r(\alpha)  +O(B^2);
				\label{LTalpha}\\
				& B' =\varepsilon^{1/2} O(B),\qquad  B'\in\HLT. \label{LTB}
			\end{align}
		\end{subequations}
	\end{theorem}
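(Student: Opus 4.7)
The plan is to adapt the three-step construction (Type 0, Type I, Type II) of the HT RG map from \cite{paper1} to the present low-T setting, so that the overall map has rescaling factor $b=4$. The Type 0 step leaves the number of tensors unchanged but reshapes the tensor via gauge transformations; the Type I step is a simple $2\times 2$ contraction with rescaling factor $b=2$; the Type II step inserts a disentangler of the form $R=\exp(\rho)$ from Lemma \ref{Rfromr} before another $2\times 2$ contraction. At each substep the output is brought back to the form \eqref{LTTN10} by reading off its $++++$ and $----$ components (which define the updated $\alpha$ and the normalization $\nu$) and declaring the residual to be the new $B$, which then automatically lies in $\HLT$. Analyticity and reality of all quantities (hence \eqref{LTreal}) follow because every step is given by explicit formulas and absolutely convergent power series in $B$ with real coefficients, uniformly convergent on the domain \eqref{domainLT} once $\bar\varepsilon$ is small enough. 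The choice of $w$ in \eqref{polybnd} guarantees that the denominators $\alpha^4+(1-\alpha)^4$ and $\alpha^{16}+(1-\alpha)^{16}$ appearing as normalization factors after the Type I and Type II contractions stay bounded away from zero on $\Pi$.

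The decisive structural fact is that $A^{(+)}$ and $A^{(-)}$ are supported on disjoint $qqqq$ index patterns and are each, up to the reindexing $qqqq\mapsto q$, fixed points of the simple $2\times 2$ contraction. Consequently, at $B=0$ the composition of Type I and Type II reproduces the $T=0$ map \eqref{LT0map}, giving the leading expressions $\nu=\alpha^{16}+(1-\alpha)^{16}$ and $\alpha'=r(\alpha)$. The corrections in \eqref{LTnu} and \eqref{LTalpha} are $O(\|B\|^2)$ rather than $O(\|B\|)$ because of the constraint $B_{qqqq}=0$: producing a $qqqq$ entry in the output of the contracted $4\times 4$ block from exactly one $B$ insertion would force that $B$ insertion itself to carry a $qqqq$ pattern on some group of its legs, which vanishes by hypothesis. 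Hence single-$B$ contributions drop out of $\nu$ and $\alpha'$.

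The main obstacle, and the source of the contraction factor $\varepsilon^{1/2}$ in \eqref{LTB}, is the Type II step. Following \cite{paper1}, I would choose $\rho$ in Lemma \ref{Rfromr} to be linear in the intermediate tensor $B_1$ and tailored to cancel those components of $B_1$ whose contraction would otherwise contribute an $O(\|B_1\|)$ term (rather than $O(\varepsilon^{1/2}\|B_1\|)$) to the final $B_2$. After inserting $RR^{-1}=I$ on the pairs of horizontal links and regrouping as in \eqref{XYdis0}--\eqref{XprimYprimSolve}, one rescales the newly created internal leg by $\varepsilon^{1/2}$ on one side and $\varepsilon^{-1/2}$ on the other, which leaves the network value invariant: the linear-in-$B_1$ pieces that survive the disentangler cancellation pick up the small factor $\varepsilon^{1/2}$, while the quadratic-in-$B_1$ pieces pick up $\varepsilon^{-1/2}$ but are bounded by $\|B_1\|^2\le \varepsilon\|B_1\|$, so they too become $O(\varepsilon^{1/2}\|B_1\|)$. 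This mirrors the derivation of \eqref{B3} in the HT proof and yields \eqref{LTB}. The subtlety I expect in carrying this through is verifying that $\rho$ can be chosen to achieve the required cancellation while respecting the direct-sum structure of the $\pm$ sectors, so that the Type II step preserves the form \eqref{LTTN10} and does not, at linear order in $B_1$, destroy the decoupling of $A^{(+)}$ from $A^{(-)}$; the constraint $B_{qqqq}=0$ is used essentially at this step.
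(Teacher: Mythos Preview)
Your proposal follows the same three-step architecture as the paper (gauge, simple $2\times 2$ RG, disentangler plus $2\times 2$ RG), so the overall route is correct. But several of your justifications are too quick and would not go through as stated.

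First, your argument for why the corrections to $\nu$ and $\alpha'$ are $O(B^2)$ is valid only for a \emph{bare} $4\times 4$ contraction of $A$'s. The actual map inserts gauge matrices $G_x,G_y$ and disentanglers $R,R^{-1}$, which inject their own $O(B)$ contributions, and after Step~1 the intermediate tensor $B_1$ in general has $(B_1)_{qqqq}\neq 0$. The paper does not use your one-line argument; instead it tracks which components of $B_1$ and $B_2$ remain $O(B)$ versus $O(B^2)$ through each step, and only at the very end reads off $\tilde B_{qqqq}=O(B^2)$ from the structure of the reconnected tensor $U_0$.

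Second, you do not say what the gauge step must accomplish. In the paper it is chosen specifically so that $(B_1)_{qqqI}=O(B^2)$; this is what forces the $O(B)$ part of $B_2$ after the simple RG to live only in the $qqij$ components. Without this preparation, Step~3 would face additional $O(B^2)$ ``dangerous'' terms that a single disentangler $\rho=O(B)$ cannot cancel.

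Third, your description of the $\varepsilon^{1/2}$ mechanism is garbled. The rescaling is not applied to ``linear versus quadratic pieces of $B_1$'' but to the upper/lower factors $D^{u,\alpha},D^{d,\alpha}$ of the \emph{missing} diagrams after the split $S=\hat S^u\hat S^d+\sum D^\alpha$. The disentangler is there to remove one specific $O(B^2)$ diagram (the one whose vertical contracted indices are not both $q$), so that every missing $D^\alpha$ is $O(B^3)$ and factors as $O(B)\times O(B^2)$ or $O(B^2)\times O(B^2)$; only then does rescaling by $\varepsilon^{\pm 1/2}$ make both $\Delta S^u$ and $\Delta S^d$ equal to $\varepsilon^{1/2}O(B)$.
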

	
	\noindent{\bf Notation:} \label{OBn} $O(B^n)$ will denote an analytic function ($\mathbb{C}$ valued or $\mathcal{H}$ valued) of $\alpha,B$ in domain \eqref{domainLT} which is bounded by $C \| B\|^n$ where $C$ is a constant which {\bf does not depend on $\eps$}.
	
	\vspace{1em}\noindent
	{\bf{Proof:}} \label{startproof} As in Theorem \ref{HTmap}, the tensor RG map will be obtained by a composition of three steps:
	\begin{eqnarray}
		\text{Step 1 }(b=1):  &  & A(\alpha,B) \ \ \mapsto \nu_1(\alpha,B) A(\alpha_1,B_1)\ ,
		\nonumber\\
		\text{Step 2 }(b=2):  &  & A(\alpha_1,B_1) \mapsto \nu_2(\alpha_1,B_1) A(\alpha_2,B_2)\ ,
		\\
		\text{Step 3 }(b=2): &  & A(\alpha_2,B_2) \mapsto \nu_3(\alpha_2,B_2) A(\alpha_3,B_3)\ ,
		\nonumber
	\end{eqnarray}
	where we identify $R(\alpha,B)=(\alpha',B')=(\alpha_3,B_3)$ and
	\beq
	\nu(\alpha,B)= [\nu_1(\alpha,B) ]^{16}
	[\nu_2(\alpha_1,B_1)]^4 \nu_3(\alpha_2,B_2)\ .
	\label{nuprodLT}
	\eeq
	We will define the three steps in turn. The proof will conclude on p.~\pageref{endofproofTh61}.
	The following comments are useful to understand the inner workings of the argument:
	\begin{itemize}
		\item Condition $\delta <\Re \alpha < 1-\delta$ is used below Eqs.~\eqref{RhRv} and \eqref{rchoice}. See also Remark \ref{whydelta} below.
		\item Condition \eqref{polybnd} is used below Eqs.~\eqref{A'step2factor} as well as in Eqs.~\eqref{mudef},\eqref{alpha3bound}. 
		\item $\bar\eps$ is set almost at the end of the proof, below Eq.~\eqref{mudef}. The condition $\eps<\bar{\eps}$ is only used after that point.
		\item Condition \eqref{deltachoice} is not used at all here. In fact Theorem \ref{LTmap} holds for an arbitrary $\delta>0$. We prefer to choose and fix $\delta$ early on, so that the meaning of this parameter is clear to the reader. Condition \eqref{deltachoice} will come into play only in Section \ref{LTprop} below. 
	\end{itemize}

	\subsection{Step 1 - gauge transformation}
	\label{sec:LTgauge}
	
	Step 1 will have the properties
	\begin{subequations} \label{B2-B}
		\begin{align}
			&	\nu_1  = 1,\quad \alpha_1  =\alpha,\\ 
			&B_1=O(B), \label{B2-Bfirst} \\
			&(B_1)_{qqqI}\text{ (and rotations)}=O(B^2) \label{B2-Bsecond}\,.\footnotemark
		\end{align}
	\end{subequations} 
	\footnotetext{\label{restriction}This notation means that the whole tensor obtained by restricting $B_1$ to index combinations $qqqI$ with arbitrary $q,I$ is $O(B^2)$.}
	This step will be a gauge transformation (Section \ref{sec:gauge}). We have $b=1$ since gauge transformations do not change the number of tensors in the network. $A'$ is related to $A$ by Eq.~\eqref{gauge-ex}, in which
	we will choose $G_x$ and $G_y$ of the form
	\begin{equation}
		G_x = \exp(g_x),\qquad G_y=\exp(g_y),
	\end{equation}
	where tensors $g_x,g_y$ will be specified below in Eq.~\eqref{RhRv} and they will be $O(B)$. We thus have
	\begin{equation}
		G_x = I+g_x+O(B^2),\qquad G_y=I+g_y+O(B^2).
	\end{equation}
	We see that 
	\beq
	A'=A+g(A)+O(B^2),
	\label{GA}
	\eeq
	where $g(A)$ denotes the operator mapping $A$ to the sum of the four terms which are first-order in $g_x,g_y$:
	\begin{equation}
		\myinclude{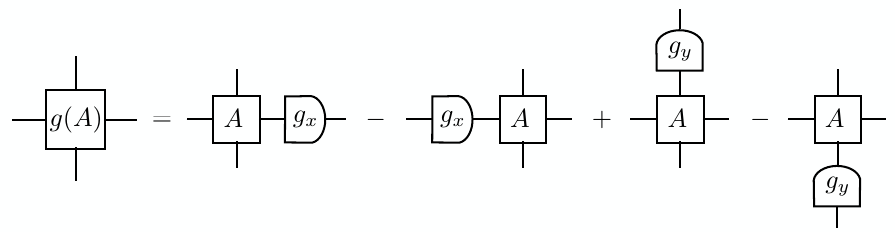}\ .
		\label{prop1-SR-2}
	\end{equation}
	
	We let $\alpha_+=\alpha$, $\alpha_-=1-\alpha$ and substitute $A=\sum_q \alpha_q A^{(q)}+B$
	into \eqref{GA} to get 
	\begin{gather}
		A'=\sum_q \alpha_q A^{(q)}+ B_1\ ,
		\label{GA2}\\
		B_1 =  B+ g\bigl(\sum_q \alpha_q  A^{(q)}\bigr)+O(B^2)\,.
		\label{dAp}
	\end{gather}
	
	Let us finally specify the tensors $g_x, g_y$. We set 
	\begin{align}
		\label{RhRv}		(g_x)_{q I}&= -\frac 1{\alpha_q} \myinclude{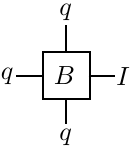}\,,\qquad
		(g_x)_{I q}= \frac 1{\alpha_q} \myinclude{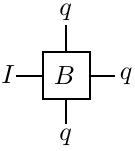}\\
		(g_y)_{q I} &= -\frac 1{\alpha_q} \myinclude{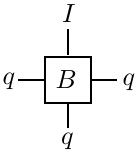}\,,\qquad
		(g_y)_{I q} = \frac 1{\alpha_q} \myinclude{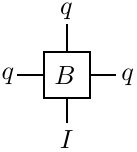}\nn
	\end{align}
	where we are using the index notation \eqref{index}. All other components of $g_x,g_y$ are zero.
	Since $|\alpha_{q}^{-1}|\le 1/\delta$ ($\alpha\in \Pi $), we have that $g_x,g_y=O(B)$.
	
	\begin{remark} 	\label{whydelta}The appearance of $\alpha_q^{-1}$ factors in Eq.~\eqref{RhRv}, and of similar $\alpha_q^{-2}$ factors in Eq.~\eqref{rchoice} below, is the technical reason why we cannot allow $\alpha$ to get close to $0,1$.
	\end{remark}
	
	We can now see that $B_1$ in \eqref{dAp} satisfies the requirements in \eqref{B2-B}:
	\begin{itemize}
		\item
		it is $O(B)$ since all terms in \eqref{dAp} are such;
		\item
		by the choice of $g_x,g_y$, the $O(B)$ part of the $B_1$ components $qqqI$ (and rotations) cancels exactly between $g\bigl(\sum_q \alpha_q  A^{(q)}\bigr)$ and $B$. So those components are $O(B^2)$ as needed.
	\end{itemize}
	
	\begin{remark} If the perturbation $B$ preserves parity, then $g_x,g_y$ are antisymmetric matrices and $G_x,G_y$ are orthogonal. 
	\end{remark}
	
	\begin{remark} In general, $B_1$ does not satisfy $(B_1)_{qqqq}=0$. This is unlike in the construction near the high-T fixed point \cite{paper1}, where we restored this condition after each of the three smaller steps comprising the full tensor RG step. Here instead we will restore this condition only at the very end. This eliminates some unnecessary steps from the argument.
	\end{remark}

		\subsection{Step 2 - simple RG step}
		
		The quantities $\nu_2, \alpha_2, B_2$ defining Step 2 will be analytic in $\alpha_1, B_1$ and hence in $\alpha, B$. They will satisfy a list of conditions, which we state in terms of $\alpha,B$:
		\begin{subequations}\label{Simple-B}
			\begin{align}
				&\nu_2={\alpha^4+(1-\alpha)^4},\qquad  
				\alpha_2 = \frac{\alpha^4}{\alpha^4+(1-\alpha)^4},
				\label{Simple-a2}\\
				& (B_2)_{qqij} \text{ (and rotations)}=O(B) \text{ for } q\in\{\pm\},
				i,j\in\mathbb{N}\,. \label{Simple-B-O(B)} \footnotemark\\
				&	(B_2)_{\text{all other components}}=O(B^2) . \label{Simple-B-O(B2)}
			\end{align}
		\end{subequations}
		\footnotetext{Like in footnote \ref{restriction}, this condition and the next refer to the whole tensor obtained by restricting $B_2$ to the mentioned combinations of indices, not to individual components which would be a weaker statement.}
		The point of Step 2 is that $B_2$ compared to $B_1$ has more of its components which are $O(B^2)$. 
		
		Step 2 will be a simple RG transformation followed by leg grouping and reindexing (see Sections \ref{simpleRG} and \ref{LGR}): 
		\begin{equation}
			\myinclude{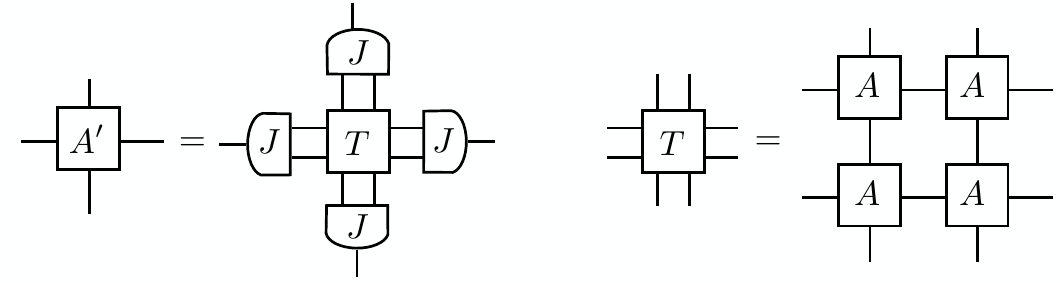}
			\label{simpleRGstep}
		\end{equation}
		The reindexing map $J$ will satisfy
		\beq
		J(+,+)=+,\qquad J(-,-)=-,
		\label{condJ}
		\eeq
		while it will map the set of all other index pairs $(I_1,I_2)$ onto $\mathbb{N}$ in an arbitrary order.
		
		Denote
		\begin{equation}
			A^{(q)}=\myinclude{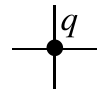}\,.
		\end{equation}
		We substitute $A(\alpha_1,B_1)$ into the definition of $T$ and expand the result in $B_1$.  The only zeroth-order terms in $T$ are
		\begin{equation}
			\myinclude{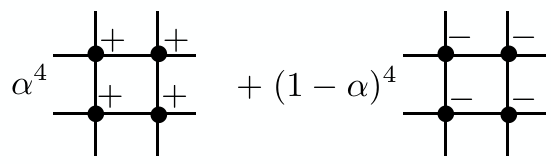}
			\label{0term}
		\end{equation}
		This is because whenever $A^{(+)}$ and $A^{(-)}$ share a leg the contraction
		is zero. After reindexing by $J$, tensor \eqref{0term} maps to 
		\begin{equation}
			\alpha^4 A^{(+)}+(1-\alpha)^4 A^{(-)}.
			\label{0term1}
		\end{equation}
		We thus have
		\beq
		A'=\alpha^4 A^{(+)} + (1-\alpha)^4 A^{(-)} + \tilde{B},
		\label{A'step2}
		\eeq
		where $\tilde{B}=O(B)$ comes from terms in $T$ linear or higher-order in $B_1$. 
		
		We claim that $\tilde{B}$ satisfies \eqref{Simple-B-O(B)},\eqref{Simple-B-O(B2)}. It suffices to check this for the terms first-order in $B_1$. These are: 
		\begin{equation}
			\myinclude{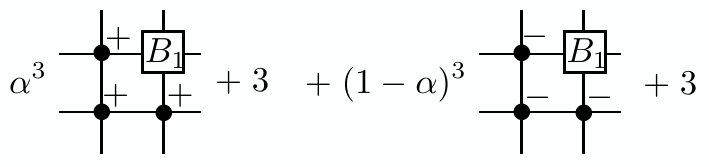}\ \ ,
			\label{fig-one-circle}
		\end{equation}
		where "$+3$" stands for the 3 other similar diagrams involving a single $B_1$. The nonzero contractions here involve components of $B_1$ with two internal lines set to the corresponding $q=\pm$:
		\begin{equation}
			\myinclude{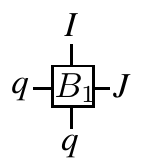}\ \ .
		\end{equation}
		If one or both of the indices $I,J$ is $q$, then by \eqref{B2-B} these components are $O(B^2)$, so not problematic. Let us examine then what happens for $I,J\ne q$, when we have a diagram $O(B)$:
		\begin{equation}
			\myinclude{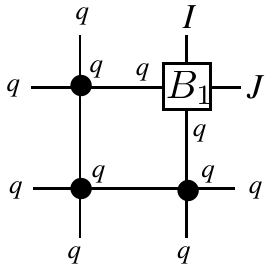}\qquad(I,J\ne q)
		\end{equation}
		After reindexing, this diagram gives components of the form $ijqq$ with $i,j\in \mathbb{N}$, i.e.~precisely the index requirements in \eqref{Simple-B-O(B)}. This completes the check that $\tilde{B}$ satisfies \eqref{Simple-B-O(B)},\eqref{Simple-B-O(B2)}. 
		
		Finally, we factor out $\nu_2=\alpha^4+(1-\alpha)^4$ in Eq.~\eqref{A'step2}, writing
		\beq
		A'=\nu_2\cdot A(\alpha_2,B_2),\qquad \alpha_2= \nu_2^{-1} \alpha,\quad B_2=\nu_2^{-1} \tilde{B}\ .
		\label{A'step2factor}
		\eeq
		By Eq.~\eqref{polybnd}$, \nu_2^{-1}$ is bounded for $\alpha\in \Pi $. Therefore $B_2$
		satisfies \eqref{Simple-B-O(B)},\eqref{Simple-B-O(B2)} since $\tilde B$ satisfies them.
		All the conditions \eqref{Simple-B} have been verified.

		
		
		\subsection{Step 3 - main RG step }
		\label{sec:mainLTRG}
		
		This step will be analogous to Step II of the RG map in \cite{paper1}. As there, it will be a composition of several substeps:
		\begin{enumerate}
			\item[3.1] (Simple RG) Form tensor $T$ by contracting four $A_2$ tensors as in Eq.~\eqref{Tex}.
			
			\item[3.2] (Disentangling 1) Act on the horizontal indices of $T$ by disentanglers $R$, $R^{-1}$. As discussed in Section \ref{sec:disentanglers}, a disentangler is a bounded invertible operator $R$ which acts on two indices. We
			insert $R R^{-1}=I$ into the two horizontal lines between $T$'s. We then combine $T$ with the $R^{-1}$ on its
			left and the $R$ on its right, and call the resulting tensor $S$: 
			\begin{equation}
				\myinclude{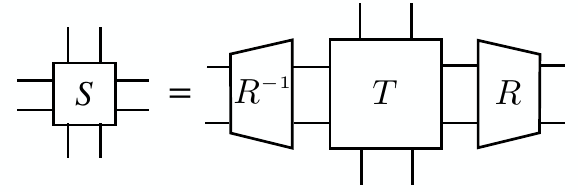}\,. 
				\label{def_S}
			\end{equation}
			
			\item[3.3] (Disentangling 2) Represent the tensor $S$ as a contraction of two tensors $S^u$ and $S^d$:
			\beq
			\myinclude{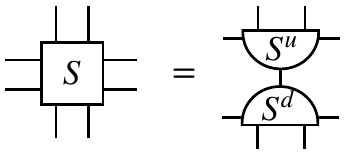} \,.
			\label{II3}
			\eeq
			\item[3.4] (Reconnection) Define the tensor $U$ by contracting tensors $S^u$ and $S^d$ in the opposite order, and then define $A'$ via leg grouping and reindexing on the horizontal pairs of legs of $U$, with a reindexing map $J$ satisfying \eqref{condJ}:
			\beq
			\myinclude{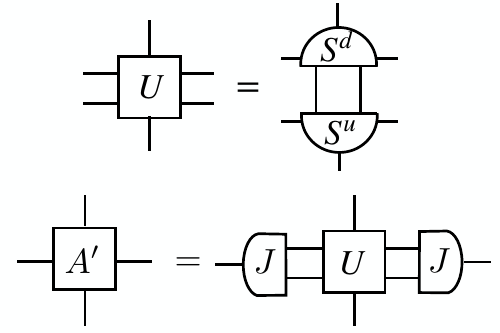} \label{Udef}
			\eeq
			
		\end{enumerate}
		\begin{remark}
			Disentangling 1\&2 are like steps D1,D2 from Section \ref{sec:disentanglers}, with $X=Y$ equal to the horizontal contraction of two out of four $A$ tensors of the diagram defining tensor $T$, $X'=S^u$ and $Y'=S^d$. In Reconnection, rewriting the network in terms of $U$ is an example of a simple RG transformation.
		\end{remark}
		
		\begin{remark}
			The first rigorous use of disentanglers was in the study of tensor network RG in the high temperature phase \cite{paper1}. Section 2.2.1 of \cite{paper1} shows the necessity of introducing them in this setting. The importance of introducing disentanglers was recognized early in the numerical studies of tensor network RG. See appendix C in \cite{paper1} for a review.
		\end{remark}
		
		We will show that with an appropriate choice of $R$, $S^u$, $S^d$, we will have $A'=\nu_3\cdot A(\alpha_3,B_3)$ where $\nu_3, \alpha_3, B_3$ obey the following conditions in terms of $\alpha,B$:
		\begin{subequations}\label{step3}
			\begin{align}
				&\nu_3=\frac{\alpha^{16}+(1-\alpha)^{16}}{[\alpha^4+(1-\alpha)^4]^4}+ O(B^2),\label{nu3}\\
				&\alpha_3 = r(\alpha)+ O(B^2) ,
				\label{alpha3}\\
				& B_3 =\varepsilon^{1/2} O(B), \label{step3bound}\\
				&	(B_3)_{qqqq}=0\quad \text{i.e.}\ B_3\in\HLT\label{step3HLT}.
			\end{align}
		\end{subequations}

		{\it Step 3.1.} 
		We substitute $A=A(\alpha_2,B_2)$ into the definition of $T$, and expand in $B_2$. 
		It will be convenient to write, in agreement with \eqref{Simple-B-O(B)},\eqref{Simple-B-O(B2)},
		\begin{align}
			&B_2=\myinclude{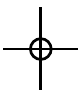}+\myinclude{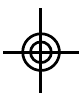},
			\qquad \myinclude{fig-B21.pdf}=O(B),
			\qquad \myinclude{fig-B22.pdf}=O(B^2) \\
			& \myinclude{fig-B21.pdf} \text{  has only nonzero components }
			\myinclude{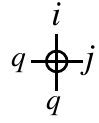}\text{ (and rotations) with } q\in\pm,i,j\in \mathbb{N}.
			\label{sc-conds}
		\end{align}
		Then the first few terms in the expansion of $T$ are:
		\begin{align}
			T & = T_0 + T_1 + T_2 + O(B^3), \label{Texp}\\
			T_0 & =\sum_q \alpha_q^4 \myinclude{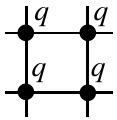}\ \qquad(\alpha_+=\alpha_2, \alpha_-=1-\alpha_2),
			\label{T0}\\
			T_1 & = \sum_{q} \alpha_q^3 \myinclude{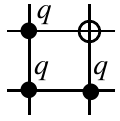}+3\ ,\\
			T_2 & 
			= \sum_{q} \alpha_q^3 
			\myinclude{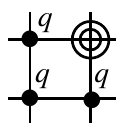}+3 \ ,
			\\
			&+  \sum_{q} \alpha_q^2 
			\Biggl(\myinclude{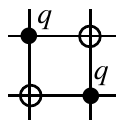}+1 \quad 
			+ \myinclude{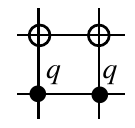}+1\quad 
			+ \myinclude{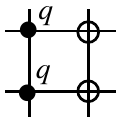}+1 \Biggr)\ .\nn
		\end{align}
		Other potential $T_2$ diagrams vanish identically by \eqref{sc-conds}, such as e.g.
		$
		\myinclude{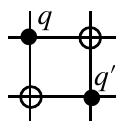}
		$ for $q\ne q'$. As in \eqref{fig-one-circle}, the notation "$+n$" after a diagram refers to the other $n$ diagrams related by obvious lattice symmetries, which we don't write explicitly. 
		
		We further decompose the last diagram of $T_2$ into two terms according to whether the contracted index between the two single-circle tensors equals $q$ or not:
		\begin{equation}
			\myinclude{fig-T2d.pdf} = 		 \myinclude{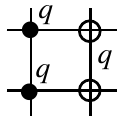}\ +\ \myinclude{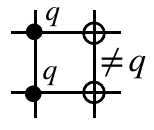}\ \ .
		\end{equation}
		The last diagram here is the "dangerous diagram". It's the only diagram in $T_0+T_1+T_2$ which doesn't have a pair of identical $q$ indices on the vertical contracted bonds. Using \eqref{sc-conds} the non-zero components of the dangerous diagram are 
		\begin{equation}
			\myinclude{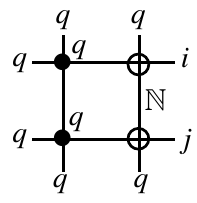},\qquad i,j \in \mathbb{N}\ \ .
		\end{equation}
		
		{\it Step 3.2.} The goal of this substep will be to remove the dangerous diagram via a disentangling transformation. Removing this diagram is an example of reducing the "entanglement" between the upper and lower groups of indices of $T_0+T_1+T_2$, in the sense of Remark \ref{disent-origine}. Why this is useful will become clear in Step 3.4.
		
		We will use the general disentangler theory from Section \ref{sec:disentanglers}. Define $S=R^{-1}T R$ by Eq.~\eqref{def_S}. The disentangler $R$ will be chosen as $R=\exp(\rho)$ with $\rho$ having the following nonzero components:
		\beq
		\myinclude{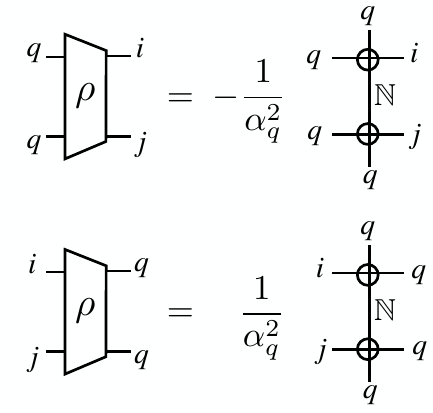}
		\label{rchoice}
		\eeq
		We will see that this is the right choice to remove the dangerous diagram.
		
		Splitting the diagrams in \eqref{rchoice} in their upper and lower halves, we see that $\rho$ can be written as in Eq.~\eqref{r}, i.e.~as a contraction of two tensors $\rho_u,\rho_d$, which are both $O(B)$.\footnote{Note that this requires doubling of the vertical index space; see the proof of Lemma 2.1 in \cite{paper1} (arXiv version).}   Here we are using that the prefactors in \eqref{rchoice} are bounded: $|1/\alpha^2_q|<1/\delta^2$. These prefactors should be split somehow between $\rho_u$ and $\rho_d$, e.g.~as $\pm 1/\alpha_q^2=(\pm 1/\alpha_q)\times (1/\alpha_q)$.
		
		Since $\rho$ is a contraction of $\rho_u$ and $\rho_d$, we are in a position to apply Lemma \ref{Rfromr}. By that lemma, we can represent $R$, $R^{-1}$ in factorized form (see Eqs.~\eqref{Rfact},\eqref{LemmaR}): 
		\begin{equation}
			\myinclude{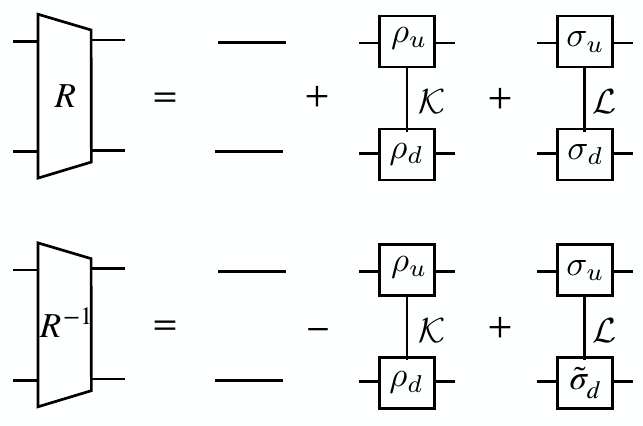},
			\label{LemmaR1}
		\end{equation}
		with $\sigma_u,\sigma_d,\tilde \sigma_d=O(B^2)$. 
		
		We substitute Eqs.~\eqref{Texp} and \eqref{LemmaR1} into the definition of $S$. We have
		\beq
		S=T_0+T_1+T_2+(T_0 \rho - \rho T_0)+O(B^3),
		\label{SOB3}
		\eeq
		By \eqref{rchoice}, the term $T_0 \rho - \rho T_0$ exactly cancels the dangerous diagram terms in $T_2$:
		\beq
		\sum_q \alpha_q^2\left( \myinclude{fig-dang1.pdf}\ +\ \myinclude{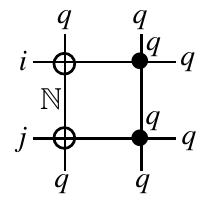}\right)
		\eeq
		So the dangerous diagram is removed as promised.
		
		{\it Step 3.3.}
		We have to decompose $S$ as a contraction of $S^{u}$ and $S^{d}$. We will first define the leading parts of these two tensors, $\hat{S}^u$ and $\hat{S}^d$. The contraction of $\hat{S}^u$ and $\hat{S}^d$ will be equal to $S$ up to error terms $O(B^3)$. We will see later how to correct $\hat{S}^u$ and $\hat{S}^d$ so that the full $S$ is obtained.
		
		Tensors $\hat{S}^u$ and $\hat{S}^d$are defined as follows:
		\begin{equation}
			\label{Su2def}
			\myinclude{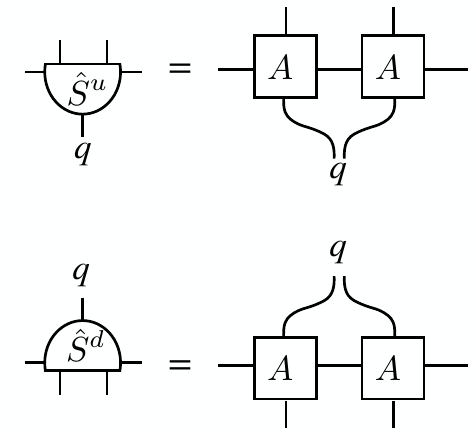}
		\end{equation}
		Thus the only non-zero components of $\hat{S}^u$ (resp.~$\hat{S}^d$) have bottom (resp.~top) index $q=\pm$. According to this equation, $\hat{S}^u$ with bottom index $q$ is obtained by contracting of two copies of $A=A(\alpha_2,B_2)$ and setting both bottom indices of this contraction equal to $q$. Similarly for $\hat{S}^d$.
		
		
		Denote by $\hat{S}$ the contraction of $\hat{S}^u$ and $\hat{S}^d$ with the former on top
		and the latter on the bottom. $\hat{S}$ contains a great many diagrams from $S$, Eq.~\eqref{SOB3}, namely:
		\begin{itemize}
			\item
			all diagrams in $T_0+T_1$
			\item all diagrams in $T_2$ \emph{except} the dangerous diagram which is, by Step 3.2, all diagrams in $T_2+(T_0 \rho - \rho T_0)$.
		\end{itemize}
		$\hat{S}$ also contains a part of the $O(B^3)$ error term in \eqref{SOB3}.
		In particular, it contains the diagrams of $T$ which have the upper or lower row occupied by two $A^{(q)}$'s, i.e.~of the following form:
		\beq
		\label{figB2B2}
		\myinclude{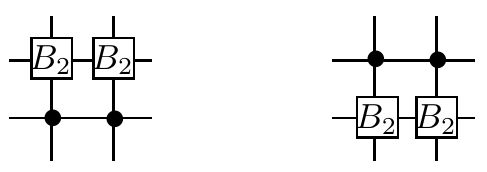}
		\eeq
		Here $B_2$ stands for the whole of $B_2$, i.e.~the sum of the single-circle and the double-circle parts, see Eq.~\eqref{sc-conds}.
		
		However, some $O(B^3)$ diagrams in $S$ are \emph{not} contained in $\hat{S}$. These are:
		\begin{itemize}
			\item
			diagrams from $T$ which have at least one $B_2$ in the upper row and in the lower row, such as
			\beq
			\myinclude{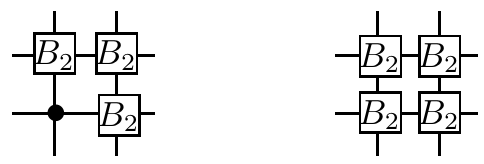}
			\eeq
			More precisely, we are talking only about the part of these diagrams where not both vertical indices are set to $q$.
			\item 
			diagrams obtained by multiplying diagrams from $T$ by the first- and second-order pieces of the expansions of $R$ and $R^{-1}$ from Eq.~\eqref{LemmaR1}. (We exclude the diagram $T_0\rho-\rho T_0$, which canceled the dangerous diagram.) Examples of such diagrams are 
			\beq
			\myinclude{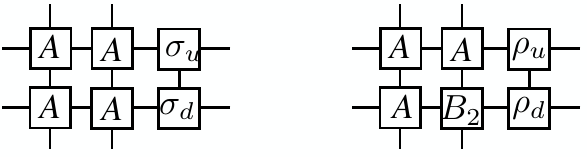}
			\eeq
		\end{itemize}
		Let us number all these "missing" diagrams in an arbitrary order, as $D_\alpha$, $\alpha=1,\ldots,N$.
		We will need below their properties:
		\begin{itemize}
			\item $\text{the contraction of }S^u_{0}\text{ and }S^d_{0}\text{ equals }S-\sum_{\alpha=1}^N D^\alpha$;
			\item each $D^\alpha=O(B^3)$;
			\item each $D^\alpha$ naturally comes as a contraction of its upper part $D^{u,\alpha}$ and its lower part $D^{d,\alpha}$, with one of the following three possibilities:
			\begin{enumerate}
				\item
				$D^{u,\alpha}=O(B)$, $D^{d,\alpha}=O(B^2)$;
				\item
				$D^{u,\alpha}=O(B^2)$, $D^{d,\alpha}=O(B)$;
				\item
				$D^{u,\alpha}=O(B^2)$, $D^{d,\alpha}=O(B^2)$. 
			\end{enumerate}
			Note that $D^{u,\alpha}=O(1)$, $D^{d,\alpha}=O(B^3)$ and the reverse are excluded, because such $O(B^3)$ diagrams are contained in $\hat{S}$, see Eq.~\eqref{figB2B2}.
		\end{itemize}
		
		Let us now complete the definition of $S^u$ and $S^d$ .
		The index set for the bottom leg in $S^u$ and the top leg in $S^d$ 
		will be the usual $\mathcal{I} = \{+,-\} \cup \mathbb{N}$.
		The components of $S^u$ when the index on the bottom leg is $+$ or $-$ are equal to the corresponding components
		in $\hat{S}^u$ with a similar statement for $S^d$. 
		
		We will now define $S^u$ when the index on the bottom leg is in $\mathbb{N}$
		and $S^d$ when the index on the top leg is in $\mathbb{N}$ in such a way that the contraction of $S^u$ and
		$S^d$ over $\mathbb{N}$ will be equal the sum of the missing diagrams. 
		
		As we said, each missing diagram $D^\alpha$ naturally comes as a contraction of its upper part $D^{u,\alpha}$ and its lower part $D^{d,\alpha}$
		where the contraction is over some infinite index set. 
		We now partition the index set $\mathbb{N}$ into $N$ infinite, disjoint sets:
		$\mathbb{N} = \sqcup_{\alpha=1}^N \, \mathcal{I}_\alpha$.
		Since $\mathcal{I}_\alpha$ is infinite, we can
		use it for the contraction of $D^{u,\alpha}$ and  $D^{d,\alpha}$.
		Note that $D^{u,\alpha}$, respectively $D^{d,\alpha}$, vanishes if the index on the bottom, respectively top,
		leg is not in $\mathcal{I}_\alpha$. 
		
		As mentioned above, some of $D^{u,\alpha}$ and $D^{d,\alpha}$ are $O(B)$. We need to improve these terms by a factor of $\epsilon^{1/2}$. 
		We will define
		\beq
		S^{u,\alpha}=k_\alpha D^{u,\alpha},\quad S^{d,\alpha}=k_\alpha ^{-1}D_{d,\alpha}\ .
		\label{kalpha}
		\eeq
		with an appropriate $k_\alpha>0$. Whatever $k_\alpha$ is, the contraction of $S^{u,\alpha}$ and $S^{d,\alpha}$ gives $D^\alpha$.
		
		According to the three possibilities mentioned above, we set:
		\begin{itemize}
			\item
			If both $D^{u,\alpha}, D^{d,\alpha}=O(B^2)$, then we set $k_\alpha=1$;
			\item
			If $D^{u,\alpha}=O(B)$, $D^{d,\alpha}=O(B^2)$, then we set $k_\alpha=\varepsilon^{1/2}$;
			where $\varepsilon$ is from Eq.~\eqref{domainLT}. 
			\item Finally, if $D^{u,\alpha}=O(B^2)$, $D^{d,\alpha}=O(B)$, then we set $k_\alpha=\varepsilon^{-1/2}$.
		\end{itemize}
		
		We now define 
		\begin{gather}
			S^u= \hat{S}^{u}\oplus\Delta S^u,\quad  \Delta S^u =\bigoplus_{\alpha}
			S^{u,\alpha},\nn \\
			S^d= \hat{S}^{d} \oplus \Delta S^d, \quad\Delta S^d=\bigoplus_{\alpha} S^{d,\alpha}.
			\label{SuSd}
		\end{gather}
		The crucial point in this construction is that when we contract $S^u$ and $S^d$ there are no ``cross terms.''
		$\hat{S}^u$ can only contract with $\hat{S}^d$ and $S^{u,\alpha}$ can only contract with $S^{d,\alpha}$
		because the $\mathcal{I}_\alpha$ are disjoint from $\{+,-\}$ and from each other. 
		(The direct sum notation in Eq.~\eqref{SuSd} is meant
		to emphasize the disjointness of the various index sets.)
		By the choice of $k_\alpha$ in \eqref{kalpha}, we have
		\beq
		\Delta S^u, \Delta S^d =O(B^2)+\varepsilon^{1/2} O(B) +\varepsilon^{-1/2} O(B^2) =\varepsilon^{1/2} O(B)\ ,
		\label{Sudbound}
		\eeq
		where in the last equality we used $\| B \|<\varepsilon$.  
		

		The factor $\eps^{1/2}$ is very important. In Step 3.4 it will lead to Eq.~\eqref{step3bound}, which implies that the RG map contracts in the $B$ direction for sufficiently small $\eps$. Note that we would not get the $\eps^{1/2}$ factor without first removing the dangerous diagram, which is $O(B^2)$ and would require $\Delta S^u, \Delta S^d=O(B)$. This shows the importance of disentangling.
		
		\begin{remark}\label{modification}
			It is instructive to compare this step with the proof of Proposition 2.3 in \cite{paper1}. There is a minor difference in the definition of tensors $\hat{S}^u$ and $\hat{S}^d$. Translating Eq.~(2.46) of \cite{paper1} to the present situation, this would correspond to taking instead of $\hat{S}^u$, $\hat{S}^d$ the tensor $\tilde{S}^u$ defined by 
			\beq
			\myinclude{fig-Sutilde}
			\eeq
			and the tensor $\tilde{S}^d$ defined by flipping this equation upside down.
			
			We can see that $\tilde{S}^u$, $\tilde{S}^d$ are nothing but the expansions of our $\hat{S}^u$, $\hat{S}^d$ up to $O(B^2)$. $\tilde{S}^u$ and $\tilde{S}^d$ share the first good property of $\hat{S}^u$, $\hat{S}^d$, namely that their contraction reproduces all diagrams $T_0+T_1+T_2$ except the dangerous diagram. However, they do not share the second good property: their contraction does not contain diagrams \eqref{figB2B2}. Because of this, with $\tilde{S}^u$ and $\tilde{S}^d$ we would not be able to get $\Delta S^u,\Delta S^d=\eps^{1/2} O(B)$ whose contraction reproduces all missing diagrams. We could only get $\Delta S^u,\Delta S^d=O(\eps^{3/2})$, which was sufficient for the purposes of \cite{paper1} but not here.
			
			In the proof of Theorem \ref{HTmap} we mentioned that a minor modification is needed in the proof of Proposition 2.3 in \cite{paper1} to get Eq.~\eqref{B3}. This modification consists precisely in replacing Eq.~(2.46) of \cite{paper1} by Eq.~\eqref{Su2def}. 
		\end{remark}
		
		\textit{Step 3.4.} The reconnected tensor $U$ defined in \eqref{Udef} can be expanded as 
		\begin{align}
			&U=U_0+\Delta U \\
			&U_0 = \myinclude{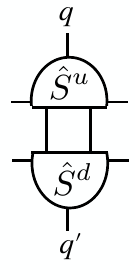}\quad (q,q'=\pm,\text{ may be equal or not}) \\
			&\Delta U = \myinclude{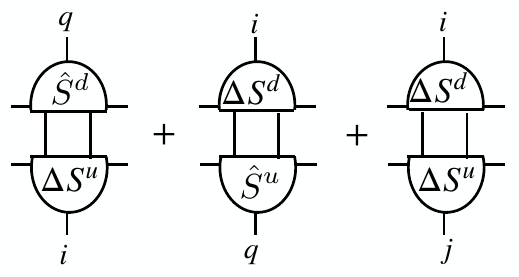}
		\end{align}
		For $\Delta U$ the estimate $\Delta U = \varepsilon^{1/2} O( B)$ by \eqref{Sudbound} will suffice. Let us examine $U_0$ more closely. Substituting $\hat{S}^u$ and $\hat{S}^d$ from Eq.~\eqref{Su2def} into the definition of $U_0$, we have
		\beq
		U_0=\sum_q \alpha_q^4 \myinclude{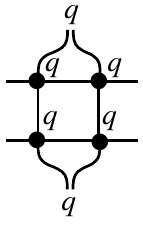} + O(B^2)\label{U0corr}
		\eeq
		To see that the correction term is indeed $O(B^2)$, note that the potential $O(B)$ terms vanish by Eq.~\eqref{sc-conds}, such as
		\beq
		\myinclude{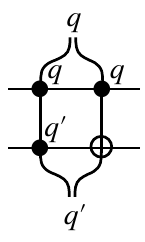} 
		\eeq
		
		Finally, we define $A'$ via reindexing by $J$ on the horizontal pairs of legs of $U$, Eq.~\eqref{Udef}. The leading term in \eqref{U0corr} maps to $\sum_q \alpha_q^4\, A^{(q)}$. Let $\tilde B$ be the sum of $\Delta U$ and the correction term in \eqref{U0corr} after reindexing. We have
		\begin{align}
			&A'=\sum_q \alpha_q^4\, A^{(q)}+ \tilde B, \label{A'step30}
			\\
			&\tilde B = \varepsilon^{1/2} O(B) + O(B^2) = \varepsilon^{1/2} O(B)\,.
			\label{tildeBbound}
			\\
			&\tilde B_{qqqq}=O(B^2)\,.\label{B2qqqq}
		\end{align}
		For the last estimate, note that $\tilde B_{qqqq}$ components do not come from $\Delta U$ but only from the correction term in \eqref{U0corr}.
		
		We are almost done, since $\tilde B$ satisfies the bound \eqref{step3bound} we are aiming for. We only need to rewrite $A'$ as $\nu_3\cdot A(\alpha_3,B_3)$ and arrange that $B_3$ has zero $qqqq$ components.   
		To this end, let us remove the $qqqq$ components from $\tilde B$ and reabsorb them into the leading term in \eqref{A'step30}, adjusting the coefficients of $A^{(q)}$ accordingly. We obtain:
		\begin{gather}
			A'=\sum_q \beta_q A^{(q)}+\hat{B},\qquad \beta_q=\alpha_q^4+\tilde B_{qqqq}\,,
			\label{A'step3}
		\end{gather}
		where tensor $\hat{B}$ equals $\tilde B$ with the $qqqq$ components set to zero. 
		
		We then define
		\begin{equation}
			\nu_3=\beta_+ +\beta_-\ .
		\end{equation}
		Using \eqref{B2qqqq}, $\alpha_+=\alpha_2, \alpha_-=1-\alpha_2$, $\alpha_2=\frac{\alpha^4}{\alpha^4+(1-\alpha)^4}$, we obtain
		\beq
		\nu_3 = \alpha_+^4+\alpha_-^4+O(B^2)= \frac{\alpha^{16}+(1-\alpha)^{16}}{[\alpha^4+(1-\alpha)^4]^4}+O(B^2),\label{nu31}
		\eeq
		i.e.~precisely \eqref{nu3}. 
		
		Factoring out $\nu_3$ from \eqref{A'step3}, we write $A'$ in \eqref{A'step3} as
		\begin{align}
			&A'=\nu_3 \cdot A(\alpha_3,B_3)\ ,\\
			& \alpha_3 = \nu_3^{-1} \beta_+\ ,\\
			& B_3 = \nu_3^{-1} \hat{B}\ . \label{B3def}
		\end{align}
		
		We have to make sure that $\nu_3^{-1}$ is regular on our domain \eqref{domainLT}. This will be the case if the leading term in \eqref{nu31} is nonzero and if the correction term does not overwhelm the leading term. The leading term is nonzero by our initial assumption Eq.~\eqref{polybnd}. The correction term will not overwhelm the leading term if we choose $\bar{\eps}$ sufficiently small. {\bf (That's where $\bar\eps$ is fixed.)} 
		
		Specifically, let
		\beq
		\mu = \inf_{\alpha\in \Pi } \left|\frac{\alpha^{16}+(1-\alpha)^{16}}{[\alpha^4+(1-\alpha)^4]^4}\right|,
		\label{mudef}
		\eeq
		which is positive by Eq.~\eqref{polybnd}. Choose $\bar{\varepsilon}>0$ so small that $ C \bar{\varepsilon}^{2}<\mu/2$ where $C$ is the constant in the $O(B^2)$ in \eqref{nu31}. 
		
		Then $\nu_3^{-1}$ is bounded and analytic for $\alpha\in \Pi $ and $\|B\|<\eps<\bar\eps$, specifically
		\begin{align}
			&|\nu_3^{-1}|<2/\mu,\\ 
			&\nu_3^{-1}=\frac{[\alpha^4+(1-\alpha)^4]^4}{\alpha^{16}+(1-\alpha)^{16}}
			+O(B^2). \label{nu32}
		\end{align}
		
		We can now see that $\alpha_3$ and $B_3$ satisfy \eqref{alpha3},\eqref{step3bound}. For $B_3$ this is obvious. For $\alpha_3$ we have
		\begin{align}
			\alpha_3 =\left( \frac{[\alpha^4+(1-\alpha)^4]^4}{\alpha^{16}+(1-\alpha)^{16}} + O(B^2)\right)\left(\frac{\alpha^{16}}{[\alpha^{4}+(1-\alpha)^4]^4}+O(B^2)\right).
			\label{alpha3bound}
		\end{align}
		Taking into the account Eq.~\eqref{polybnd}, this implies \eqref{alpha3}. All conditions \eqref{step3} are verified.
		
		\begin{proof}[End of proof of Theorem \ref{LTmap} started on p.~\pageref{startproof}]
			As stated there we have $\alpha'=\alpha_3$, $B'=B_3$. Conditions \eqref{LTalpha} and \eqref{LTB} coincide with conditions satisfied by $\alpha_3,B_3$ at the end of Step 3. Condition \eqref{LTnu} follows from \eqref{nuprodLT} using the conditions satisfied by $\nu_1,\nu_2,\nu_3$. 
			\label{endofproofTh61}
		\end{proof}
	
\section{Properties of free energy and magnetization at low $T$}
\label{LTprop}

\subsection{Properties of the RG map} 
In this subsection we use the properties of the RG map established in
Theorem \ref{LTmap} to prove properties of the derivative of the RG map.
Later on we will use these properties and the stable
manifold theorem to discuss the RG flow. Recall that we denote the RG map
by $R(\alpha,B)=(\alpha^\prime,B^\prime)$. In the previous section it was defined
for $\alpha \in \Pi=\{\alpha : \delta<\Re \alpha<1-\delta, |\Im \alpha|<w\}$
and $\|B\|<\epsilon$. Recall that $R$ is an analytic function on this domain (see Appendix \ref{abstract}). We will use this analyticity to get bounds on
the derivative in a smaller domain. The smaller domain consists of
real $\alpha$ with $2 \delta < \alpha < 1-2\delta$ and real $B$ with
$\|B\|<\epsilon/2$. We let $DR(\alpha,B)$ denote the derivative of $R$.

{\bf Notation:} We will use $O(B^n)$ notation from Section \ref{RG_lowT} (p. \pageref{OBn}) only for analytic functions. We will use $O(\|B\|^n)$ notation for bounds on functions restricted to the real domain. The constants in $O(\|B\|^n)$, like in $O(B^n)$, are independent of $\eps$ but may depend on $\delta$ and $w$. 

\begin{theorem}
	\label{DRthm}
	The derivative $DR(\alpha,B)$ has the following structure.
	Divide $DR$ into four blocks corresponding to the splitting of
	the space into $\alpha$ and $B$.
	We denote the restrictions of  $DR(\alpha,B)$ to these four blocks by
	$\partial_{\alpha} \alpha'$, $\partial_\alpha B'$,
	$\partial_B \alpha'$, and $\partial_B B'$.
	Then, for any $\eps$ sufficiently small, there is a constant  $l<1$ such that for real $\alpha$
	with $2 \delta < \alpha < 1-2\delta$ and real $B$ with
	$\|B\|<\epsilon/2$ we have
	\begin{equation}
		\| \partial_B B'(\alpha,B) \| \le  l<1\ .
		\label{dra}
	\end{equation}
	Furthermore
	\begin{eqnarray}
		&&\partial_\alpha \alpha'(\alpha,B) = r^\prime(\alpha) + O(\|B\|^2)\ ,
		\label{drb}   \\
		&&  \|\partial_B \alpha'(\alpha,B) \| = O(\|B\|)\ ,
		\label{drc}   \\
		&& \| \partial_\alpha B'(\alpha,B) \| = \epsilon^{1/2} O(\|B\|) \ .
		\label{drd} 
		\label{dr}
	\end{eqnarray}
\end{theorem}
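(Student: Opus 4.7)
The plan is to extract all four estimates from the analyticity of $R$ established in Theorem \ref{LTmap} via Cauchy's integral formula on a complex polydisk that sits strictly inside the domain of analyticity. Fix a real base point $(\alpha_0, B_0)$ with $2\delta < \alpha_0 < 1 - 2\delta$ and $\|B_0\| < \varepsilon/2$, and set $\rho := \min(\delta, w) > 0$. Then the complex disk $\{\zeta \in \mathbb{C} : |\zeta - \alpha_0| \le \rho\}$ lies in $\Pi$, and for any $V \in \HLT$ of unit norm the disk $\{t \in \mathbb{C} : |t| \le \varepsilon/2\}$ keeps $B_0 + tV$ inside $\{\|B\| < \varepsilon\}$. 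On the resulting polydisk of analyticity the function-value bounds \eqref{LTnu}--\eqref{LTB} apply with constants independent of $\varepsilon$, by the notation convention stated right after Theorem \ref{LTmap}.

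The estimates \eqref{drb} and \eqref{drd} come from one-variable Cauchy in the $\alpha$ direction. Writing $\alpha' = r(\alpha) + h(\alpha, B)$ with $|h(\alpha, B)| \le C\|B\|^2$ on the complex domain, the formula
\[
\partial_\alpha h(\alpha_0, B_0) = \frac{1}{2\pi i} \oint_{|\zeta - \alpha_0| = \rho} \frac{h(\zeta, B_0)}{(\zeta - \alpha_0)^2}\, d\zeta
\]
yields $|\partial_\alpha h(\alpha_0, B_0)| \le \rho^{-1} C \|B_0\|^2 = O(\|B_0\|^2)$, which is \eqref{drb}. The identical contour argument applied to $B'$ using \eqref{LTB} gives $\|\partial_\alpha B'(\alpha_0, B_0)\| \le \rho^{-1} C \varepsilon^{1/2} \|B_0\| = \varepsilon^{1/2} O(\|B_0\|)$, i.e.\ \eqref{drd}.

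For \eqref{dra} I apply Cauchy in the $B$ direction. For a unit vector $V \in \HLT$, the map $t \mapsto B'(\alpha_0, B_0 + tV)$ is $\HLT$-valued analytic on $|t| \le \varepsilon/2$ and satisfies $\|B'(\alpha_0, B_0 + tV)\| \le C \varepsilon^{1/2}(\|B_0\| + |t|) \le C \varepsilon^{3/2}$. Cauchy at $t = 0$ on the disk of radius $\varepsilon/2$ then gives
\[
\|\partial_B B'(\alpha_0, B_0)[V]\| \le \frac{C \varepsilon^{3/2}}{\varepsilon/2} = 2 C \varepsilon^{1/2},
\]
so that the operator norm $\|\partial_B B'(\alpha_0, B_0)\|_{\mathrm{op}} \le 2 C \varepsilon^{1/2}$, strictly less than $1$ once $\varepsilon$ is small enough. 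This is the final smallness requirement on $\bar\varepsilon$.

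The main obstacle is \eqref{drc}: Cauchy in the $B$ direction with the largest available radius $\varepsilon/2$ only produces an $O(\varepsilon)$ bound, whereas the statement demands a bound that scales linearly with $\|B_0\|$. The remedy is to exploit that $h(\alpha, \cdot)$ vanishes to second order at $B = 0$ by shrinking the Cauchy radius to $r := \|B_0\|$ (which is $\le \varepsilon/2$ by hypothesis). On $|t| = r$ one has $\|B_0 + tV\| \le 2\|B_0\|$ and hence $|h(\alpha_0, B_0 + tV)| \le 4 C \|B_0\|^2$, so Cauchy at $t = 0$ gives
\[
|\partial_B \alpha'(\alpha_0, B_0)[V]| \le \frac{4 C \|B_0\|^2}{\|B_0\|} = 4 C \|B_0\|
\]
for $\|V\| = 1$, which is \eqref{drc}. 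The degenerate point $B_0 = 0$ is handled either by letting $r \to 0$ or by directly noting that $h = O(B^2)$ analytically forces $\partial_B h(\alpha, 0) \equiv 0$.
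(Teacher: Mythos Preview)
Your proof is correct and follows essentially the same approach as the paper: both derive all four bounds from the analytic estimates \eqref{LTalpha}--\eqref{LTB} via Cauchy-type derivative estimates, using radius $\min(\delta,w)$ in the $\alpha$-direction for \eqref{drb} and \eqref{drd}, and the key trick of shrinking the radius to $\|B_0\|$ in the $B$-direction to upgrade the $O(B^2)$ bound to an $O(\|B\|)$ derivative bound for \eqref{drc}. The only cosmetic differences are that the paper packages the Cauchy estimate as an abstract lemma (Lemma~\ref{Dfbound}) rather than writing contour integrals, and for \eqref{dra} the paper also uses radius $\|B\|$ (rather than your $\varepsilon/2$), but both choices yield the same $O(\varepsilon^{1/2})$ bound.
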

\begin{remark}
	Recall that $r(\alpha)$ was defined in \eqref{LT0map}. Note that $r(1/2)=1/2$. 
	On $[0,1]$ the derivative $r^\prime(\alpha)$ is non-negative and has
	its maximum at $r^\prime(1/2)=16$. 
\end{remark} 

We will derive bounds on $DR$ from bounds on $R$ by using the following basic property of analytic maps on Banach spaces, which we restate here for convenience:

\begin{lemma}[Appendix \ref{abstract}, Property \ref{Taylor-prop}, Eq.~\eqref{Taylor-eq}] \label{Dfbound} Let $X,Y$ be two complex Banach spaces, $\mathcal{U}$ an open subset of $X$, and $f:\mathcal{U}\to Y$ an analytic function. If $f$ is bounded by $M$ in a ball of radius $\varrho$, then the operator norm $\|D f\|_{\mathcal{B}(X,Y)}$ at the center of the ball is bounded by $M/\varrho$. 
\end{lemma}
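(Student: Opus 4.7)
The plan is to derive each of the four blocks of $DR$ from the Cauchy-type estimate of Lemma~\ref{Dfbound} applied to the analytic map $R(\alpha,B)=(\alpha',B')$ on its domain $\Pi\times\{\|B\|<\epsilon\}$. Fix a real basepoint $(\alpha_0,B_0)$ with $2\delta<\alpha_0<1-2\delta$ and $\|B_0\|<\epsilon/2$. Two polydiscs centred at $(\alpha_0,B_0)$ will be used: in the $\alpha$-direction, the disc $|\alpha-\alpha_0|<\rho_\alpha:=\min(\delta,w)$, which lies in $\Pi$ since $\alpha_0\in(2\delta,1-2\delta)$ and $|\Im\alpha|<w$; and in the $B$-direction, the ball $\|B-B_0\|<\epsilon/2$, which lies in $\{\|B\|<\epsilon\}$. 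Applying Lemma~\ref{Dfbound} on the appropriate polydisc converts the uniform bounds on $\alpha',B'$ supplied by \eqref{LTalpha}--\eqref{LTB} into bounds on the corresponding partial derivatives at $(\alpha_0,B_0)$.

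Three of the four bounds follow directly from this template. For \eqref{drb}, write $\alpha'=r(\alpha)+\phi(\alpha,B)$ with $\phi=O(B^2)$ by \eqref{LTalpha}; on the $\alpha$-disc at fixed real $B_0$ one has $|\phi|\le C\|B_0\|^2$, so Lemma~\ref{Dfbound} yields $|\partial_\alpha\phi(\alpha_0,B_0)|\le C\|B_0\|^2/\rho_\alpha=O(\|B\|^2)$. For \eqref{drd}, on the same $\alpha$-disc \eqref{LTB} gives $\|B'\|\le C\epsilon^{1/2}\|B_0\|$, hence $\|\partial_\alpha B'(\alpha_0,B_0)\|\le C\epsilon^{1/2}\|B_0\|/\rho_\alpha=\epsilon^{1/2}O(\|B\|)$. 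For \eqref{dra}, use the $B$-ball at fixed real $\alpha_0$: on it $\|B\|<\epsilon$, so $\|B'\|\le C\epsilon^{3/2}$ by \eqref{LTB}, and Lemma~\ref{Dfbound} gives $\|\partial_B B'(\alpha_0,B_0)\|\le 2C\epsilon^{1/2}$, which falls below any preassigned $l<1$ once $\epsilon$ is small enough.

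The remaining bound \eqref{drc}, $\|\partial_B\phi\|=O(\|B\|)$, is the one I expect to be the main obstacle: a single application of Lemma~\ref{Dfbound} to $\phi(\alpha_0,\cdot)$ on the $B$-ball of radius $\epsilon/2$ only gives $\|\partial_B\phi\|\le C\epsilon^2/(\epsilon/2)=O(\epsilon)$, which is strictly weaker than the required $O(\|B\|)$ whenever $\|B_0\|\ll\epsilon$. To gain the extra factor of $\|B_0\|$ one must exploit the fact that $\phi$ vanishes to second order at $B=0$, so that $\partial_B\phi(\alpha_0,0)=0$ and Taylor's theorem in the Banach space $\HLT$ gives $\partial_B\phi(\alpha_0,B_0)=\int_0^1\partial_B^2\phi(\alpha_0,tB_0)\,B_0\,dt$. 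A second application of Lemma~\ref{Dfbound} (for instance using a ball of radius $\epsilon/4$ around any point with $\|B\|\le 3\epsilon/4$, starting from the global bound $|\phi|\le C\epsilon^2$ on $\|B\|<\epsilon$) bounds $\|\partial_B^2\phi(\alpha_0,B)\|$ by an $\epsilon$-independent constant on $\|B\|\le 3\epsilon/4$. Substituting this into the integral representation delivers $\|\partial_B\phi(\alpha_0,B_0)\|=O(\|B_0\|)$, completing \eqref{drc}.
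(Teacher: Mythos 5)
You have a mismatch with the assigned statement: what you were asked to prove is Lemma \ref{Dfbound} itself, i.e.\ the Cauchy-type estimate $\| D f (x) \|_{\mathcal{B}(X,Y)} \leqslant M/\varrho$ for a Banach-space-valued analytic $f$ bounded by $M$ on a ball of radius $\varrho$. Your proposal never proves this; it invokes it as a black box ("Applying Lemma~\ref{Dfbound} on the appropriate polydisc\dots") and instead proves the derivative bounds \eqref{dra}--\eqref{drd}, i.e.\ Theorem \ref{DRthm}. Relative to the statement in question the argument is therefore circular: the entire content of the lemma --- the reduction of an abstract analytic map to one complex variable --- is missing. The paper's proof (Appendix \ref{abstract}, Property \ref{Taylor-prop}, Eq.~\eqref{Taylor-eq}) is the standard one: for a unit vector $w \in X$ and a functional $\ell \in Y^{\ast}$ with $\| \ell \| \leqslant 1$, the scalar function $z \mapsto \ell\bigl(f(x + \varrho z w)\bigr)$ is an ordinary analytic function on the unit disk bounded by $M$, so Cauchy's integral formula for the first derivative at $z=0$ gives $\varrho\, |\ell(D f(x) w)| \leqslant M$; taking the supremum over $\ell$ (Hahn--Banach) and over unit $w$ yields $\| D f(x) \| \leqslant M/\varrho$. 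Any acceptable proof of the lemma must contain some version of this scalar-reduction-plus-Cauchy step, and your text contains none of it.

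For what it is worth, as a proof of Theorem \ref{DRthm} your argument is close to the paper's and is essentially sound, with one real divergence: for $\partial_B \alpha'$ (and $\partial_B B'$) the paper applies the Cauchy estimate on a ball of radius $\varrho = \|B\|$ centered at $B$ (legitimate because $\|B\| < \epsilon/2$), so the bound $O(\|B\|^2)/\|B\| = O(\|B\|)$ for \eqref{drc} comes out in a single step; you instead keep a fixed radius of order $\epsilon$ and recover the lost factor via $\partial_B \phi(\alpha,0)=0$, an $\epsilon$-independent second-derivative bound, and a Taylor/integral representation. That detour works (Property \ref{Taylor-prop} supplies the needed second-order Cauchy estimate) but is longer than the paper's choice of radius, and in any case does not discharge the lemma you were actually asked to prove.
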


\begin{proof}[Proof of Theorem \ref{DRthm}]
	We use the properties of the RG map given in Theorem \ref{LTmap}.
	For $\alpha\in \Pi$, $\|B\|<\eps$ we have
	\begin{eqnarray}
		\alpha' &=& r(\alpha) +O(B^2),
		\label{drpropa}
		\\
		B' &=& \varepsilon^{1/2} O(B).
		\label{drpropb}
	\end{eqnarray}
	We will apply Lemma \ref{Dfbound} to each of
	$\partial_{\alpha} \alpha'$, $\partial_\alpha B'$,
	$\partial_B \alpha'$, and $\partial_B B'$ separately. 
	
	Consider first $\partial_\alpha \alpha'$. Note that the $O(B^2)$ term in Eq.~\eqref{drpropa}
	depends on $\alpha$. Since we are studying the partial derivative, $B$ is fixed and we are studying the ordinary analytic function $\alpha'(\cdot,B): \Pi \to \mathbb{C}$.
	To bound its derivative with respect to $\alpha$ we choose a disk of radius $\varrho=\min(\delta,w)$ in $\mathbb{C}$ centered at the point $\alpha$ of interest, Since we are assuming $2\delta<\alpha<1-2\delta$, $\delta\le w$, this disk is in $\Pi$ and we can use \eqref{drpropa}. By Lemma \ref{Dfbound}, the $\alpha$-derivative of the error term is bounded by $O(\|B\|^2)/\varrho$, i.e.
	\begin{equation}
		\partial_\alpha \alpha' = r'(\alpha)+ \varrho^{-1}{O(\|B\|^2)}.
	\end{equation}
	This proves Eq.~\eqref{drb}.
	
	Next we consider $\partial_\alpha B'$ and use Eq.~\eqref{drpropb}. We are studying the analytic function $B'(\cdot,B):\mathbb{C}\to \HLT$. We take the same disk of radius $\varrho=\min(\delta,w)$ in $\mathbb{C}$ and $M= \varepsilon^{1/2} O(\|B\|)$.
	So Lemma \ref{Dfbound} implies the 
	norm of this derivative is bounded by $\varepsilon^{1/2} O(\|B\|)/\varrho$,
	which proves Eq.~\eqref{drd}. 
	
	Now consider $\partial_B \alpha'$.
	$r(\alpha)$ does not depends on $B$, so we
	only need to bound the derivative of the $O(B^2)$ term in Eq.~\eqref{drpropa}
	with respect to $B$. We are studying the analytic function $\alpha'(\alpha,\cdot):\HLT\to \mathbb{C}$.
	Since $\|B\|<\epsilon/2$ we can take $\varrho=\|B\|$ 
	and the ball about $B$ in $\HLT$ of radius $\varrho$ will lie inside the domain of analyticity.
	We take $M=O(\|B\|^2)$, and Lemma \ref{Dfbound} then implies 
	the norm of this derivative is $O(\|B\|)$, which proves Eq.~\eqref{drc}. 
	
	Finally consider $\partial_B B' $. We are studying the analytic function $B'(\alpha,\cdot):\HLT\to \HLT$. We take the same ball of radius $\varrho=\|B\|$ in $\HLT$. We take $M= \varepsilon^{1/2} O(\|B\|)$, so Lemma \ref{Dfbound} implies that 
	the norm of this derivative is $O(\epsilon^{1/2})$. For $\eps$ sufficiently small this proves \eqref{dra}.
\end{proof}

\begin{theorem}
	\label{gthm}
	As in Section \ref{RG_lowT},
	let $\normfactor $ be the normalization factor in the RG map.
	Let $g(\alpha,B)=\ln(\normfactor )$.
	Then, for any $\eps$ sufficiently small, $g$ is an analytic function on the open domain \eqref{domainLT}, $g$ is bounded, and the derivative of $g$ is bounded on subsets of this domain having positive distance from its complement.
	Let $g^\prime(\alpha,B)$ be the derivative of $g(\alpha,B)$ with respect
	to $\alpha$ and let $\nabla g(\alpha,B)$ be the gradient with respect to
	the variables in $B$. 
	For $\alpha \in (\delta,1-\delta)$ we have
	\begin{eqnarray}
		\nabla g(\alpha,0) = 0.
		\label{gradB}
	\end{eqnarray}
	Furthermore, $g^\prime(1/2,0)=0$.
\end{theorem}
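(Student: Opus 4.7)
The plan is to read off all four properties of $g=\ln\nu$ directly from the explicit structure of $\nu$ collected in the proof of Theorem~\ref{LTmap}. From Eq.~\eqref{nuprodLT} together with $\nu_1=1$, $\nu_2=\alpha^4+(1-\alpha)^4$, and Eq.~\eqref{nu3}, the factors telescope to
\[
\nu(\alpha,B)=\alpha^{16}+(1-\alpha)^{16}+O(B^2),
\]
with $\nu$ analytic on the domain \eqref{domainLT}. By the second inequality in \eqref{polybnd} the leading polynomial is bounded below by a positive constant on $\Pi$, so, possibly after shrinking $\bar\eps$ further (by the same kind of argument that fixed $\bar\eps$ via \eqref{mudef}, applied now to $\nu$ instead of $\nu_3$), there exist constants $0<c\le C<\infty$ with $c\le |\nu(\alpha,B)|\le C$ throughout the domain. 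Therefore $g=\ln\nu$ is analytic and bounded there.

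For the derivative bound I would invoke Lemma~\ref{Dfbound}: at any point of the domain at distance at least $\varrho>0$ from the boundary, the open polydisk of radius $\varrho$ is contained in the domain where $|g|\le M:=\ln C-\ln c$, so the operator norm of $Dg$ at that point is bounded by $M/\varrho$. This yields the stated local boundedness of $Dg$.

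For $\nabla g(\alpha,0)=0$ at real $\alpha\in(\delta,1-\delta)$, note that $\nabla_B g=\nabla_B\nu/\nu$, and by \eqref{LTnu} the function $B\mapsto \nu(\alpha,B)-[\alpha^{16}+(1-\alpha)^{16}]$ is analytic on a neighborhood of $0$ in $\HLT$ and $O(\|B\|^2)$ there. A $Y$-valued analytic map on a Banach space that vanishes to second order at the origin has vanishing Fr\'echet derivative there (its linear Taylor coefficient is zero by the Cauchy formula, cf.\ Appendix~\ref{abstract}), so $\nabla_B\nu(\alpha,0)=0$ and hence $\nabla g(\alpha,0)=0$. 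Finally, at $B=0$ we have the explicit formula $g(\alpha,0)=\ln[\alpha^{16}+(1-\alpha)^{16}]$, so
\[
g'(\alpha,0)=\frac{16\bigl[\alpha^{15}-(1-\alpha)^{15}\bigr]}{\alpha^{16}+(1-\alpha)^{16}},
\]
which manifestly vanishes at $\alpha=1/2$ (equivalently, by the $\alpha\leftrightarrow 1-\alpha$ symmetry of $g(\cdot,0)$).

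There is no real obstacle here; the only thing to be mildly careful about is the uniform lower bound on $|\nu|$, which requires one to choose $\eps$ small enough that the $O(B^2)$ correction in \eqref{LTnu} cannot overcome the positive minimum of $|\alpha^{16}+(1-\alpha)^{16}|$ on $\Pi$ provided by \eqref{polybnd}. Once that is in hand, analyticity of $g$, boundedness, the Cauchy estimate for $Dg$, and the two vanishing statements all follow without further work.
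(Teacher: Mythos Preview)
Your proof is correct and follows essentially the same approach as the paper: both use \eqref{LTnu} (which you re-derive from \eqref{nuprodLT}) together with \eqref{polybnd} to get analyticity and boundedness of $g$, then Lemma~\ref{Dfbound} for the derivative bound, then the $O(B^2)$ structure for $\nabla g(\alpha,0)=0$, and finally the $\alpha\leftrightarrow 1-\alpha$ symmetry (or the explicit formula you compute) for $g'(1/2,0)=0$. The paper's proof is simply more terse.
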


\begin{proof}
	Recall Eq.~\eqref{LTnu}. For $\alpha \in \Pi$ the polynomial 
	$\alpha^{16} + (1-\alpha)^{16}$ is bounded away from zero. Hence, for $\|B\|<\eps$ with $\eps$ sufficiently small, 
	$g$ is analytic and bounded and 
	\begin{equation}
		g(\alpha,B) = \ln(\alpha^{16} + (1-\alpha)^{16}) + O(B^2).
	\end{equation}
	Eq.~\eqref{gradB} follows. $g(\alpha,0)$ is symmetric about $\alpha=1/2$, so
	$g^\prime(1/2,0)=0$. Since $g$ is bounded on the domain, the derivative is bounded by Lemma \ref{Dfbound} on subsets having positive distance from the complement.
\end{proof}

The point $(\alpha,B)=(1/2,0)$ is a fixed point of the RG map called the low-T fixed point. Since $r^\prime(1/2)=16$, it is an unstable fixed point. We will now apply the stable manifold theorem (the statement and details on how it applies here can be found in Appendix \ref{stable}).
Using the properties on $DR$ established in Theorem \ref{DRthm}, the stable manifold theorem implies the following.
By reducing $\epsilon$ if needed, 
there is a $C^\infty$ real-valued function $\alpha_c(B)$
defined for $\|B\|<\epsilon/2$ such that the graph
$\{(\alpha_c(B),B): \|B\|<\epsilon/2 \}$
is the local stable manifold, i.e., the part of the stable manifold
in a neighborhood of $(1/2,0)$. (The stable manifold is the set of
$(\alpha,B)$ which flow to the unstable fixed point under iteration of
the RG map.) 
Proposition \ref{rgflow} in the appendix shows the following.
If we start at $(\alpha,B)$ with $\alpha>\alpha_c(B)$
then the RG trajectory will flow into the region $\alpha>1 - 2\delta$, 
and if $\alpha<\alpha_c(B)$ it will flow into the region $\alpha< 2\delta$. 
If we start on the stable manifold, i.e., at some $(\alpha_c(B),B)$,
then the RG trajectory flows to the $(1/2,0)$ fixed point. 

\begin{remark} The $(1/2,0)$ fixed point describes the direct mixture of the $+$ and $-$ phases. Since any point on the stable manifold flows to the $(1/2,0)$ fixed point, there should be two translationally invariant Gibbs states at any point on the stable manifold. In this work we won't prove this, but we may still intuitively refer to the stable manifold as the phase coexistence surface.
\end{remark}

\subsection{The free energy}

Recall that $\delta$ was fixed early on so that the free energy is analytic for $\alpha$ close to 0 or 1 and $B$ small, specifically on the set $\Omega$ defined in Eq.~\eqref{deltachoice}.

We choose $\eps$ sufficiently small so that all the results about the RG map from the previous subsection apply. We also require $\eps \le \delta$. This
insures that when we iterate the RG map with $\alpha \neq \alpha_c(B)$
we will end up in $\O$.

Let $f(\alpha,B,L)$ be the free energy per site for the tensor network
with $L^2$ sites and the tensor
\begin{equation}
	\alpha A^{(+)} + (1-\alpha) A^{(-)} +B.
	\label{initial_tensor}
\end{equation}

\begin{theorem}
	For $\epsilon$ sufficiently small, if $\|B\|<\epsilon/2$ and
	$\alpha \in [0,1]$ then the infinite volume limit of the free energy
	\begin{eqnarray}
		f(\alpha,B) = \lim_{n \to \infty} f(\alpha,B,4^n)
	\end{eqnarray} 
	exists and is continuous.
	It is a real-analytic function for $\alpha \neq \alpha_c(B)$. 
	\label{freethm}
\end{theorem}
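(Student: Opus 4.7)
The plan is to follow the template of Proposition~\ref{HTfree}, iterating the partition function identity
\[
f(\alpha,B,4^{n+1}) = \tfrac{1}{16}\bigl[g(\alpha,B) + f(\alpha_1, B_1, 4^n)\bigr]
\]
to obtain, with $(\alpha_j, B_j) = R^j(\alpha, B)$,
\[
f(\alpha, B, 4^{n+1}) = \sum_{j=1}^{n} 16^{-j}\, g(\alpha_{j-1}, B_{j-1}) \,+\, 16^{-n}\, f(\alpha_n, B_n, 4).
\]
Theorem~\ref{DRthm}, the stable manifold theorem (Proposition~\ref{rgflow} of the appendix), and the choice $\epsilon \le \delta$ provide the dynamical picture: for $\alpha \neq \alpha_c(B)$ the trajectory reaches $\Omega$ in some finite number $N = N(\alpha, B)$ of steps, whereas for $\alpha = \alpha_c(B)$ the iterates stay in the RG domain for every $n$ and converge to the low-T fixed point $(1/2, 0)$.

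For existence of the limit, I would treat the two cases separately. Off the stable manifold, once $(\alpha_N, B_N) \in \Omega$ I would invoke Proposition~\ref{HTfree} to secure existence and analyticity of $f(\alpha_N, B_N) := \lim_{m \to \infty} f(\alpha_N, B_N, 4^m)$, and then pass to the limit in the iterated identity to get
\[
f(\alpha, B) \;=\; \sum_{j=1}^{N} 16^{-j}\, g(\alpha_{j-1}, B_{j-1}) \,+\, 16^{-N}\, f(\alpha_N, B_N).
\]
On the stable manifold, boundedness of $g$ (Theorem~\ref{gthm}) makes $\sum_{j=1}^\infty 16^{-j} g(\alpha_{j-1}, B_{j-1})$ absolutely convergent, while $16^{-n} f(\alpha_n, B_n, 4) \to 0$ because $f(\cdot, \cdot, 4)$ is bounded on the relevant domain (directly from the $4\times 4$ partition function being a polynomial, plus Lemma~\ref{2x2} once $\|B_n\|$ is small enough).

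Real-analyticity at any $(\alpha_0, B_0)$ with $\alpha_0 \neq \alpha_c(B_0)$ follows by noting that $N$ can be chosen locally constant on a real neighborhood of $(\alpha_0, B_0)$: the condition $(\alpha_N, B_N) \in \Omega$ is open and $R$ is continuous. On such a neighborhood each $(\alpha_{j-1}, B_{j-1})$ depends real-analytically on $(\alpha, B)$ by Theorem~\ref{LTmap}, $g$ is real-analytic by Theorem~\ref{gthm}, and $f$ restricted to $\Omega$ is real-analytic by Proposition~\ref{HTfree}; a finite composition and sum of such functions is real-analytic.

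The main obstacle is continuity across the stable manifold, where $N(\alpha, B) \to \infty$ as $\alpha \to \alpha_c(B)$. Writing, with $(\alpha_j^c, B_j^c) = R^j(\alpha_c(B), B)$,
\[
f(\alpha, B) - f(\alpha_c(B), B) = \sum_{j=1}^{N} 16^{-j}\bigl[g(\alpha_{j-1}, B_{j-1}) - g(\alpha_{j-1}^c, B_{j-1}^c)\bigr] - \sum_{j=N+1}^{\infty} 16^{-j} g(\alpha_{j-1}^c, B_{j-1}^c) + 16^{-N} f(\alpha_N, B_N),
\]
the last two pieces are $O(16^{-N})$ by boundedness of $g$ and of $f$ on $\Omega$. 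For the first sum, Theorem~\ref{DRthm} shows that the RG expands along $\alpha$ at rate at most $r'(\alpha) \le 16$ and contracts along $B$ by a factor $l < 1$, so a Gronwall-type estimate along the trajectory gives $|\alpha_j - \alpha_j^c| + \|B_j - B_j^c\| \lesssim 16^{j}\,|\alpha - \alpha_c(B)|$ while the iterates remain in the RG domain. Combined with the Lipschitz bound on $g$ from Theorem~\ref{gthm}, each summand is $O(|\alpha - \alpha_c(B)|)$, so the partial sum is $O\bigl(N\,|\alpha - \alpha_c(B)|\bigr) = O\bigl(|\alpha - \alpha_c(B)|\,\log(1/|\alpha - \alpha_c(B)|)\bigr)$, which vanishes as $\alpha \to \alpha_c(B)$. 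Continuity along the stable manifold in $B$ then follows from smoothness of $\alpha_c$ (provided by the stable manifold theorem) together with uniform convergence of the defining series in a neighborhood of the critical trajectory.
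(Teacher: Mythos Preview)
Your proposal is correct and, for existence of the limit and real-analyticity off the stable manifold, essentially identical to the paper's argument (the paper packages your ``$N$ locally constant'' observation as an explicit open cover by sets $W_{n_3}$, but the content is the same).

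The continuity argument at the stable manifold differs in an interesting way. The paper does something simpler: it fixes $N$ depending only on the error tolerance $\gamma$ (via $16^{-N}\max\{\|f\|_\infty,\|g\|_\infty\}<\gamma/4$), then chooses the neighborhood small enough that $N_3(\alpha,B)\ge N$, and compares only the first $N$ terms of the two series by ordinary continuity of the finite composition $R^N$ and of $g$; the remaining terms are absorbed into the tail bound $O(16^{-N})$. No Gronwall estimate, no relation between $N$ and $|\alpha-\alpha_c(B)|$, and joint continuity in $(\alpha,B)$ falls out directly. Your route instead tracks the trajectory all the way to the actual exit time $N=N(\alpha,B)$, uses $\|DR^j\|\le M\cdot 16^j$ (this is Lemma~\ref{derivlemma}(a) in the paper) to control $\|(\alpha_j,B_j)-(\alpha_j^c,B_j^c)\|$, and then invokes the geometric escape rate from the stable manifold to get $N=O(\log(1/|\alpha-\alpha_c(B)|))$. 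This yields the sharper quantitative bound $|f(\alpha,B)-f(\alpha_c(B),B)|=O\bigl(|\alpha-\alpha_c(B)|\log(1/|\alpha-\alpha_c(B)|)\bigr)$, at the cost of needing more dynamical input.

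One point to tighten: as written you establish continuity in $\alpha$ at fixed $B$ and then separately continuity of $B\mapsto f(\alpha_c(B),B)$ along the manifold. To conclude \emph{joint} continuity at $(\alpha_c(B'),B')$ you need the first estimate to hold uniformly for $B$ in a neighborhood of $B'$; your Gronwall constants are indeed uniform (they come from $\|DR\|$ bounds on the full RG domain), so this works, but it should be stated. The paper's fixed-$N$ argument sidesteps this by treating $(\alpha,B)\to(\alpha_c(B'),B')$ in one stroke.
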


\begin{proof}
	After applying the RG map from Theorem \ref{LTmap} to the tensor in Eq.~\eqref{initial_tensor}
	we get a new tensor 
	\begin{eqnarray}
		\alpha^\prime A^{(+)} + (1-\alpha^\prime) A^{(-)} +B^\prime
		\label{tensor}
	\end{eqnarray}
	and the normalization factor $\normfactor  = \exp(g(\alpha,B))$.
	The partition function is preserved and the number of sites is reduced
	by a factor of $4^2=16$, leading to (see Eq.~\eqref{RGpropertyLT})
	\begin{equation}
		f(\alpha,B,L) = \frac{1}{16} f(\alpha^\prime,B^\prime,L/4)
		+ \frac{1}{16} g(\alpha,B) .
	\end{equation}
	Note that $g(\alpha,B)$ does not depend on $L$. 
	We iterate the RG map and let $(\alpha_j,B_j)$ be the resulting trajectory with
	$(\alpha_0,B_0)=(\alpha,B)$. By \eqref{dra}, the RG map is contracting in the $B$ direction, in particular $\|B_j\|\le l^j \|B_0\|<\eps/2$.
	\footnote{Since $R(\alpha,0)=(r(\alpha),0)$,
		$B'=\int_0^B \, \partial_{\overline{B}} B'(\alpha,\overline{B}) \, d \overline{B}$. So $\|B'\| \le l \|B\|$.}
	We only iterate the map as long as
	$2 \delta < \alpha < 1 - 2 \delta$. So for $n$ such that
	$2 \delta < \alpha_{n-1} < 1- 2\delta$ we have
	\begin{eqnarray}
		f(\alpha,B,L) = f(\alpha_n,B_n,L4^{-n}) 16^{-n}
		+ \sum_{j=1}^n g(\alpha_{j-1},B_{j-1})  16^{-j}.
		\label{iterate}
	\end{eqnarray}
	We define $N_3(\alpha,B)$ to be the first $n$ such that
	$\alpha_n < 3 \delta$ or $\alpha_n > 1-3 \delta$, i.e., the first $n$ such that
	$(\alpha_n,B_n) \in \O$. $N_3(\alpha,B)$ is finite for $\alpha,B$ not on the stable manifold.
	Since $N_3(\alpha,B)$ is integer valued, it will be discontinuous at some $(\alpha,B)$.
	For this reason we need the region where Theorem \ref{DRthm} holds and the region
	where we know the free energy is analytic to overlap.
	This is the reason for the factors of $3$ in the definition of $\O$. 
	
	For every $n_3=0,1,2,\ldots$ consider the set of those real $(\alpha,B)$ in the domain of the RG map for which $(\alpha_{n_3},B_{n_3})\in\Omega$ while $\alpha_{n_3-1}\in (2\delta,1-2\delta)$ (we only impose the latter condition if $n_3>0$):
	\beq
	W_{n_3}=\{ (\alpha,B): \alpha\in(2\delta,1-2\delta), \|B\|<\eps, \alpha_{n_3}\in (0,3\delta)\cup (1,1-3\delta), \alpha_{n_3-1}\in (2\delta,1-2\delta) \}.
	\eeq
	Since the RG map is continuous, each $W_{n_3}$ is an open set. Also we have that the union of each $W_{n_3}$ covers everything but the stable manifold:
	\beq
	\label{cover}
	\bigcup_{n_3=0}^\infty W_{n_3} = \{ (\alpha,B): \alpha\in(2\delta,1-2\delta), \|B\|<\eps, \alpha\ne\alpha_c(B)\}.
	\eeq
	
	For $(\alpha,B) \in W_{n_3}$ we can let $n=n_3$ in Eq.~\eqref{iterate}.
	Since $(\alpha_{n_3},B_{n_3}) \in \O$, we know that
	the limit as $L \rightarrow \infty$ of $f(\alpha_{n_3},B_{n_3},L4^{-n_3})$ exists.
	Denote it by $f(\alpha_{n_3},B_{n_3})$. Hence Eq.~\eqref{iterate} implies that
	the limit as $L \rightarrow \infty$ of $f(\alpha,B,L)$ also exists.
	Denote it by $f(\alpha,B)$.  We have
	\begin{eqnarray}
		f(\alpha,B) = f(\alpha_{n_3},B_{n_3}) 16^{-n_3}
		+ \sum_{j=1}^{n_3} g(\alpha_{j-1},B_{j-1})  16^{-j}\qquad\text{for }(\alpha,B)\in W_{n_3}.
		\label{n3}
	\end{eqnarray}
	$f(\alpha_{n_3},B_{n_3})$ and the $g(\alpha_{j-1},B_{j-1})$ in the finite sum
	are real-analytic functions of $(\alpha,B)$ in $W_{n_3}$. 
	So this equation proves the free energy is real-analytic in $W_{n_3}$. Since the union of $W_{n_3}$ covers everything but the stable manifold, Eq.~\eqref{cover}, the free energy is real-analytic off of the stable manifold. 
	
	If our initial tensor is on the stable manifold, i.e., $\alpha = \alpha_c(B)$, then Eq.~\eqref{iterate} holds for
	all $n$. Setting $L=4^{n+1}$ in Eq.~\eqref{iterate} we have
	\begin{eqnarray}
		f(\alpha,B,4^{n+1}) = f(\alpha_n,B_n,4) 16^{-n}
		+ \sum_{j=1}^n g(\alpha_{j-1},B_{j-1})  16^{-j}.
	\end{eqnarray}
	Analogously to Lemma \ref{2x2}, it is easy to see that $f(\alpha_n,B_n,4)$ is bounded for sufficiently small $\eps$, because $(\alpha_n,B_n)$ is very close to $(1/2,0)$). Also $g$ is bounded (Theorem \ref{gthm}).
	This shows that the limit of $f(\alpha,B,4^{n+1})$ as $n \rightarrow \infty$ exists and is equal to 
	\begin{eqnarray}
		f(\alpha,B) = \sum_{j=1}^\infty g(\alpha_{j-1},B_{j-1})  16^{-j}.
		\label{stableinf}
	\end{eqnarray}
	Note that this also shows that $f(\alpha,B)$ is bounded on the stable manifold. 
	We define $N_3=\infty$ for $\alpha=\alpha_c(B)$. So Eq.~\eqref{n3} holds
	both on and off the stable manifold with the understanding that when
	$N_3=\infty$ we take $16^{-N_3}$ to be $0$. 
	
	Finally we must prove that the free energy is
	continuous at points on the stable manifold. Fix a point $(\alpha_c(B^\prime),B^\prime)$ on the stable manifold.
	Let $\gamma > 0 $. We must show there is an $\eta>0$ such that
	\begin{eqnarray}
		\|(\alpha,B)-(\alpha_c(B^\prime),B^\prime)\|<\eta \, \implies \,
		|f(\alpha,B)-f(\alpha_c(B^\prime),B^\prime)|<\gamma .
	\end{eqnarray}
	Pick $N$ such that
	\begin{eqnarray}
		16^{-N} \max \{\|f\|_\infty,\|g\|_\infty\}  < \frac{\gamma}{4}
		\label{fbound}
	\end{eqnarray}
	and make $\eta$ small enough that $N_3(\alpha,B) \ge N$.  
	We write $f(\alpha,B)$ as 
	\begin{eqnarray}
		f(\alpha,B) = f(\alpha_{N_3},B_{N_3}) 16^{-N_3}
		+ \sum_{j=1}^{N} g(\alpha_{j-1},B_{j-1})  16^{-j}
		+ \sum_{j=N+1}^{N_3} g(\alpha_{j-1},B_{j-1}) 16^{-j},
	\end{eqnarray}
	and write $f(\alpha_c(B^\prime),B^\prime)$ as 
	\begin{eqnarray}
		f(\alpha_c(B^\prime),B^\prime) = f(\alpha^\prime_N,B^\prime_N) 16^{-N}
		+ \sum_{j=1}^{N} g(\alpha^\prime_{j-1},B^\prime_{j-1})  16^{-j}.
	\end{eqnarray}
	Here $(\alpha^\prime_j,B^\prime_j)$ denotes the trajectory starting from $(\alpha_c(B^\prime),B^\prime)$.
	Using Eq.~\eqref{fbound} we have
	\begin{align}
		| f(\alpha,B) - f(\alpha_c(B^\prime),B^\prime)| &\le 2 \frac{\gamma}{4} 
		+ \Bigl| \sum_{j=1}^N [g(\alpha_{j-1},B_{j-1}) - g(\alpha^\prime_{j-1},B^\prime_{j-1})] 16^{-j}\Bigr| \\
		&\hspace{1cm}+ \Bigl| \sum_{j=N+1}^{N_3} g(\alpha^\prime_{j-1},B^\prime_{j-1}) 16^{-j}\Bigr|. \nn
	\end{align}
	Each $g(\alpha_{j-1},B_{j-1})$ is a continuous function of $(\alpha,B)$. So since $N$ is fixed we can
	make the first sum $<\gamma/4$ by making $\eta$ sufficiently small. 
	The second sum is bounded by $16^{-N} \|g\|_\infty$ and so is $<\gamma/4$.
\end{proof}

\begin{remark} \label{bounded}
	Eq.~\eqref{n3} can be used to show that any derivative of the free energy is bounded on any domain having positive distance from the stable manifold. This will be useful in the proof of Lemma \ref{lemma3} below. To see this:
	\begin{enumerate}
		\item
		Note that \eqref{n3} implies that on $W_{n_3}$ any derivative of the free energy is bounded by a constant depending only on $n_3$. 
		\item Let us understand the shape of $W_{n_3}$. Let us focus on the case $\alpha>\alpha_c(B)$. The number of steps it takes to reach $\Omega$ depends on $\alpha - \alpha_c(B)$. For $\alpha-\alpha_c(B)$ small, this vertical distance from the stable manifold increases by factor $\approx16$ during each RG step (see the argument leading to Eq.~\eqref{geomgrowth} in Appendix \ref{stable}). It follows that $W_{n_3}$ consists from all points $(\alpha,B)$ where
		\beq
		\alpha - \alpha_c(B)\in ( u_{n_3}(B), v_{n_3}(B) ),
		\eeq
		where $u_{n_3}(B)$, $v_{n_3}(B)$ are smooth functions of order $16^{-n_3}$ (for large $n_3$). 
	\end{enumerate}
	Combining 1 and 2 we get the statement at the beginning of the remark. 
	
	It follows from the Pirogov-Sinai theory that much more is true: the free energy allows $C^\infty$ continuation to the closed regions $\alpha\le \alpha_c(B)$ and $\alpha\ge \alpha_c(B)$. In other words, any  derivative of the free energy is bounded for $\alpha \ne \alpha_c(B)$ and has finite limits as $\alpha \to \alpha_c(B)^\pm$. In this paper we will only establish this for one particular derivative of the free energy, the magnetization $m(\alpha,B)$, see Theorem \ref{magthm}.
	
	We note in passing that free energy does not allow analytic continuation from $\alpha\le \alpha_c(B)$ and $\alpha\ge \alpha_c(B)$ across the first-order phase transition surface, see \cite{Isakov1984,Friedli2004}.
\end{remark}

\subsection{The magnetization}

Let
\begin{eqnarray}
	m(\alpha,B,L)= \partial_\alpha f(\alpha,B,L)\,.
\end{eqnarray}
We will refer to this quantity as the magnetization. If our tensor
network comes from the Ising model, then this quantity is 
not the usual magnetization for the Ising model. However, as discussed in
section \ref{LTRG} (see Eq.~\eqref{twofs}), it has the same continuity
and analyticity properties as the Ising model magnetization. 

From the explicit $m(\alpha,0)$ expression, Eq.~\eqref{ma}, we know that
this magnetization for $B=0$  has a jump at $\alpha=1/2$, namely $m(1/2^\pm,0)=\pm 2$.
We will now show for $\|B\|<\eps/2$ that an infinite volume limit of the magnetization exists and that this infinite volume magnetization is a continuous function of
$(\alpha,B)$ if we are not on the stable manifold and still has a jump as we cross
the stable manifold, close in magnitude to the $B=0$ jump.

\begin{theorem}
	\label{magthm}
	For $\|B\|<\epsilon/2$ and $\alpha \in [0,1]$ with
	$\alpha \neq \alpha_c(B)$, the infinite volume limit
	\begin{eqnarray}
		m(\alpha,B) = 
		\lim_{L \to \infty} m(\alpha,B,L) 
	\end{eqnarray} 
	exists. It is real-analytic when $\alpha \neq \alpha_c(B)$. 
	The one-sided limits of $m(\alpha,B)$ as $\alpha \to \alpha_c(B)$
	from above or below exist and are not equal. They are given by 
	\begin{eqnarray}
		m(\alpha_c(B)^+,B) &=& d(B) m(1/2^+,0) + \delta m(B)\ ,  \\
		m(\alpha_c(B)^-,B) &=& d(B) m(1/2^-,0) + \delta m(B)\ , 
	\end{eqnarray}
	where $\delta m(B)=O(\|B\|)$ and $d(B)=1+O(\|B\|)$.
	($d(B)$ and $\delta m(B)$ will be defined explicitly later in
	Eqs.~\eqref{dbdef} and \eqref{deltamag}.)
\end{theorem}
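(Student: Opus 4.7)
The plan is to differentiate the RG formula \eqref{n3} for the free energy in $\alpha$, propagate the $\alpha$-derivative through the RG trajectory via the chain rule, and extract the one-sided limits at $\alpha_c(B)$ from a shadowing analysis near the stable manifold. Applying $\partial_\alpha$ to \eqref{n3} on each open set $W_{n_3}$, and writing $\partial_\alpha g(\alpha_j,B_j)=g'(\alpha_j,B_j)a_j+\langle\nabla_B g(\alpha_j,B_j),b_j\rangle$ with $a_j:=\partial_\alpha\alpha_j$ and $b_j:=\partial_\alpha B_j$, one obtains a finite sum plus a boundary term evaluated at $(\alpha_{n_3},B_{n_3})\in\Omega$. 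Since $f$ and $g$ are real-analytic with bounded derivatives on $\Omega$ (Remark \ref{bounded} and Theorem \ref{gthm}), this gives existence of $m(\alpha,B)=\lim_L m(\alpha,B,L)$ and real-analyticity off the stable manifold, proving the first two claims. The derivatives $(a_j,b_j)$ evolve under the linearization $DR(\alpha_j,B_j)$; by Theorem \ref{DRthm} the off-diagonal entries are $O(\|B\|)$ and $\varepsilon^{1/2}O(\|B\|)$, $\|\partial_B B'\|\le l<1$ is contracting, and the driver is $\partial_\alpha\alpha'=r'(\alpha)+O(\|B\|^2)$, so $b_j$ stays small and $16^{-j}a_j$ is uniformly bounded along trajectories near $(1/2,0)$.

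For the one-sided limits, let $(\alpha_j^*,B_j^*)$ be the stable-manifold trajectory starting at $(\alpha_c(B),B)$, converging geometrically to $(1/2,0)$, and let $(a_j^*,b_j^*)$ be the corresponding derivatives. For $\alpha$ just above $\alpha_c(B)$, the iterates shadow $(\alpha_j^*,B_j^*)$ up to an escape time $N\to\infty$ as $\alpha\to\alpha_c(B)^+$, after which the orbit enters the $+$ basin and reaches $\Omega$ at time $n_3$. I split the RG formula for $m$ at $j=N$: the shadowing sum converges in this limit to
\beq\label{deltamag}
\delta m(B):=\sum_{j=1}^\infty 16^{-j}\bigl[g'(\alpha_{j-1}^*,B_{j-1}^*)a_{j-1}^*+\langle\nabla_B g(\alpha_{j-1}^*,B_{j-1}^*),b_{j-1}^*\rangle\bigr]\,,
\eeq
convergence being ensured by $g'(1/2,0)=0$, $\nabla_B g(1/2,0)=0$ and geometric decay of the trajectory; since the $B=0$ stable manifold is the single point $(1/2,0)$, continuity in $B$ forces $\delta m(0)=0$ and hence $\delta m(B)=O(\|B\|)$. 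The remaining escape+boundary terms, by re-applying the telescoping RG identity at the base point $(\alpha_N,B_N)$, collapse to $16^{-N}[m(\alpha_N,B_N)a_N+\langle\partial_B f(\alpha_N,B_N),b_N\rangle]$, and since $b_N=O(l^N)$ the second piece is negligible. Defining
\beq\label{dbdef}
d(B):=\lim_{j\to\infty}16^{-j}a_j^*\,,
\eeq
the Nienhuis--Nauenberg identity $r'(1/2)=16$ makes this limit finite: the log-ratios $\ln[\partial_\alpha\alpha'(\alpha_j^*,B_j^*)/16]=O(|\alpha_j^*-1/2|+\|B_j^*\|^2)$ are summable. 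At $B=0$ the trajectory is frozen at $(1/2,0)$ so $d(0)=1$, hence $d(B)=1+O(\|B\|)$ by continuity. On the escape side the orbit approaches the unstable manifold $\{B=0\}$ on which the explicit formula \eqref{ma} identifies the escape factor as $m(1/2^+,0)$, yielding the claimed $m(\alpha_c(B)^+,B)=d(B)m(1/2^+,0)+\delta m(B)$; the minus case is symmetric.

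The main technical hurdle is in this last escape step: one must justify the factorization $16^{-N}a_N\cdot m(\alpha_N,B_N)\to d(B)m(1/2^\pm,0)$ as a controlled double limit, exchanging the shadowing-length limit $N\to\infty$ with the shrinking of the escape neighborhood towards $(1/2,0)$, and showing that in this double limit $(\alpha_N,B_N)$ is attracted to the unstable manifold $\{B=0\}$ so that the $B=0$ magnetization analysis of \eqref{ma} applies. The N--N identity $r'(1/2)=16$ is indispensable throughout: any other eigenvalue would force $16^{-j}a_j^*$ to either vanish or diverge, producing no jump in $m$ (continuity) or a divergent derivative respectively, precisely matching the heuristic at the end of Section \ref{LTRG}.
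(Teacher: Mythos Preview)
Your overall architecture matches the paper's proof closely: differentiate \eqref{n3} on each $W_{n_3}$ for existence and analyticity, then for the one-sided limits decompose the magnetization sum into a stable-manifold (shadowing) contribution that defines $\delta m(B)$ and an escape/boundary contribution that produces $d(B)\,m(1/2^\pm,0)$, with the Nienhuis--Nauenberg identity $r'(1/2)=16$ guaranteeing that $16^{-j}a_j^*$ has a finite nonzero limit. Your definitions \eqref{deltamag} and \eqref{dbdef} coincide with the paper's, and you correctly flag the double-limit in the escape factorization as the delicate step (the paper handles it via a seven-term $E_1,\dots,E_7$ decomposition in Lemma~\ref{lemma3}).

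There is, however, one genuine error. You claim ``since $b_N=O(l^N)$ the second piece is negligible.'' This is false: $b_j=\partial_\alpha B_j$ is \emph{not} contracted by $l^j$, because at each step it is fed by $\partial_\alpha B'(\alpha_j,B_j)\,a_j$ with $a_j$ growing like $16^j$. What is true (Lemma~\ref{derivlemma}(b)) is only that $16^{-N}b_N=O(\|B\|)+O(l^N)$, which does \emph{not} vanish as $N\to\infty$. Consequently the term $16^{-N}\langle\nabla_B f(\alpha_N,B_N),\,b_N\rangle$ cannot be discarded by smallness of $b_N$; instead you must exploit smallness of $\nabla_B f$. The paper does this by splitting $\nabla_B f(\alpha_N,B_N)=\nabla_B f(\alpha_N,0)+[\nabla_B f(\alpha_N,B_N)-\nabla_B f(\alpha_N,0)]$: the first piece is made small by iterating RG further and using $\nabla_B g(\cdot,0)=0$ from Theorem~\ref{gthm}, and the second is $O(\|B_N\|)\to 0$. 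Without this, your escape analysis does not close.

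A minor point: ``$\delta m(0)=0$ by continuity implies $\delta m(B)=O(\|B\|)$'' and ``$d(0)=1$ by continuity implies $d(B)=1+O(\|B\|)$'' are non sequiturs as stated---continuity gives $o(1)$, not $O(\|B\|)$. The actual $O(\|B\|)$ bounds require the quantitative estimates on $g'$, $\nabla g$, and $DR$ along the stable-manifold trajectory that the paper carries out in Lemmas~\ref{lemma2}(b) and \ref{lemma3}(a).
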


\begin{remark}
	One might expect that we should have $\delta m(B)=O(\|B\|^2)$ and $d(B)=1+O(\|B\|^2)$ since any natural way
	to obtain a scalar from the tensor $B$ will be $O(\|B\|^2)$. With significantly more work these stronger
	estimates can in fact be proven, but we will not pursue this here. 
\end{remark}

\begin{proof}
	Recall that $N_3(\alpha,B)$ is the smallest $n$ such that
	$(\alpha_n,B_n) \in \O$ if  $\alpha \neq \alpha_c(B)$, and 
	$N_3(\alpha,B) = \infty$ if $\alpha = \alpha_c(B)$.
	
	Suppose that $(\alpha,B)$ is not on the stable manifold. 
	We take $n=N_3(\alpha,B)$ in  Eq.~\eqref{iterate} and differentiate this equation
	with respect to $\alpha$, keeping $n$ fixed:
	\begin{eqnarray}
		\frac{d}{d \alpha} f(\alpha,B,L) =\frac{d}{d \alpha} f(\alpha_n,B_n,L4^{-n}) 16^{-n}
		+ \sum_{j=1}^n   \frac{d}{d \alpha} g(\alpha_{j-1},B_{j-1})  16^{-j}\ .
		\label{diterate}
	\end{eqnarray}
	We use a total derivative notation $d/d\alpha$ to emphasize that both $B_k$ and $\alpha_k$ depend on
	$\alpha$. The same argument that proved
	the $L \rightarrow \infty$ limit of the free energy exists and is real-analytic
	also proves that the limit of the magnetization exists and is real-analytic
	off the stable manifold.
	Since $(\alpha_n,B_n)$ is in $\O$ the expansion methods that can be applied in $\O$ show that
	$d/d\alpha$ and the infinite volume limit commute. 
	So the infinite volume limit of the magnetization is given by 
	\begin{eqnarray}
		m(\alpha,B) = \frac{d}{d \alpha} f(\alpha_n,B_n)
		+ \sum_{j=1}^n   \frac{d}{d \alpha} g(\alpha_{j-1},B_{j-1})  16^{-j}\ .
	\end{eqnarray}
	
	The rest of the proof is devoted to the one-sided
	limits of the magnetization as we approach the stable manifold. 
	
	If we start close to the stable manifold but not on it,
	then the trajectory starting at the initial $(\alpha,B)$ will pass close to the unstable fixed point and eventually
	move away from it towards either $(\alpha,B)=(0,0)$ or $(1,0)$. 
	We will consider the latter case which happens when the initial
	$(\alpha,B)$ has $\alpha > \alpha_c(B)$. The treatment of the
	other case is essentially identical.
	
	We will split the trajectory into two segments.
	We start by defining a small neighborhood $V$ of the unstable fixed point:
	\begin{eqnarray}
		V= \{ (\alpha,B): \|B\| < \tau_B, \quad |\alpha-1/2| < \tau_\alpha \}\ .
	\end{eqnarray}
	The quantity $\tau_B$ should not be confused with $\epsilon$. The initial $B$ 
	satisfies $\|B\|<\epsilon/2$, but it need not satisfy $\|B\|<\tau_B$.
	In the proof of Theorem \ref{magthm} we will use arbitrarily small $\tau_B$ and $\tau_\alpha$. 
	We will refer to $V$ as the unstable neighborhood. 
	
	Now consider $\alpha$ which are close enough to $\alpha_c(B)$ that the trajectory passes
	through $V$.
	Given $B$ we let $N_1=N_1(B)$ be the number of steps until the trajectory on
	the stable manifold which starts at $(\alpha_c(B),B)$ enters  $V$.
	Since $V$ is open, if we start at 
	any $(\alpha',B')$ which is close to $(\alpha_c(B),B)$, then by step $N_1$
	the trajectory will be in $V$.
	For $\alpha>\alpha_c(B_0)$, 
	let $N_2=N_2(\alpha,B)>N_1$ be the first term after the trajectory has left
	$V$. We stop the RG at this point. The $L \rightarrow \infty$ limit of Eq.~\eqref{iterate} with $n=N_2$
	gives
	\begin{eqnarray}
		f(\alpha,B) = f(\alpha_{N_2},B_{N_2}) 16^{-N_2}
		+ \sum_{j=1}^{N_2} g(\alpha_{j-1},B_{j-1})  16^{-j}\ .
		\label{n2}
	\end{eqnarray}
	
	For a trajectory starting at $(\alpha,B)$ with $\alpha>\alpha_c(B)$ we split the sum above into two sums corresponding to $(\alpha_{j-1},B_{j-1})$ belonging to the following two segments of the trajectory.
	The two segments are (see Figure \ref{fig_RG_flow}) 
	\begin{itemize}
		\item Segment 1: $j$ runs from $1$ to $N_1$.
		This corresponds to the part of the trajectory $(\alpha_{j-1},B_{j-1})$ from the starting point $(\alpha,B)$ which has not yet entered the unstable neighborhood $V$.
		\item Segment 2: $j$ runs from $N_1+1$ to $N_2$.
		This is the part of the trajectory with $(\alpha_{j-1},B_{j-1})\in V$.
	\end{itemize}
	\begin{figure}[!h]
		\includegraphics[scale=1.2]{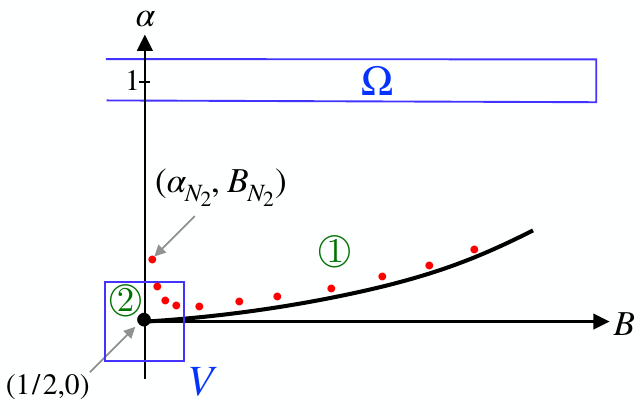}
		\caption{\leftskip=25 pt \rightskip= 25 pt 
			This illustrates the unstable neighborhood $V$,
			the region $\O$ and the two segments of the RG trajectory. The red dots
			are the RG trajectory which starts at the rightmost red dot
			slightly above the stable manifold.
			The two segments are labelled 1 and 2. 
		}
		\label{fig_RG_flow}
	\end{figure}

	Consider the number of steps in the trajectory in each of the
	two regions.
	The number of steps in segment 1 depends on $B_0$ and the size of $V$,
	but it does not depend on $\alpha$ if $\alpha$ is restricted to a small
	neighborhood of $\alpha_c(B_0)$.
	The number of steps in segment 2 is unbounded as $\alpha \to \alpha_c(B_0)$.

	We now differentiate Eq.~\eqref{n2} with respect to $\alpha$ and use the chain rule.
	To keep the notation under control, for a function $f(\tilde \alpha,\tilde B)$ we let $f^\prime$ denote the partial
	derivative with respect to the first scalar argument $\tilde \alpha$ and let $\nabla$ denote the
	partial derivatives with respect to the tensor argument $\tilde B$. This notation will be used for functions whose direct arguments are not $\alpha,B$ but instead functions of $\alpha,B$. 
	So we have
	\begin{eqnarray}
		\label{magsum}	m(\alpha,B)  & = & \sum_{j=1}^{N_2} \left[ g^\prime(\alpha_{j-1},B_{j-1}) \,
		\partial_\alpha \alpha_{j-1}(\alpha,B) 
		+ \nabla g(\alpha_{j-1},B_{j-1}) \cdot
		\partial_\alpha B_{j-1}(\alpha,B) \right] 16^{-j}
		\\
		&+& \frac{d}{d \alpha} f(\alpha_{N_2},B_{N_2}) 16^{-N_2}\ . \nn
	\end{eqnarray}
	(In the last term we have used a total derivative $d/d\alpha$ to emphasize that in
	$f(\alpha_{N_2},B_{N_2})$ both $\alpha_{N_2}$ and $B_{N_2}$ depend on $\alpha$.) 
	We split the sum in \eqref{magsum} into three pieces:
	\begin{eqnarray}
		m(\alpha,B) = m_1(\alpha,B) + m_2(\alpha,B) + m_3(\alpha,B)\ .
		\label{mthreesum}
	\end{eqnarray}
	The first two terms are the restrictions of the sum in Eq.~\eqref{magsum}
	to segments 1 and 2. The last term is
	\begin{eqnarray}
		m_3(\alpha,B)= \frac{d}{d \alpha} f(\alpha_{N_2},B_{N_2}) 16^{-N_2} \ .
	\end{eqnarray}
	Note that if we start at $(\alpha_c(B),B)$ on the stable manifold, then
	$m_1$ is still defined and has $N_1(B)$ terms, $m_2$ has infinitely many terms
	and $m_3$ does not appear.
	
	Next we state lemmas for each of the three terms $m_1,m_2,m_3$, proved below.
	
	\begin{lemma}[$m_1$] \label{lemma1} If $\|B\|<\eps/2$ then we have,
		for a fixed $\tau_B$,
		\begin{eqnarray}
			\lim_{\alpha \to \alpha_c(B)^+} m_1(\alpha,B) = m_1(\alpha_c(B),B)\ .
		\end{eqnarray}
	\end{lemma}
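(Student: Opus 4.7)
The plan is to exploit the crucial fact that $N_1 = N_1(B)$ depends on $B$ and on the fixed $\tau_B$, but \emph{not} on $\alpha$. Consequently $m_1(\alpha,B)$ is a finite sum with a number of terms independent of $\alpha$, and continuity in $\alpha$ at $\alpha_c(B)$ will follow from continuity of each individual summand together with the standard fact that finite sums of continuous functions are continuous. This is the structural reason why $m_1$ is much easier to handle than the infinite tail $m_2$, where cancellations between infinitely many terms will be essential.

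First I would assemble the relevant analyticity/continuity ingredients. By Theorem~\ref{LTmap} the RG map $R(\alpha,B)=(\alpha',B')$ is analytic on $\{\alpha\in\Pi,\,\|B\|<\eps\}$. Iterating, each $(\alpha_j,B_j)$ and each of the partial derivatives $\partial_\alpha \alpha_j$, $\partial_\alpha B_j$ is an analytic function of the initial $(\alpha,B)$ on any open set where the first $j$ iterates stay inside the domain. By Theorem~\ref{gthm}, $g(\alpha,B)=\ln\nu(\alpha,B)$ is analytic on the domain \eqref{domainLT}, and $g'$, $\nabla g$ are continuous there.

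The one nontrivial verification is that, for $\alpha$ sufficiently close to $\alpha_c(B)$ from above, all iterates $(\alpha_{j-1},B_{j-1})$ appearing in $m_1$ (that is, $j=1,\ldots,N_1$) lie in a common compact subset of the RG map's analyticity domain. Here I use the definition of $N_1$: the trajectory starting on the stable manifold at $(\alpha_c(B),B)$ lies outside the open neighborhood $V$ for $j<N_1$ and enters $V$ only at step $N_1$, so its first $N_1$ iterates lie in a compact set $K\subset\{\alpha\in(2\delta,1-2\delta),\,\|B\|<\eps/2\}$. Since $R$ is continuous and we iterate only a fixed finite number of times, there is a neighborhood $U$ of $(\alpha_c(B),B)$ in the half-space $\alpha\ge\alpha_c(B)$ such that any trajectory starting in $U$ has its first $N_1$ iterates in a slightly enlarged compact set still contained in the analyticity domain.

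Given these ingredients, each summand
\beq
\bigl[g'(\alpha_{j-1},B_{j-1})\,\partial_\alpha\alpha_{j-1}+\nabla g(\alpha_{j-1},B_{j-1})\cdot\partial_\alpha B_{j-1}\bigr]\,16^{-j}
\eeq
is a continuous function of $(\alpha,B)\in U$, so the finite sum $m_1(\alpha,B)=\sum_{j=1}^{N_1}(\cdots)$ is continuous on $U$. Taking $\alpha\to\alpha_c(B)^+$ along $U$ yields the claimed equality; in fact the argument gives two-sided continuity at $\alpha_c(B)$. There is no substantial obstacle here: the entire argument rests on the $\alpha$-independence and finiteness of $N_1$, which eliminates any delicate convergence issues. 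Those will instead appear in the proofs of the companion lemmas for $m_2$ and $m_3$, where the discontinuity actually originates.
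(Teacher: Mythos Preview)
Your proposal is correct and takes essentially the same approach as the paper: the paper's proof is a single sentence noting that for fixed $\tau_B$ the sum $m_1$ has a fixed finite number of terms, each continuous in $\alpha$. Your version simply fleshes out the domain verification (that the first $N_1$ iterates stay inside the analyticity region for $\alpha$ near $\alpha_c(B)$), which the paper leaves implicit.
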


	\begin{lemma}[$m_2$] \label{lemma2}
		
		(a) 
		If $\|B\|<\eps/2$ and $\gamma>0$ then for all sufficiently small $\tau_\alpha>0$ and $\tau_B>0$ 
		we have 
		\beq
		|m_2(\alpha,B)|\le \gamma
		\label{m2aB}
		\eeq
		for any $\alpha$ close enough to $\alpha_c(B)$ that the RG trajectory passes through $V$,
		including $\alpha=\alpha_c(B)$.
		(If $\alpha=\alpha_c(B)$  the number of terms in segment 2 is
		infinite since the trajectory will converge to the unstable fixed point.)
		
		(b) For $\|B\|<\eps/2$ let $(\alpha_j,B_j)$ be the trajectory on the stable
		manifold which starts at $(\alpha_c(B),B)$.
		Since the trajectory starts on the stable manifold, it 
		stays on the stable manifold, and so $\alpha_j=\alpha_c(B_j)$.
		Define 
		\begin{align}
			\delta m(B) &= \sum_{j=1}^\infty [ g^\prime(\alpha_c(B_{j-1}),B_{j-1}) \,
			\partial_\alpha \alpha_{j-1}(\alpha_c(B),B) 
			+ \nabla g(\alpha_c(B_{j-1}),B_{j-1}) \cdot
			\partial_\alpha B_{j-1}(\alpha_c(B),B)] 16^{-j}
			\label{deltamag}\\
			&\equiv m_1(\alpha_c(B),B)+ m_2(\alpha_c(B),B)\ .\nn
		\end{align}
		Then the series converges and $\delta m(B)$ is $O(\|B\|)$.
		
	\end{lemma}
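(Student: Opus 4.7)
My plan is to exploit two facts from Theorem \ref{gthm} ($g'(1/2,0)=0$ and $\nabla g(\alpha,0)=0$ on $(\delta,1-\delta)$), so both derivatives vanish at the low-T fixed point and are small throughout the unstable neighborhood $V$. Together with the stable-manifold linearization implied by Theorem \ref{DRthm} (which decouples the RG map near $(1/2,0)$ into a $16\times$ expansion in the direction $u:=\alpha-\alpha_c(B)$ and a contraction with norm $\le l$ in $B$, where $l=O(\eps^{1/2})$ can be made arbitrarily small by shrinking $\eps$), this will control all the sums. I will abbreviate $D:=\partial_\alpha u_{N_1}$, noting $D\le 16^{N_1}$ since $r'\le 16$ on $[0,1]$, so $D\,16^{-N_1}\le 1$.

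For part (a), on the segment $j\in\{N_1+1,\ldots,N_2\}$ the linearization gives $u_{j-1}\approx 16^{j-1-N_1}u_{N_1}$, $\partial_\alpha u_{j-1}\approx D\,16^{j-1-N_1}$, $\|B_{j-1}\|\le l^{j-1-N_1}\tau_B$, and the exit condition $|u_{N_2}|\gtrsim\tau_\alpha$ forces $|u_{N_1}|\lesssim\tau_\alpha\,16^{-(N_2-N_1)}$. Taylor expansion of $g$ at $(1/2,0)$ yields $|g'(\alpha_{j-1},B_{j-1})|=O(|u_{j-1}|+\|B_{j-1}\|)$ and $\|\nabla g(\alpha_{j-1},B_{j-1})\|=O(\|B_{j-1}\|)$. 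The contribution of $g'\,\partial_\alpha\alpha_{j-1}\cdot 16^{-j}$ from the $|u_{j-1}|$ piece scales as $16^{2(j-1-N_1)-j}=16^{j-2N_1-2}$, a geometric series dominated by $j=N_2$; combining with $|u_{N_1}|\cdot 16^{N_2-N_1}\lesssim\tau_\alpha$ and $D\,16^{-N_1}\le 1$ gives $O(\tau_\alpha)$. The $\|B_{j-1}\|$ piece sums in $(16l)^{j-1-N_1}$ to $O(\tau_B)$. The $\nabla g\cdot\partial_\alpha B_{j-1}$ term is handled analogously: a Gronwall recursion from \eqref{drd} gives $\|\partial_\alpha B_{j-1}\|=O(\eps^{1/2}\|B_0\|(16l)^{j-2})$, and combining with $\|\nabla g\|=O(l^{j-1-N_1}\tau_B)$ and $16^{-j}$ yields a further $O(\eps^{1/2}\tau_B\|B_0\|)$. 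Altogether $|m_2|\le C(\tau_\alpha+\tau_B)$ with $C$ depending on $\eps$ and $\|B_0\|$, establishing \eqref{m2aB} once $\tau_\alpha,\tau_B$ are small enough. The same bound applies at $\alpha=\alpha_c(B)$ (where $N_2=\infty$) because every partial sum satisfies it.

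For part (b) the trajectory stays on the stable manifold, so $u_{j-1}\equiv 0$ and $\|B_{j-1}\|\le l^{j-1}\|B_0\|$. Smoothness of $\alpha_c$ with $\alpha_c(0)=1/2$ gives $|\alpha_c(B)-1/2|=O(\|B\|)$, hence both $|g'(\alpha_c(B_{j-1}),B_{j-1})|$ and $\|\nabla g(\alpha_c(B_{j-1}),B_{j-1})\|$ are $O(l^{j-1}\|B_0\|)$. Combined with $\partial_\alpha\alpha_{j-1}=O(16^{j-1})$ and the Gronwall bound $\|\partial_\alpha B_{j-1}\|=O(\eps^{1/2}(16l)^{j-1}\|B_0\|)$, each term of $\delta m(B)$ is $O((16l)^{j-1}\|B_0\|/16)$, which is summable geometrically once $\eps$ (and therefore $l$) is small enough that $16l<1$, and the total is $O(\|B\|)$.

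The main obstacle is the bookkeeping in part (a). A naive bound $|g'|\le C(\tau_\alpha+\tau_B)$ combined with $\partial_\alpha\alpha_{j-1}\sim 16^{j-1}$ exactly cancels the $16^{-j}$ weight, leaving a sum of $N_2-N_1$ terms of fixed size that diverges as $\alpha\to\alpha_c(B)^+$. Avoiding this divergence requires the sharper estimate $|g'|=O(|u_{j-1}|)$ together with the Nienhuis-Nauenberg condition $r'(1/2)=16=b^d$; only under this exact match does the resulting geometric sum collapse to $O(\tau_\alpha)$ uniformly in $N_2-N_1$. A slightly smaller eigenvalue would give an even smaller contribution, while a larger one would diverge, ruling out a bounded jump discontinuity.
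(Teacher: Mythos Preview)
Your approach is essentially correct and arrives at the right bounds, but it is considerably more intricate than the paper's proof, and contains one arithmetic slip.

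The paper bypasses your detailed tracking of $\partial_\alpha\alpha_{j-1}\approx D\,16^{j-1-N_1}$ and the Gronwall recursion for $\partial_\alpha B_{j-1}$ by invoking Lemma~\ref{derivlemma}(a) once: it gives a \emph{uniform} bound $|\partial_\alpha\alpha_{j-1}|\,16^{-j}\le M$ and $\|\partial_\alpha B_{j-1}\|\,16^{-j}\le M$ for all $j$. With the derivative factors thus absorbed into a constant, the whole problem reduces to bounding $\sum_{j=n_1+1}^{n_2}\bigl(|\alpha_{j-1}-1/2|+\|B_{j-1}\|\bigr)$. Splitting $|\alpha_{j-1}-1/2|\le|\alpha_{j-1}-\alpha_c(B_{j-1})|+|\alpha_c(B_{j-1})-1/2|$ and using $\alpha_c(B)=1/2+O(\|B\|^2)$, one gets: the $\|B_{j-1}\|$ sum is $O(\tau_B)$ by geometric decay; the $|\alpha_{j-1}-\alpha_c(B_{j-1})|$ sum is geometrically growing, hence dominated by its last term $\le\tau_\alpha+O(\tau_B^2)$. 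This is shorter than your back-substitution via the exit condition, and it never needs the precise factor $16$ in $\partial_\alpha\alpha_{j-1}$, only the inequality $r'\le 16$ that enters Lemma~\ref{derivlemma}(a). Your final paragraph therefore slightly overstates the role of the Nienhuis--Nauenberg condition in \emph{this} lemma; the exact equality $r'(1/2)=16$ is decisive in Lemma~\ref{lemma3}, not here.

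Your part~(b) has a harmless arithmetic slip: combining $|g'|=O(l^{j-1}\|B_0\|)$ with $\partial_\alpha\alpha_{j-1}=O(16^{j-1})$ and the weight $16^{-j}$ gives a term of order $l^{j-1}\|B_0\|/16$, not $(16l)^{j-1}\|B_0\|/16$, because $16^{j-1}\cdot 16^{-j}=16^{-1}$. So the series is summable with ratio $l<1$ and the extra hypothesis $16l<1$ is unnecessary. Likewise in part~(a), your ``sums in $(16l)^{j-1-N_1}$'' for the $\|B_{j-1}\|$ piece actually simplifies to a ratio $l$ once the $16^{-j}$ weight is properly absorbed.
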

	
	The term $m_3(\alpha,B)$ is where the bulk of the magnetization will appear. The trajectory at $N_2$ is very close to $B=0$, and one may naively expect that $m_3(\alpha,B)\approx m({1/2}^+,0)$. Instead we will get $m_3(\alpha,B)\approx d(B) m({1/2}^+,0)$ with $d(B)=1+O(\|B\|)$. To motivate this, consider the derivative $16^{-N_2} \partial_\alpha \alpha_{N_2-1}$.
	The derivative involves a product of factors of $\frac{1}{16}DR$ along the trajectory.
	This product will get most of its contribution from segment 1 where these factors will differ from $P_\alpha$
	by $O(\|B_j\|)$ (while in segment 2 these derivatives are very close to $P_\alpha$).
	The resulting multiplicative factor is captured by the quantity
	$d(B)$. This rough argument will be made precise in the following lemma:
	
	\begin{lemma}[$m_3$]	\label{lemma3} Let $\|B\|<\eps/2$.
		
		(a) The following limit exists:
		\begin{eqnarray}
			d(B) = \lim_{n \to \infty} 16^{-n} \partial_\alpha \alpha_{n}(\alpha_c(B),B)\ .
			\label{dbdef}
		\end{eqnarray}
		Moreover, $d(B)=1+O(\|B\|)$. 
		
		(b) If $\gamma>0$ then for all sufficiently small $\tau_\alpha>0$ and $\tau_B>0$ we have
		\begin{eqnarray}
			\label{goal3}
			\limsup_{\alpha \to \alpha_c(B)^+} |m_3(\alpha,B) - d(B) m(1/2^+,0)| < \gamma\ .
		\end{eqnarray}
	\end{lemma}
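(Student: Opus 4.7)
For part (a), I evaluate $\partial_\alpha\alpha_n$ at $(\alpha_c(B),B)$ using the chain rule. Since the stable manifold is $R$-invariant, the trajectory is $(\alpha_c(B_j),B_j)$ and $\partial_\alpha \alpha_n$ is the $(\alpha,\alpha)$-entry of the product of Jacobians $DR|_{(\alpha_c(B_j),B_j)}$ for $j=0,\ldots,n-1$, acting on the initial vector $(1,0)$. Setting $\tilde J_j:=16^{-1}DR|_{(\alpha_c(B_j),B_j)}$ and using Theorem \ref{DRthm}, $r'(1/2)=16$, and the smoothness of $\alpha_c$ with $\alpha_c(0)=1/2$ (so $|\alpha_c(B_j)-1/2|=O(\|B_j\|)$), the $(\alpha,\alpha)$ block of $\tilde J_j$ equals $1+O(\|B_j\|)$, the off-diagonal blocks have norms $O(\|B_j\|)/16$ and $\varepsilon^{1/2}O(\|B_j\|)/16$, and the $(B,B)$ block has norm $\leq l/16<1$. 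The stable-manifold trajectory contracts as $\|B_j\|\leq l^j\|B\|$. Writing $(\xi_n,\eta_n):=\tilde J_{n-1}\cdots \tilde J_0\,(1,0)$, a Neumann-type sum yields $\|\eta_n\|=O(\varepsilon^{1/2}\|B\|)$ uniformly in $n$, after which $|\xi_{n+1}-\xi_n|=O(\|B_n\|)=O(l^n\|B\|)$ is summable. Hence $\xi_n\to d(B)$ exists and $|d(B)-1|\leq\sum_n|\xi_{n+1}-\xi_n|=O(\|B\|)$.

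For part (b), expanding $m_3$ by the chain rule gives
\[
m_3(\alpha,B)=\bigl(16^{-N_2}\partial_\alpha \alpha_{N_2}\bigr)\,m(\alpha_{N_2},B_{N_2})+\bigl(16^{-N_2}\partial_\alpha B_{N_2}\bigr)\cdot \nabla_B f(\alpha_{N_2},B_{N_2}).
\]
The second summand vanishes as $\alpha\to\alpha_c(B)^+$: the same Neumann-sum analysis as in part (a), now started from $(\alpha,B)$, gives the explicit bound $\|16^{-N_2}\partial_\alpha B_{N_2}\|\leq C\varepsilon^{1/2}\|B\|\,l^{N_2-1}$, which tends to $0$ since $N_2\to\infty$; meanwhile $\|\nabla_B f\|$ is bounded at distance $\sim\tau_\alpha$ from the stable manifold by Remark \ref{bounded}.

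For the first summand, I compare the trajectory starting at $(\alpha,B)$ with the stable-manifold trajectory starting at $(\alpha_c(B),B)$. During segment 1 (up to step $N_1$) these remain close, so continuity of $R^{N_1}$ yields $16^{-N_1}\partial_\alpha \alpha_{N_1}|_{(\alpha,B)}\to 16^{-N_1}\partial_\alpha \alpha_{N_1}|_{(\alpha_c(B),B)}$, and by part (a) the latter equals $d(B)+o(1)$ once $V$ is small enough that $N_1$ is large. Inside $V$ (segment 2), the $(\alpha,\alpha)$ block of $16^{-1}DR$ is $1+O(\tau_\alpha+\tau_B)$ and the off-diagonal contributions are negligible, so multiplying through the $N_2-N_1$ steps gives $16^{-N_2}\partial_\alpha \alpha_{N_2}=d(B)+O(\tau_\alpha+\tau_B)+o(1)$. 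The exit point satisfies $\alpha_{N_2}-1/2\in[\tau_\alpha,16\tau_\alpha]$ (roughly) and $\|B_{N_2}\|\leq\tau_B$, with in fact $\|B_{N_2}\|\to 0$ as $\alpha\to\alpha_c(B)^+$. Since $\alpha_c(0)=1/2<\alpha_{N_2}$, the point $(\alpha_{N_2},0)$ lies off the stable manifold, so $m$ is analytic there by Theorem \ref{freethm}, and Remark \ref{bounded} supplies a uniform bound on $\nabla_B m$ over the relevant compact region, giving $|m(\alpha_{N_2},B_{N_2})-m(\alpha_{N_2},0)|=O(\tau_B)$. The explicit formula $m(\alpha,0)=1/\alpha$ of Eq.~\eqref{ma} then gives $|m(\alpha_{N_2},0)-m(1/2^+,0)|=O(\tau_\alpha)$. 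Combining, $\limsup|m_3-d(B)m(1/2^+,0)|\leq\bigl(|d(B)|+|m(1/2^+,0)|\bigr)\bigl(O(\tau_\alpha)+O(\tau_B)\bigr)<\gamma$ for $\tau_\alpha,\tau_B$ sufficiently small. The main obstacle is this last uniform bound on $\nabla_B m$ on the relevant exit region: one needs to differentiate the truncated RG representation \eqref{n3} and use Theorem \ref{DRthm} in a way that is uniform in the exit time $N_2$, which requires carefully tracking the geometric contraction of $B_j$ through the segment-2 Jacobian factors.
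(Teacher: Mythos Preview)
Your overall architecture matches the paper's: split $m_3$ into the $m(\alpha_{N_2},B_{N_2})\cdot 16^{-N_2}\partial_\alpha\alpha_{N_2}$ piece and the $\nabla_B f\cdot 16^{-N_2}\partial_\alpha B_{N_2}$ piece, then approximate $16^{-N_2}\partial_\alpha\alpha_{N_2}$ by $d(B)$ via segments~1 and~2, and approximate $m(\alpha_{N_2},B_{N_2})$ by $m(1/2^+,0)$ in two steps. Your treatment of the $\nabla_B f$ piece is in fact cleaner than the paper's: you observe directly that $\|16^{-N_2}\partial_\alpha B_{N_2}\|\le C\varepsilon^{1/2}\|B\|\,l^{N_2-1}\to 0$, whereas the paper splits this into two terms $E_1,E_2$ and bounds $\nabla f(\alpha_{N_2},0)$ by a separate RG iteration.

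There is, however, one genuine gap in your segment-2 analysis. You write that each factor $16^{-1}r'(\alpha_j)=1+O(\tau_\alpha+\tau_B)$ and then ``multiplying through the $N_2-N_1$ steps'' yields a product that is $1+O(\tau_\alpha+\tau_B)$. But $N_2-N_1\to\infty$ as $\alpha\to\alpha_c(B)^+$, so a per-step error of order $\tau$ does \emph{not} give a product error of order $\tau$. What is actually needed is that $\sum_{j=N_1}^{N_2-1}|\alpha_j-1/2|$ is bounded by $O(\tau_\alpha)+O(\tau_B)$ uniformly in $N_2-N_1$. The paper proves this (their Lemma~\ref{derivlemma}(c)) by writing $|\alpha_j-1/2|\le|\alpha_c(B_j)-1/2|+|\alpha_j-\alpha_c(B_j)|$: the first piece is $O(\|B_j\|^2)$ and sums geometrically to $O(\tau_B^2)$; the second piece \emph{grows} geometrically with factor $\approx 16$ inside $V$, so its sum is dominated by the last term, which is $\le\tau_\alpha+O(\tau_B^2)$. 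Without this summability argument your product bound fails.

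Finally, the ``main obstacle'' you flag at the end is not an obstacle. You do not need to differentiate~\eqref{n3} or track segment-2 Jacobians. The exit point $(\alpha_{N_2},B_{N_2})$ lies in the fixed region $A=\{\alpha\ge 1/2+\tau_\alpha,\ \|B\|<\eta\}$, which has positive distance from the stable manifold; Remark~\ref{bounded} then gives a uniform bound on all derivatives of $f$ (hence on $\nabla_B m$) on $A$, independent of $N_2$. This immediately yields $|m(\alpha_{N_2},B_{N_2})-m(\alpha_{N_2},0)|=O(\|B_{N_2}\|)\to 0$.
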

	Note that since $(\alpha_c(B),B)$ is on the stable manifold, the
	trajectory in the definition of $d(B)$ converges to the unstable fixed point.
	
	Assuming these lemmas, the proof of Theorem \ref{magthm} is concluded as follows.
	The magnetization $m(\alpha_c(B)^+,B)$ will be a sum of two terms.
	One term comes from the limit of $m_3$ as $\tau_\alpha,\tau_B$ go to $0$.
	This will equal $d(B) m(1/2^+,0)$.
	The other term comes from $m_1$.
	Its limit as $\tau_\alpha,\tau_B$ go to $0$ will be $\delta m(B)$.
	Since $\delta m(B)$ and $d(B)$ are defined using trajectories on the stable manifold,
	they are the same when we consider the limit as $\alpha \to \alpha_c(B)^-$.
	$\delta m(B)$ is not obviously positive or negative.
	If the original model $(\alpha,B)$ is spin flip symmetric, then
	$\delta m(B)$ must be $0$.

	We split $m(\alpha,B)$ into three terms as in \eqref{mthreesum}.
	Let $\gamma>0$ and chose $\tau_\alpha,\tau_B$ as in the lemmas for $m_2$ and $m_3$.
	Then we use the triangle inequality as follows.
	\begin{align}
		|m(\alpha,B) - d(B) m(1/2^+,0) - \delta m(B)| &\le
		|m_1(\alpha,B)- \delta m(B)| \\
		&+ |m_2(\alpha,B)| + |m_3(\alpha,B) -d(B) m(1/2^+,0)| \ .\nn
	\end{align}
	The lemmas for $m_2$ and $m_3$ imply that the $\limsup$ as $\alpha \to \alpha_c(B)^+$
	of the last two terms in the right side are each bounded by $\gamma$. 
	For the first term in the right side we use the triangle inequality again.
	\begin{eqnarray}
		|m_1(\alpha,B)- \delta m(B)| \le
		|m_1(\alpha,B)-m_1(\alpha_c(B),B)| + |m_1(\alpha_c(B),B) - \delta m(B)|\ .
	\end{eqnarray}
	The lemma for $m_1$ implies that the $\limsup$ as
	$\alpha \to \alpha_c(B)^+$ of the first term in the right side is $0$.
	The second term equals $|m_2(\alpha_c(B),B)|$, and by the lemma for $m_2$ it is $<\gamma$. 
	Putting this all together we have shown that 
	\begin{eqnarray}
		\limsup_{\alpha \to \alpha_c(B)^+}
		|m(\alpha,B) - d(B) m(1/2^+,0) - \delta m(B) | \le 3 \gamma\ .
	\end{eqnarray}
	Since this holds for all $\gamma$ the $\limsup$ must be $0$. 
\end{proof}

It remains to prove the lemmas for segments $m_1,m_2,m_3$.

Let $R^n$ be the $n$-fold composition of $R$ and let $D R^n$ be its derivative. The following lemma will provide control of the derivatives
$\partial_\alpha \alpha_n$  and $\partial_\alpha B_n$. These derivatives are
contained in $D R^n$. Roughly speaking, this lemma says that $DR$ and its compositions are dominated by $r'(1/2)=16$.

\begin{lemma}
	\label{derivlemma}  \begin{itemize}
		\item[(a)] There is a constant $M$ such that 
		\begin{eqnarray}
			\|DR^n(\alpha,B)\| 16^{-n} \le M
		\end{eqnarray}  
		for all $n \ge 1$, all $\alpha \in (2\delta,1-2\delta)$ and
		all $B$ with $\|B\|<\epsilon/2$. 
		
		\item[(b)] Let $P_\alpha$ to be the one-dimensional projection $(\alpha,B) \rightarrow (\alpha,0)$.
		For any trajectory $(\alpha_k,B_k)$, $k=0,\ldots, n-1$ which starts at $(\alpha_0,B_0)$
		with $\alpha_0 \in (2\delta,1-2\delta)$ and $\|B_0\|<\epsilon/2$, we have
		\begin{eqnarray}
			\Bigl	\| 16^{-n} \prod_{k=0}^{n-1} DR(\alpha_k,B_k) - p_n P_\alpha\Bigr\| \le O(\|B_0\|) + l^n\ ,
		\end{eqnarray}
		where $p_n$ is defined by 
		\begin{eqnarray}
			p_n= 16^{-n} \prod_{j=1}^n r^\prime(\alpha_{j-1})\ .
			\label{pdef}
		\end{eqnarray}
		(Recall that $l<1$ is the constant such that
		$\| \partial_B B'(\alpha,B)\| \le  l$.)
		
		\item[(c)] If the trajectory $(\hat{\alpha}_k,\hat{B}_k)$, $k=0,\ldots, n-1$ lies entirely in $V$, then we have
		\begin{eqnarray}
			\Bigl	\| 16^{-n} \prod_{k=0}^{n-1} DR(\hat{\alpha}_k,\hat{B}_k) - P_\alpha\Bigr\| \le O(\tau_\alpha)+O(\tau_B) + l^n\ .
		\end{eqnarray}	
		
	\end{itemize}
\end{lemma}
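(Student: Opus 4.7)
The plan is to exploit the block structure of $DR$ established in Theorem \ref{DRthm}. In the orthogonal decomposition $\mathbb{C}\oplus\HLT$, write
\[
DR(\alpha_k,B_k)=\begin{pmatrix}a_k & b_k^T\\ c_k & \mathsf D_k\end{pmatrix},
\]
where by Theorem \ref{DRthm}: $a_k=r'(\alpha_k)+O(\|B_k\|^2)$, $\|b_k\|=O(\|B_k\|)$, $\|c_k\|=\epsilon^{1/2}O(\|B_k\|)$, and $\|\mathsf D_k\|\le l<1$. The footnote in the proof of Theorem \ref{freethm} gives $\|B_k\|\le l^k\|B_0\|$ along any real trajectory starting from $\alpha_0\in(2\delta,1-2\delta)$, $\|B_0\|<\epsilon/2$. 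I will write $\Pi_n:=16^{-n}\prod_{k=0}^{n-1}DR(\alpha_k,B_k)$ in block form as $\begin{pmatrix} A_n & \beta_n^T\\ \gamma_n & \mathsf \Delta_n\end{pmatrix}$ and induct on $n$.

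I prove (a) and (b) simultaneously. The identity $\Pi_{n+1}=(DR(\alpha_n,B_n)/16)\,\Pi_n$ yields four coupled block recursions, most importantly $\mathsf \Delta_{n+1}=\tfrac{1}{16}\mathsf D_n\mathsf \Delta_n+\tfrac{1}{16}c_n\beta_n^T$ and $A_{n+1}=\tfrac{a_n}{16}A_n+\tfrac{1}{16}b_n^T\gamma_n$. Starting from $\Pi_0=I$, I first bound $\mathsf \Delta_n$: since $\|\mathsf D_n\|\le l$ and $\|c_n\|=O(l^n\|B_0\|)$, a standard geometric-series argument yields $\|\mathsf \Delta_n\|\le C l^n$. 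Next, setting
\[
\eta_n:=|A_n-p_n|+\|\beta_n\|+\|\gamma_n\|,
\]
and using $|a_n/16|\le 1+O(l^{2n}\|B_0\|^2)$ together with $\|b_n\|,\|c_n\|=O(l^n\|B_0\|)$, I derive a recursion of the form
\[
\eta_{n+1}\le\eta_n\bigl(1+O(l^{2n}\|B_0\|^2)\bigr)+O(l^n\|B_0\|)\bigl(1+\|\mathsf \Delta_n\|\bigr),
\]
which by summing the geometric series gives $\eta_n\le O(\|B_0\|)$ uniformly in $n$. Together with the bound on $\mathsf \Delta_n$ this proves (b); the uniform bound $\|\Pi_n\|\le M$ in (a) then follows from $|A_n|\le p_n+\eta_n\le 1+O(\epsilon)$.

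Part (c) I would deduce from (b) applied with $\|B_0\|\le\tau_B$, supplemented by a bound $|p_n-1|\le O(\tau_\alpha)+O(\tau_B)$. Since $r'$ is symmetric about $\alpha=1/2$ with maximum $r'(1/2)=16$, one has $r'(\alpha)=16+O((\alpha-1/2)^2)$, so $|\log p_n|\le O\bigl(\sum_{k=0}^{n-1}(\alpha_k-1/2)^2\bigr)$. For a trajectory staying in $V$ for $n$ steps I split the sum into contributions from the initial phase where $|\alpha_k-\alpha_c(\hat B_k)|$ is small compared with $|\alpha_c(\hat B_k)-1/2|\le O(\|\hat B_k\|)\le O(l^k\tau_B)$ (geometric sum gives $O(\tau_B^2)$), and, if applicable, a later phase where the unstable direction dominates and $|\alpha_k-1/2|$ grows by a factor $\approx 16$ per step up to the exit value $\le\tau_\alpha$ (geometric sum gives at most $O(\tau_\alpha^2)$). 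Thus $\sum(\alpha_k-1/2)^2\le O(\tau_\alpha^2+\tau_B^2)$ and $|p_n-1|\le O(\tau_\alpha)+O(\tau_B)$; combined with (b) this yields (c).

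The main obstacle is that the individual factor $|a_n/16|$ can be arbitrarily close to $1$ when $\alpha_n$ is near $1/2$, so a naive estimate $\|\Pi_{n+1}\|\le\|DR(\alpha_n,B_n)/16\|\,\|\Pi_n\|$ would only produce exponential growth of the error rather than the required uniform-in-$n$ bound. The resolution is precisely the block structure: multiplication by the diagonal part of $DR/16$ sends the ansatz $p_nP_\alpha$ to $(a_n/16)p_n P_\alpha=p_{n+1}P_\alpha+O(\|B_n\|^2)\,P_\alpha$, an error that decays as $l^{2n}\|B_0\|^2$, while all cross-contamination between the $\alpha$ and $B$ subspaces is paired with factors $b_n$ or $c_n$ of order $\|B_n\|\le l^n\|B_0\|$. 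Summing these geometrically decaying errors keeps the total error bounded by $O(\|B_0\|)$ uniformly in $n$, which is exactly what the lemma demands.
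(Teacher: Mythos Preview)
Your approach is correct in substance and close in spirit to the paper's, but organized differently. The paper packages the main estimate into a single product lemma: writing $\tfrac{1}{16}DR(\alpha_{j-1},B_{j-1})=C_j+A_j$ with $C_j$ the block-diagonal part (so $\|C_j\|\le 1$) and $A_j$ the off-diagonal remainder (so $\|A_j\|=O(\|B_{j-1}\|)$), it invokes the elementary bound $\bigl\|\prod(C_i+A_i)-\prod C_i\bigr\|\le 2\sum\|A_i\|$ valid whenever $\|C_i\|\le 1$ and $\sum\|A_i\|\le 1$. Since $\prod C_i$ is block-diagonal with $p_n$ on the $\alpha$-block and an operator of norm $\le l^n$ on the $B$-block, parts (a) and (b) follow in one stroke. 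Your block-by-block recursion on $A_n,\beta_n,\gamma_n,\mathsf\Delta_n$ reaches the same conclusion by a more hands-on route; it trades the clean one-shot lemma for explicit tracking of each block.

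There is one minor ordering gap in your write-up: you bound $\mathsf\Delta_n$ first, but the recursion $\mathsf\Delta_{n+1}=\tfrac{1}{16}\mathsf D_n\mathsf\Delta_n+\tfrac{1}{16}c_n\beta_n^T$ already involves $\beta_n$, which you only control afterwards as part of $\eta_n$. The fix is easy---either run the $\mathsf\Delta_n$ and $\eta_n$ inductions jointly, or first establish (a) directly from $\|\Pi_{n+1}\|\le(1+O(\|B_n\|))\|\Pi_n\|$ (the block-diagonal part of $DR/16$ has norm $\le 1$ since $r'\le 16$), giving $\|\beta_n\|\le M$, after which your $\mathsf\Delta_n$ argument goes through.

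For (c) your argument and the paper's are essentially identical: both reduce to bounding $|p_n-1|$, split $|\hat\alpha_k-\tfrac12|\le|\hat\alpha_k-\alpha_c(\hat B_k)|+|\alpha_c(\hat B_k)-\tfrac12|$, use that the second piece is $O(\|\hat B_k\|^2)$ (summing to $O(\tau_B^2)$) and that the first grows geometrically by $\approx 16$ so its sum is dominated by the final term $\le\tau_\alpha+O(\tau_B^2)$.
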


\begin{remark}The Banach space $\mathbb{C}\times \HLT$ on which $R$ acts (see Eq.~\eqref{domainLT}) has the norm $\max\{|\alpha|,\|B\|\}$. Norms in the lemma are operator norms on $\mathbb{C}\times \HLT$.
\end{remark}

To prove this, we will need the following easy lemma.

\begin{lemma}
	Let $A_i,C_i$ be bounded linear operators on a Banach space for
	$i=1,2,\cdots,n$ with $\|C_i\| \le 1, \, \forall i$ and
	$\sum_{i=1}^n \|A_i\| \le 1$. Then 
	\begin{eqnarray}
		\Bigl\| \prod_{i=1}^n (C_i+A_i) - \prod_{i=1}^n C_i\Bigr\|
		\le 2 \sum_{i=1}^n \|A_i\|\ .
		\label{ACbound1}
	\end{eqnarray}
	\label{AClemma}
\end{lemma}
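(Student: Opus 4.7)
The plan is a direct expansion of the product followed by a summation bound.

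First, I would expand $\prod_{i=1}^n(C_i+A_i)$ by distributivity into $2^n$ ordered monomials, one for each subset $S\subseteq\{1,\ldots,n\}$: at slot $i$ we place $A_i$ if $i\in S$ and $C_i$ otherwise. The monomial with $S=\emptyset$ is exactly $\prod_{i=1}^n C_i$, which cancels against the subtracted term. Applying the triangle inequality to the remaining $2^n-1$ monomials and using $\|C_i\|\le 1$ in every unused slot gives
$$\Bigl\|\prod_{i=1}^n(C_i+A_i)-\prod_{i=1}^n C_i\Bigr\|\le\sum_{\emptyset\ne S\subseteq\{1,\ldots,n\}}\prod_{i\in S}\|A_i\|=\prod_{i=1}^n(1+\|A_i\|)-1.$$

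Second, setting $s=\sum_{i=1}^n\|A_i\|$ and using the elementary inequality $1+x\le e^x$ for $x\ge 0$ in each factor, the right-hand side is bounded by $e^s-1$. By hypothesis $s\le 1$, so it suffices to verify $e^s-1\le 2s$ on $[0,1]$. The function $\phi(s):=1+2s-e^s$ satisfies $\phi(0)=0$ and $\phi(1)=3-e>0$; its derivative $\phi'(s)=2-e^s$ vanishes only at $s=\ln 2$, and since $\phi$ is concave, its minimum on $[0,1]$ is attained at an endpoint and is non-negative. Chaining the bounds yields the claim.

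The only real subtlety is keeping the constant equal to $2$: the more obvious telescoping identity
$$\prod_{i=1}^n(C_i+A_i)-\prod_{i=1}^n C_i=\sum_{k=1}^n\Bigl(\prod_{i<k}(C_i+A_i)\Bigr)A_k\Bigl(\prod_{i>k}C_i\Bigr),$$
combined with the crude estimate $\prod_{i<k}(1+\|A_i\|)\le e^s\le e$, would only produce the worse constant $e\approx 2.718$. The subset-sum expansion tightens the intermediate bound to $e^s-1$, after which the elementary calculus check $e^s\le 1+2s$ on $[0,1]$ closes the gap to the advertised constant $2$.
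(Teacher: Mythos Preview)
Your proof is correct and follows essentially the same route as the paper: expand the product into $2^n$ monomials, cancel the $S=\emptyset$ term, bound by $\prod(1+\|A_i\|)-1\le e^s-1\le 2s$ for $s\le 1$. Your added justification of $e^s-1\le 2s$ via concavity and your remark contrasting with the telescoping identity are nice elaborations, but the core argument is identical to the paper's.
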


\begin{remark}
	The operators $A_i$ and $C_i$ need not commute so we must specify the
	order in the above products.
	We make the convention that in any product of operators 
	we take the product going left to right in decreasing order of the index. 
	The reason for this somewhat unusual convention will become clear in
	Eq.~\eqref{chainrule}.
\end{remark}

\begin{proof} We expand $\prod_{i=1}^n (C_i+A_i)$ as a sum of $2^n$ terms and
	cancel the term being subtracted in Eq.~\eqref{ACbound1}.
	We then bound the norm of each term using $\|C_i\| \le 1$ 
	and obtain the upper bound
	\begin{equation}
		\prod_{i=1}^n (1+\|A_i\|) -1 \le \exp(\sum_{i=1}^n \|A_i\|) - 1
		\le 2 \sum_{i=1}^n \|A_i\|\ ,
	\end{equation}
	where the last inequality follows from $e^x-1 \le 2 x$ for $0 \le x \le 1$.
\end{proof}

\begin{proof}[Proof of Lemma \ref{derivlemma}]
	
	Recall that 
	$R^n$ is the $n$-fold composition of $R$, and $D R^n$ is its derivative.
	And $(\alpha_j,B_j)$ is the RG trajectory with $(\alpha_0,B_0)=(\alpha,B)$.
	So $(\alpha_j,B_j)=R(\alpha_{j-1},B_{j-1})$. The chain rule says
	\begin{eqnarray}
		DR^n(\alpha,B) = DR(\alpha_{n-1},B_{n-1}) DR(\alpha_{n-2},B_{n-2})  \cdots
		DR(\alpha_0,B_0) = \prod_{j=0}^{n-1} DR(\alpha_j,B_j)\ .
		\label{chainrule}
	\end{eqnarray}
	As in Lemma \ref{AClemma}, in any product of
	operators like this we take the product going left to right in decreasing order
	of the index. 
	
	We will apply Lemma \ref{AClemma}. We split $\frac{1}{16}DR(\alpha_{j-1},B_{j-1})$ into $A_j+C_j$ as follows.
	\begin{eqnarray}
		C_j &=& \frac{1}{16} 
		\begin{bmatrix}r^\prime(\alpha_{j-1}) &0\\
			0 & \partial_B B'(\alpha_{j-1},B_{j-1}) \end{bmatrix} \\
		A_j &=& \frac{1}{16} DR(\alpha_{j-1},B_{j-1}) -C_j  \ .
		\label{ACdef}
	\end{eqnarray}
	We have included a factor of $1/16$ here since $DR^j$ will appear with a factor of $16^{-j}$.

	The norm of $C_j$ is bounded by $\frac{1}{16} \max \{r^\prime(\alpha_{j-1}),l \}$ which is $\le 1$.
	By Theorem \ref{DRthm} we know that $\|A_j\|=O(\|B_{j-1}\|)$. 
	Since $\|B_{j-1}\|$ decreases geometrically we have
	\begin{eqnarray}
		\sum_{i=1}^n \|A_i\|=O(\|B\|)\ ,
		\label{sumA}
	\end{eqnarray}
	which is $\le 1$ if $\epsilon$ is sufficiently small.
	So Lemma \ref{AClemma} implies
	\begin{equation}
		\|16^{-n} DR^n(\alpha,B) - \prod_{i=1}^n C_i\|  = O(\|B\|)\ .
		\label{iieq}
	\end{equation}

	Since $C_i$ is block diagonal, the entry in the one-dimensional
	$\alpha$ block for $\prod_{i=1}^n C_i$ is $p_n$ 
	with $p_n$ defined by Eq.~\eqref{pdef},
	and the $B$ diagonal block has norm bounded by $l^n$.
	\footnote{
		In fact this norm is bounded by $(l/16)^n$. Since $l<1$ the
		weaker bound of $l^n$ will be sufficient and results in slightly cleaner
		equations.}
	Hence
	\begin{eqnarray}
		\Bigl\| \prod_{j=0}^{n-1} C_j - p_n P_\alpha \Bigr\| \le l^n\ .
		\label{iieq2}
	\end{eqnarray}
	Eqs.~\eqref{iieq} and \eqref{iieq2} imply
	\begin{eqnarray}
		\|16^{-n} DR^n(\alpha,B) - p_n P_\alpha\|  \le  O(\|B\|) + l^n\ .
		\label{prodbound}
	\end{eqnarray}
	Thus $\|16^{-n} DR^n(\alpha,B)\|$ is also bounded as $n \to \infty$.
	This proves part (a) of the lemma.
	Eq.~\eqref{prodbound} also proves part (b).
	
	We will use part (b) to prove (c).
	We let $\hat{p}_n$ denote $p_n$ in Eq.~\eqref{pdef} with $\alpha_{j-1}$
	replaced by $\hat{\alpha}_{j-1}$. 
	We need to bound $|\hat{p}_n-1|$. We have\footnote{The Nienhuis-Nauenberg discontinuity fixed point condition $r'(1/2)=16$ comes into play here.}
	\begin{equation}
		\frac{1}{16} r^\prime(\alpha)=1+O(|\alpha-1/2|)
	\end{equation}
	(in fact $O((\alpha-1/2)^2)$ but this will suffice).
	Because of this, we have
	\beq
	\label{sumtomakesmall}
	|\hat{p}_n-1| =O(\Sigma),\qquad \Sigma:=\sum_{k=1}^{n} |\hat{\alpha}_{k-1}-1/2| \ .
	\eeq
	Note that the number of steps $n$ in $\Sigma$ can be arbitrarily large.
	
	We use
	\beq
	|\hat{\alpha}_{k-1}-1/2| \le  |\alpha_c(\hat{B}_{k-1})-1/2| + |\hat{\alpha}_{k-1}-\alpha_c(\hat{B}_{k-1})| \ .
	\eeq
	It follows from the stable manifold theorem that $\partial_B \alpha_c(B) = O(\|B\|)$.
	Hence
	\begin{equation}
		\alpha_c(B)= 1/2 + O(\|B\|^2) \ .
		\label{acslope}
	\end{equation}
	So $|\alpha_c(\hat{B}_{k-1})-1/2|=O(\|\hat{B}_{k-1}\|^2)$. Since $\|\hat{B}_k\|$ is decreasing geometrically, the sum of $O(\|\hat{B}_{k-1}\|^2)$ over $k$ is $O(\|\hat{B}_{0}\|^2)=O(\tau_B^2)$.
	Furthermore, $|\hat{\alpha}_{k-1}-\alpha_c(\hat{B}_{k-1})|$ essentially grows geometrically with a factor $\approx 16$ (see the argument leading to Eq.~\eqref{geomgrowth} in Appendix \ref{stable}). Therefore the sum of this difference over $k$ is bounded by a constant times the last term $|\hat{\alpha}_{n-1}-\alpha_c(\hat{B}_{n-1})|$. This in turn is bounded by $|\hat{\alpha}_{n-1}-1/2|+|1/2-\alpha_c(\hat{B}_{n-1})|\le \tau_\alpha+O(\tau_B^2)$.
	
	This proves that $\Sigma=O(\tau_\alpha)+O(\tau^2_B)$, and hence so is $|\hat p_n-1|$. Part (c) follows.
\end{proof}

We now prove the lemmas for $m_1,m_2,m_3$.

\begin{proof}[Proof of Lemma \ref{lemma1}]
	For a fixed $\tau_B$, there is a fixed and finite number of terms in the sum for $m_1$ and
	each term is a continuous function of $\alpha$. 
\end{proof}

\begin{proof}[Proof of Lemma \ref{lemma2}]
	We need to bound
	\begin{eqnarray}
		m_2(\alpha,B)  = \sum_{j=n_1+1}^{n_2}
		\left[ g^\prime(\alpha_{j-1},B_{j-1})
		\partial_\alpha \alpha_{j-1}(\alpha,B) 16^{-j} 
		+ \nabla g(\alpha_{j-1},B_{j-1}) \cdot
		\partial_\alpha B_{j-1}(\alpha,B) 16^{-j} \right]\ ,
	\end{eqnarray}
	where to keep the notation simple we have let
	$n_1=N_1(B)$ and $n_2=N_2(\alpha,B)$.
	The lemma will follow from Theorem \ref{gthm} and the fact that in segment 2 the trajectory
	is close to $(\alpha,B)=(1/2,0)$. 
	By part (a) of Lemma \ref{derivlemma}  the derivatives
	$|\partial_\alpha \alpha_{j-1}  16^{-j}|$ and 
	$\| \partial_\alpha B_{j-1} 16^{-j} \|$ are bounded by some constant $M$.
	By Theorem \ref{gthm} we have $\nabla g(\alpha,0) = 0$ and $g^\prime(1/2,0)=0$.
	Since $g$ is analytic there is a constant $c$ such that
	\begin{eqnarray}
		|g^\prime(\alpha_{j-1},B_{j-1})| & \le & c \bigl(|\alpha_{j-1}-1/2| + \|B_{j-1}\|\bigr) \\
		\| \nabla g(\alpha_{j-1},B_{j-1}) \| & \le & c \|B_{j-1}\| \ .
	\end{eqnarray}
	We use
	\begin{equation}
		|\alpha_{j-1}-1/2| \le |\alpha_c(B_{j-1})-1/2| + |\alpha_{j-1}-\alpha_c(B_{j-1})|\ .
	\end{equation}
	By Eq.~\eqref{acslope}, $|\alpha_c(B_{j-1})-1/2|= O(\|B_{j-1}\|^2)$, so we have shown
	\begin{equation}
		|m_2(\alpha,B)| \le cM \sum_{j=n_1+1}^{n_2} [O(\|B_{j-1}\|) + |\alpha_{j-1}-\alpha_c(B_{j-1})|]\ .
		\label{eq2mod}
	\end{equation}
	The rest is identical to the end of the proof of Lemma \ref{derivlemma}. 
	The sum of $O(\| B_{j-1} \|)$  over $j$ is $O(\|B_{n_1}\|)$.
	Since the trajectory
	is already in $V$ at step $n_1$, this is of order $\tau_B$. 
	Inside $V$,  if $\alpha \neq \alpha_c(B)$ then $|\alpha_{j-1}-\alpha_c(B_{j-1})|$ essentially grows geometrically with a
	factor of $\approx16$. So the sum of this difference over $j$ is bounded by a constant times the largest term $|\alpha_{n_2-1}-\alpha_c(B_{n_2-1})|$. We can bound this by 
	$|\alpha_{n_2-1}-\frac{1}{2}| + |\frac{1}{2}-\alpha_c(B_{n_2-1})|$.
	Since the trajectory is still in $V$ at step $n_2-1$, the first term is bounded by $\tau_\alpha$.
	The second term is $O(\|B_{n_2-1}\|^2)$ which is of order $\tau_B^2$. 
	If $\alpha=\alpha_c(B)$, then the $|\alpha_{j-1}-\alpha_c(B_{j-1})|$ term is $0$ for all $j$. 
	This proves part (a).
	
	To prove the convergence of the series in part (b) we can bound the tail of
	the series using the above bounds. We can also use the estimates above
	to bound $\delta m(B)$. The trajectory does not lie entirely in $V$, so we
	do not obtain a bound that is $O(\tau_B)$, but we do obtain a bound that
	is $O(\|B\|)$.
\end{proof}

\begin{proof}[Proof of Lemma \ref{lemma3}]
	For part (a) we start by proving that for a trajectory on the stable manifold,
	$p_n$ converges as $n \rightarrow \infty$. ($p_n$ is defined in Eq.~\eqref{pdef}.)
	Using Eq.~\eqref{acslope} and that for any RG trajectory $\|B_j\|$ converges
	to $0$ exponentially fast, we see that for a trajectory on the stable manifold
	$\alpha_j$ converges to $1/2$ exponentially fast.
	We have $r'(\alpha)=16 + O(|\alpha-1/2|)$, so $p_n$ converges. 
	We have also shown that 
	\begin{equation}
		p_n = 1+O(\|B\|)\ .
		\label{pn}
	\end{equation}
	for our trajectory on the stable manifold starting at $(\alpha_c(B),B)$. 
	
	Next we wish to show that $16^{-n} DR^n(\alpha_c(B),B)$ converges as $n\to\infty$. Let $N,n,m$ be positive integers with $n,m>N$. 
	Using Eq.~\eqref{chainrule} we have
	\begin{equation}
		DR^n(\alpha,B)=DR^{n-N}(\alpha_N,B_N) DR^N(\alpha,B)\ .
	\end{equation}
	with a similar equation for $DR^m(\alpha,B)$. Thus
	\begin{multline}
		16^{-n} DR^n(\alpha,B) - 16^{-m} DR^m(\alpha,B)\\
		= [16^{-n+N} DR^{n-N}(\alpha_N,B_N) - 16^{-m+N} DR^{m-N}(\alpha_N,B_N)]
		16^{-N} DR^N(\alpha,B)\ .
		\label{DRdif}
	\end{multline}
	We use part (a) to bound $\|16^{-N} D^N(\alpha,B)\|$ by $M$.
	Part (b) implies
	\begin{equation}
		\Bigl \| 16^{-n+N} DR^{n-N}(\alpha_N,B_N) - \frac{p_{n}}{p_N} P_\alpha\Bigr\| \le O(\|B_N\|) + l^{-n+N}\ .
	\end{equation}
	with a similar equation with $n$ replaced by $m$. 
	Thus the norm of Eq.~\eqref{DRdif} is bounded by
	\begin{equation}
		M \Bigl[ \frac{|p_{n}-p_{m}|}{p_N}  + O(\|B_N\|) + l^{-n+N} + l^{-m+N}  \Bigr]\ .
	\end{equation}
	Since $p_n$ converges as $n \rightarrow \infty$, this implies
	\begin{equation}
		\limsup_{n,m \rightarrow \infty}
		\| 16^{-n} DR^n(\alpha,B) - 16^{-m} DR^m(\alpha,B)\| \le O(\|B_N\|)\ .
	\end{equation}
	This holds for all $N$, so the $\limsup$ must be zero. This proves the existence of the limit of
	$16^{-n} DR^n(\alpha_c(B),B)$ as $n \rightarrow \infty$.
	The quantity $16^{-n} \partial_\alpha \alpha_{n}(\alpha_c(B),B)$  is the diagonal entry in
	$16^{-n} DR(\alpha_c(B),B)$ corresponding to the $\alpha$ subspace, so this proves the existence of the limit in Eq.~\eqref{dbdef}.
	By part (b) of Lemma \ref{derivlemma} this diagonal entry is 
	$p_j + O(\|B\|) + l^{-j}$. We have shown (Eq.~\eqref{pn}) 
	that $p_j$ is $1+O(\|B\|)$ uniformly in $j$. Thus $d(B)-1=O(\|B\|)$. Part (a) is proved.

	Verifying Eq.~\eqref{goal3} in Part (b) will be rather subtle for the following reason.
	As $\alpha \to \alpha_c(B)^+$ the number of steps in segment 2 diverges.
	So the $\|B_{N_2}\|$ in $m_3$ is going to zero as
	$\alpha \to \alpha_c(B)^+$. So we would like to replace $B_{N_2}$ with $0$.
	The problem is that $\alpha_{N_2}$ depends sensitively on the initial $(\alpha,B)$.
	As $\alpha \to \alpha_c(B)^+$ it does not converge. This problem will be dealt
	with as follows. We will relate $m_3(\alpha,B)$ to the magnetization $m(\alpha_{N_2},B_{N_2})$ .
	The point $(\alpha_{N_2},B_{N_2})$ does depend on $(\alpha,B)$, but this variation is sufficiently
	small that we will be able to argue that $m(\alpha_{N_2},B_{N_2})\approx m(\alpha_{N_2},0)\approx m(1/2^+,0)$.
	
	We have
	\begin{align}
		m_3(\alpha,B) &= 16^{-N_2} \frac{d}{d \alpha} f(\alpha_{N_2},B_{N_2}) \\ \nn
		&= f'(\alpha_{N_2},B_{N_2}) 16^{-N_2} \partial_\alpha \alpha_{N_2}(\alpha,B) +
		\nabla f(\alpha_{N_2},B_{N_2}) \cdot 16^{-N_2} \partial_B \alpha_{N_2}(\alpha,B) \ .
	\end{align}
	Note that $f'(\alpha_{N_2},B_{N_2})=m(\alpha_{N_2},B_{N_2})$. 
	So we can decompose the quantity we want to bound in part (b) as follows.
	\begin{equation}
		m_3(\alpha,B) - d(B) m(1/2^+,0) = E_1 + E_2 + E_3 + E_4 + E_5 + E_6 + E_7\ ,
	\end{equation}
	where
	\begin{align}
		E_1 &= \nabla f(\alpha_{N_2},0) \cdot 16^{-N_2} \partial_B \alpha_{N_2}(\alpha,B) , \\
		E_2 &= [\nabla f(\alpha_{N_2},B_{N_2}) -\nabla f(\alpha_{N_2},0)] \cdot 16^{-N_2} \partial_B \alpha_{N_2}(\alpha,B),  \\
		E_3 &= m(\alpha_{N_2},B_{N_2})
		[16^{-N_2} \partial_\alpha \alpha_{N_2}(\alpha,B) - 16^{-N_1} \partial_\alpha \alpha_{N_1}(\alpha,B)], \\
		E_4 &= m(\alpha_{N_2},B_{N_2}) 16^{-N_1}
		[\partial_\alpha \alpha_{N_1}(\alpha,B) - \partial_\alpha \alpha_{N_1}(\alpha_c(B),B)] , \\
		E_5 &= m(\alpha_{N_2},B_{N_2}) [16^{-N_1} \partial_\alpha \alpha_{N_1}(\alpha_c(B),B) - d(B)] ,\\
		E_6 &= d(B) [m(\alpha_{N_2},B_{N_2}) - m(\alpha_{N_2},0)], \\
		E_7 &= d(B) [m(\alpha_{N_2},0) - m(1/2^+,0)] .
	\end{align}
	
	We start by showing that $m(\alpha_{N_2},B_{N_2})$ is bounded for 
	any $\|B\|<\eps/2$ and any $\alpha$ close enough to $\alpha_c(B)$. 
	By the definition of $N_2$, $\alpha_{N_2} \ge 1/2+\tau_\alpha$. We can
	choose $\eta>0$ such that the region
	\beq
	\label{Aregion}
	A=\{(\alpha,B): 1/2+\tau_\alpha\le \alpha \le 1, \|B\|<\eta \}
	\eeq
	is bounded away from the stable manifold. So, by Remark \ref{bounded}, $f$ and its
	derivatives are bounded on this set. Since $N_2 \rightarrow \infty$ as
	$\alpha  \rightarrow \alpha_c(B)$, for $\alpha$ sufficiently close
	to $\alpha_c(B)$ we will have $(\alpha_{N_2},B_{N_2}) \in A$. In particular $m(\alpha_{N_2},B_{N_2})$ is bounded.
	
	Now we bound $E_1$.
	The derivative $16^{-N_2} \partial_B \alpha_{N_2}(\alpha,B)$ is bounded.
	We will argue that the $\nabla f$ term is small. 
	Let $\hat{B}$ be small and let $\hat{\alpha}=\alpha_{N_2}$. We iterate the RG map starting from
	$(\hat{\alpha},\hat{B})$ until the trajectory enters $\O$.
	Let $(\hat{\alpha}_j,\hat{B}_j)$, $j=0,1,\cdots,N$ be this trajectory.
	Applying $\nabla$ and then setting $B=0$, this yields (compare Eq.~\eqref{n3}) 
	\begin{eqnarray}
		\nabla f(\alpha_{N_2},0) = \nabla f(\hat{\alpha}_N,0) 16^{-N}
		+ \sum_{j=1}^N \nabla g(\hat{\alpha}_{j-1},0)  16^{-j} .
	\end{eqnarray}
	Since $(\hat{\alpha}_N,0) \in \O$, $|\nabla f(\hat{\alpha}_N,0)|$ is bounded. By 
	Eq.~\eqref{gradB}, $\nabla g(\hat{\alpha}_{j-1},0)=0$. So
	$E_1=O(16^{-N})$. As $\tau_\alpha \rightarrow 0$, we have $N \rightarrow \infty$ with $16^{-N}\sim \tau_\alpha$. So
	$E_1=O(\tau_\alpha)$. \footnote{Another argument to control $\nabla f(\hat{\alpha}_N,0) 16^{-N}$
		is the following. In $\O$ we have a convergent expansion for $f(\hat{\alpha}_N,B)$ which
		shows that $f$ is second order in $B$. So $\nabla f(\hat{\alpha}_N,0)=0$. This argument proves that $\nabla f(\alpha_{N_2},0)=0$ and hence $E_1=0$.}
	
	Next we bound $E_3$. We have
	\begin{eqnarray}
		\partial_\alpha \alpha_{N_2}(\alpha,B) 16^{-N_2}
		= P_\alpha DR^{N_2}(\alpha,B) P_\alpha 16^{-N_2}.
	\end{eqnarray} 
	(Recall that $P_\alpha$ is the one-dimensional projection onto the subspace
	corresponding to $\alpha$.)
	We break $DR^{N_2}$ into two factors:
	\begin{eqnarray}
		DR^{N_2}(\alpha,B) 16^{-N_2} = F_2 F_1,
	\end{eqnarray}
	where
	\begin{eqnarray}
		F_1 &=& 16^{-N_1} \prod_{k=0}^{N_1-1} DR(\alpha_k,B_k) ,\\
		F_2 &=& 16^{-(N_2-N_1)} \prod_{k=N_1}^{N_2-1} DR(\alpha_k,B_k) .
	\end{eqnarray}
	Parts (a),(c) of Lemma \ref{derivlemma} imply $F_1=O(1)$ and $\|F_2 -P_\alpha\|= O(\tau_\alpha)+ O(\tau_B) + O(l^{N_2-N_1})$. Hence
	\begin{equation}
		\|P_\alpha F_2 F_1 P_\alpha  - P_\alpha F_1 P_\alpha  \| = O(\tau_\alpha)+ O(\tau_B) + O(l^{N_2-N_1}) .
	\end{equation}
	Since 
	\begin{eqnarray}
		P_\alpha F_1 P_\alpha &=& 16^{-N_1} \partial_{\alpha} \alpha_{N_1}(\alpha,B) ,
	\end{eqnarray}
	we have (using the boundedness of $m(\alpha_{N_2},B_{N_2})$)
	\begin{equation}
		|E_3| \le O(\tau_\alpha)+ O(\tau_B) + O(l^{N_2-N_1}) .
	\end{equation}
	
	Next consider $E_5$. Since $m(\alpha_{N_2},B_{N_2})$ is bounded, part (a)
	of the present lemma says that $E_5$ converges to zero as $N_1 \rightarrow \infty$.
	As $\tau_B \rightarrow 0$, $N_1 \rightarrow \infty$. So $|E_5|=o(1)$ as $\tau_B\to 0$. 
	
	Bounding $E_7$ is trivial since we have an explicit expression
	for $m(\alpha_{N_2},0)$, Eq~\eqref{ma}. So we have
	\beq
	|m(\alpha_{N_2},0) - m(1/2^+,0)| = O(\alpha_{N_2}-1/2) = O(\tau_\alpha).
	\eeq
	So $E_7=O(\tau_\alpha)$.
	
	As $\alpha \rightarrow \alpha_c(B)$, $N_2-N_1 \rightarrow \infty$, and so
	$O(l^{N_2-N_1})$ goes to $0$.
	So the above estimates show that for all sufficiently small $\tau_\alpha,\tau_B>0$
	and all $\alpha$ sufficiently close to $\alpha_c(B)$, 
	we have $|E_1| + |E_3|+|E_5|+|E_7|<\gamma$.
	For the rest of the proof $\tau_\alpha$ and $\tau_B$ are fixed. 
	Since $\tau_B$ is fixed, $N_1$ is fixed. 
	Since $\partial_\alpha \alpha_{N_1}(\alpha,B)$
	is continuous in $\alpha$, for fixed $N_1$, 
	\begin{equation}
		\limsup_{\alpha \to \alpha_c(B)^+} |E_4| =0.
	\end{equation}
	
	It remains to bound $E_2$ and $E_6$.  We already have seen that for $\alpha$ sufficiently close
	to $\alpha_c(B)$ we will have $(\alpha_{N_2},B_{N_2})$ belong to the region $A$, Eq.~\eqref{Aregion}, where $f$ and its
	derivatives are bounded. Hence
	\begin{align}
		\label{eq:high-level}
		\| \nabla f(\alpha_{N_2},B_{N_2}) -\nabla f(\alpha_{N_2},0)] \| &= O(\|B_{N_2}\|), \\
		|m(\alpha_{N_2},B_{N_2}) - m(\alpha_{N_2},0)| &= O(\|B_{N_2}\|).
	\end{align}
	So $E_2$ and $E_6$ are $O(\|B_{N_2}\|)$.
	Since $N_2 \rightarrow \infty$ as $\alpha  \rightarrow \alpha_c(B)$,
	we have
	\begin{equation}
		\limsup_{\alpha \to \alpha_c(B)^+} (|E_2| + |E_6|) =0.
	\end{equation}
	
	This completes the proof of part (b). \end{proof}

\subsection{Comparison with the argument of Nienhuis and Nauenberg}
\label{comparison}
The discontinuity fixed point condition $r'(1/2)=b^d=16$ played a crucial role in the above proof. Apart from identifying that condition, Nienhuis and Nauenberg \cite{NienhuisNauenberg} also gave an intuitive argument of how it implies the discontinuity of magnetization. Here we wish to compare their argument and our proof. Although their argument looks simpler, there are difficulties in implementing it rigorously. 

Phrasing their argument in our language, the free energy satisfies the equation
\beq
f(\alpha,B)=16^{-1} g(\alpha,B)+16^{-1} f(\alpha',B').
\eeq
Differentiating this equation in $\alpha$ one gets
\beq
m(\alpha,B)=16^{-1} g'(\alpha,B)+16^{-1} [m(\alpha',B')\partial_\alpha \alpha'(\alpha,B)+\nabla f(\alpha',B')\cdot \partial_\alpha B'(\alpha,B)] \label{NNmag}.
\eeq
One then takes the two one-sided limits $\alpha\to \alpha_c(B)^\pm$ in this equation. Ref.~\cite{NienhuisNauenberg} only considers the case when $B$ preserves the spin flip symmetry, so that $\alpha_c(B)=0$. One then obtains
\begin{align}
	m(0^\pm,B)=16^{-1} g'(0,B)+16^{-1} [m(0^\pm,B')\partial_\alpha \alpha'(0,B)+\nabla f(0^\pm,B')\cdot \partial_\alpha B'(0,B)].
\end{align}
Here, $g'(\alpha,B)$, $\partial_\alpha \alpha'(\alpha,B)$ and $\partial_\alpha B'(\alpha,B)$ have limits since these quantities are analytic including at $\alpha=0$. On the other hand \emph{we had to assume} the existence of $m(0^\pm,B)$ and of $\nabla f(0^\pm,B')$ (by flip symmetry the last two one-sided limits coincide).

By taking the difference, several terms cancel and one obtains a simpler equation for $\Delta m(B):=m(0^+,B)-m(0^-,B)$ (cf.~\cite{NienhuisNauenberg}, Eq.~(2))
\begin{align}
	\Delta m(B)=A_0\, \Delta m(B'),\qquad A_0 = 16^{-1} \partial_\alpha \alpha' (0,B).
	\label{NN2}
\end{align}
One then iterates this equation along the RG trajectory $B_0=B, B_{j+1} =B'(0,B_j)$, and obtains (cf.~\cite{NienhuisNauenberg}, Eq.~(3))
\begin{align}
	\Delta m(B)=
	\Bigl(\prod_{j=0}^\infty  A_j \Bigr) \Delta m(0),\qquad A_j = 16^{-1} \partial_\alpha \alpha' (0,B_j).
	\label{NN3}
\end{align}
Here $\Delta m(0)$ is the zero-temperature discontinuity, $\Delta m(0)=2$ from the exact expression. By $A_j=1+O(\|B_j\|^2)$ and $B_j\to0$ exponentially fast, the infinite product $\prod_{j=0}^\infty  A_j $ converges, and one recovers $\Delta m(B)=2+O(\|B\|^2)$.

To summarize, the Nienhuis-Nauenberg argument proceeds by taking the limits $\alpha \to 0^\pm$ of \eqref{NNmag} and subtracting, thus deriving equation \eqref{NN2} for the quantity $\Delta m(B)$ one is directly interested in. Here one considers only one RG step and not the full RG trajectory. In \eqref{NN3} one considers the full trajectory, but a very simple one, belonging to the coexistence surface $\alpha=0$. Compared to our argument, the simplification is due to not having to discuss trajectories departing from the coexistence surface.

The main problem with the Nienhuis-Nauenberg argument is to justify the key assumption that the one-sided limits $m(0^\pm,B)$ do exist. We don't know how to do this without considering general RG trajectories not necessarily belonging to the coexistence surface, splitting them into segments and analyzing how each segment's contribution behaves in the limit, which is what we have done in our proof. In other words, we don't know how to implement their argument rigorously, without destroying its simplicity. In addition, their argument does not obviously extend to $B$ not respecting the spin flip symmetry.

\section{Final remarks and open problems}
\label{conclusions}

In this paper we developed a theory of tensor RG at low temperatures applicable to the 2D Ising model and its small perturbations. The Ising model at exactly zero temperature has two states of positive and negative magnetization, described in the tensor network language by tensors $A^{(q)}$, $q=\pm$, each having a single nonzero element $(A^{(q)})_{qqqq}=1$. It's easy to see that their direct sum with equal weights $\frac12 A^{(+)} \oplus \frac12 A^{(-)}$ is a fixed point of tensor RG. It's also easy to find a tensor RG transformation which acts on linear combinations with unequal weights $\alpha A^{(+)} \oplus (1-\alpha)A^{(-)}$, $\alpha\in[0,1]$. This corresponds to working at nonzero magnetic field but still at exactly zero temperature. In this paper we accomplished a less trivial task of finding a tensor RG which acts on tensors of a more general form
\beq
\alpha A^{(+)} + (1-\alpha)A^{(-)} + B,
\eeq
where the perturbation $B$ has small Hilbert-Schmidt norm. We then used this map to establish the main features of the phase diagram of the Ising model and its small perturbations at low temperature: the existence of a first-order phase transition as a function of the magnetic field, with magnetization discontinuous at the phase coexistence surface. 

We emphasized several times the connection of our results to the discontinuity fixed point idea from theoretical physics \cite{NienhuisNauenberg}. Our work is the second rigorous realization of this idea, after the work of Gawędzki, Koteck\'y and Kupiainen \cite{Gawedzki1987}. One moral of \cite{Gawedzki1987} is that one is supposed to do RG at low temperatures in the contour representations, while one may do RG at high temperatures in the spin representation. It may come as a surprise that no such change of representation is required in our approach. Moreover, our construction of tensor RG at low temperatures turns out to be almost the same as our previous construction of tensor RG at high temperature \cite{paper1}. 

We see several directions for further work:
\begin{itemize}
	\item
	In this paper we discussed extensive observables such as the free energy and the magnetization. It would be interesting to also construct infinite volume limits of correlation functions in the high-T and low-T regimes. In the high-T regime one should show that the
	finite volume limit does not depend on the boundary conditions, while it does in the low-T regime. Correlation functions at high-T were already briefly discussed in \cite{paper1}, App.~B.
	
	\item Theorem \ref{LTmap} can be extended to a situation with $n\ge 3$ phases. There exists an analytic RG map acting on tensors of the form
	\beq
	\alpha_1 A^{(1)}+\ldots+\alpha_n A^{(n)}+ B\,,\qquad \sum_i \alpha_i=1,
	\eeq
	which maps 
	\begin{align}
	&\alpha_i \to \alpha_i^{16}/\Bigl(\sum \alpha_j^{16}\Bigr) +O(B^2),\\
	& B\to B',\qquad \|B'\|<l \|B\|\quad(l<1),
	\end{align}
    with a normalization factor $\nu = \sum \alpha_j^{16}+O(B^2)$. The proof of Theorem \ref{LTmap} should generalize verbatim.
	
	\item The results of Section \ref{LTprop} should also generalize to a situation with $n\ge 3$ phases, but the dynamical system aspect of the story will be more complicated. For $n\ge 3$, the stable manifold of the fixed point $\frac1n A^{(1)}+\ldots+\frac1n A^{(n)}$ is the coexistence manifold of all $n$ phases. It has codimension $n-1$ in the space of $(\alpha_i,B)$, $\sum \alpha_i=1$. For each $S\subset \{1,\ldots,n\}$, $2\le |S|\le n-1$, we also expect the coexistence manifold of phases from $S$, which is an invariant manifold of codimension $|S|-1$ having the above stable manifold as its boundary. To work this out fully, including the (dis)continuity properties of the free energy, is an interesting open problem.
	
	\item In this work, as in \cite{paper1}, we worked on the 2D square lattice. It would be interesting to realize rigorous tensor RG on 2D lattices different from the square lattice, in the neighborhood of the high-T and low-T fixed points. It would also be interesting to extend rigorous tensor RG methods to $d\ge 3$ dimensions.
	
	\item It would be interesting to apply tensor RG methods to the Ising model with disordered couplings (bond or field disorder). The RG in the contour representation was used to prove the existence of an ordered phase for the 3D Ising model at low temperature and in weak magnetic disorder \cite{Bricmont1988} (after \cite{Imbrie1985} showed ordering at zero temperature). It would be interesting to see if tensor RG could lead to alternative proofs of these results.

\item
In this paper we focused on lattice models with discrete spin space. It is worth pointing out that lattice models with continuous spins also allow transformation to tensor network form. Take for example the 2D $O(N)$-model on the square lattice, where we have spin $n_i\in S^{N-1}$ on each lattice site, with the Hamiltonian 
\beq
H=\beta \sum_{\langle ij\rangle} n_i\cdot n_j \,.
\eeq
As in Section \ref{nnising}, we transform the partition of this model into a tensor network made of tensors
\beq
A(n_1,n_2,n_3,n_4)=e^{\beta (n_1\cdot n_2+ n_2\cdot n_3+n_3\cdot n_4+n_4\cdot n_1)}.
\eeq
A tensor is now a function from $(S^{N-1})^4$ to $\mathbb{R}$, while tensor contraction means setting arguments to the same value and integrating over the sphere. This looks different from tensors with discrete indices we had in this paper, but in fact it is the same. Abstractly, a tensor can be thought of as a multilinear map from $V\times V \times V\times V$ into $\mathbb{R}$. When $V=\ell^2(\mathbb{N})$ we get tensors with discrete indices, and when $V=L^2(S^{N-1})$ we get tensors which are functions as above. Since $\ell^2(\mathbb{N})$ and $L^2(S^{N-1})$ are isomorphic, these constructions are equivalent. In practice, we can transform the tensor from continuous to discrete form via Fourier expansion in any orthonormal basis of $L^2(S^{N-1})$. 

The RG analysis of lattice models with continuous group symmetries tends to be complicated in the spin representation \cite{Balaban1996}. It would be very interesting to see if tensor RG methods simplify the analysis. For example, can we use tensor RG to show that $O(N)$ models order at low temperatures in $d\ge 3$ dimensions?

\item
Another interesting problem concerns the analysis near the Gaussian fixed point. One classic result is the nonperturbative IR stability of the free massless scalar field in 4D with respect to the $\phi^4$ perturbation \cite{Gawedzki1985}. Can this be recovered/simplified via tensor RG?

\item
The $q$-state Potts model is known to have, as a function of the temperature, a first-order order-disorder phase transition for large $q$ \cite{Kotecky1982}. Can this result be recovered via tensor RG? This would require showing that the lattice Potts model belongs, for large $q$, to the small neighborhood of the discontinuity fixed point describing coexistence of $q+1$ phases within which the RG map can be controlled.

\end{itemize}

The construction of a nontrivial tensor RG fixed point describing the critical point of the Ising model remains our priority.

\begin{acks}[Acknowledgments]
SR cordially thanks Senya Shlosman for sharing his knowledge of the first-order phase transitions and the Pirogov-Sinai theory. SR thanks the organizers of the school "Universality in mathematical physics: random geometries, field theories and hydrodynamics" (ENS Lyon, September 2022) where this material was presented, and in particular Christophe Garban for discussions about the Peierls argument, and Jérémie Bouttier for spotting a mistake (now corrected). We thank Tom Spencer for a question which led to footnotes \ref{whereapplies},\ref{whereapplies1}. We thank Nikolay Ebel for the careful reading of the preprint and for spotting several inaccuracies. SR is supported by the Simons Foundation grant 733758 (Simons Collaboration on the Nonperturbative Bootstrap).
\end{acks}

\appendix

\section{Analytic functions on Banach spaces}\label{abstract}

Let $X, Y$ be two complex Banach spaces and let $\mathcal{U}$ be an open
subset of $X$. A function $f : \mathcal{U} \rightarrow Y$ is called analytic
(or, equivalently, holomorphic) if it has a complex Fr{\'e}chet derivative at
every point, i.e. for each $x \in \mathcal{U}$ there exists a continuous
complex-linear map $D f (x) : X \rightarrow Y$ such that
\begin{equation}
	\lim_{y \rightarrow 0} \frac{\| f (x + y) - f (x) - D f (x) y \|}{\| y \|} =
	0 .
\end{equation}
We list a few basic properties of such abstract analytic functions, assumed in this paper, often tacitly. They
are similar in spirit to properties of ordinary analytic functions from
$\mathbb{C}$ to $\mathbb{C}$.
\begin{enumerate}
	\item (Analyticity of polynomials) Let $F : X \times \cdots \times X
	\rightarrow Y$ be a continuous multilinear map of degree $n$. Then the
	function $P (x) = F (x, \ldots, x)$ is analytic. It is called a homogeneous
	polynomial of degree $n$.
	
	\item (Analyticity of product) If $h : \mathcal{U} \rightarrow \mathbb{C}$
	is a second analytic function, then $h f : \mathcal{U} \rightarrow Y$ is
	analytic.
	
	\item (Analyticity of composition) Let $Z$ be a third complex Banach space,
	let $\mathcal{V}$ be an open subset of $Y$, and let $g : \mathcal{V}
	\rightarrow Z$ be a second analytic functions. If $f (\mathcal{U}) \subset
	\mathcal{V}$, then $g \circ f : \mathcal{U} \rightarrow Z$ is analytic.
	
	\item (Infinite smoothness) $f$ has derivatives of arbitrary order at any
	point. The $n$th derivative $D^n f (x)$ at a point $x \in \mathcal{U}$ is a
	continuous multilinear map from $X \times \cdots \times X$($n$ times) into
	$Y$.
	
	\item \label{Taylor-prop}(Taylor series) Denote by $\hat{D}^n f (x)$ the
	homogeneous polynomial associated with the $n$th derivative $D^n f (x)$. Let
	$B_\varrho (x)$ denote a ball of radius $\varrho$ centered at $x$. Assume that $f$ is
	bounded in a ball $B_\varrho (x) \subset \mathcal{U}$, i.e.
	\begin{equation}
		\| f (x + w) \| \leqslant M < \infty \qquad \forall w \in B_\varrho (0).
	\end{equation}
	Then $f$ has in $B_\varrho (x)$ a uniformly convergent Taylor series expansion:
	\begin{equation}
		f (x + w) = \sum_{n = 0}^{\infty} \frac{1}{n!} \hat{D}^n f (x) (w) \qquad
		\forall w \in B_\varrho (0).
	\end{equation}
	Individual terms in this expansion are bounded by:
	\begin{equation}
		\| \hat{D}^n f (x) (w) \| \leqslant n! \frac{M}{\varrho^n} \| w \|^n.
	\end{equation}
	In particular, the operator norm of the first derivative is bounded by:
	\begin{equation}
		\| D f (x) \|_{\mathcal{B} (X, Y)} \leqslant M/\varrho. \label{Taylor-eq}
	\end{equation}
	\item The sum of a uniformly convergent series of analytic functions is
	analytic.
\end{enumerate}
For more details about these facts see e.g. {\cite{hille1996functional}},
Chapters 13 and 26, and {\cite{Harris2003,analytic}}. In particular, Property
\ref{Taylor-prop} is proven by applying the Cauchy integral representation to
the ordinary analytic function $\ell (f (x + z w))$ of a complex $z$ in the
unit disk, where $\ell \in Y^{\ast}$. 

In this paper, $X$ can be a complex Hilbert space $\mathcal{H}$ of
tensors equipped with the Hilbert-Schmidt norm, or its closed
subspace $\HLT$ obtained by setting some tensor components to zero, or $\mathbb{C} \times \HLT$ (in Theorem \ref{LTmap}). $Y$ can be $\mathbb{C},\mathcal{H},\HLT$ or $\mathbb{C} \times \HLT$.

\section{Stable manifold theorem} \label{stable}

For the convenience of the reader we give a statement of the stable manifold
theorem following \cite{irwin1970stable}. Other references include
\cite{shub2013global,lanford1991}.

\begin{theorem}
  Let $E,F$ be Banach spaces. Let $G$ be the Banach space $E \times F$
  with the norm $\|(x,y)\|= \max \{ \|x\|,\|y\| \}$.
  Let $\lambda$ be a bounded linear map of $E$ into $E$ and $\mu$
  a bounded, invertible linear map of $F$ onto $F$.
  We assume there is a constant $a<1$ such that $\|\lambda\| \le a$ and
  $\|\mu^{-1}\| \le a$. 
  Let $f$ be a $C^r$ map ($r \ge 1$) from some neighborhood $U$ 
  of $0$ in $G$ into $G$ with $f(0)=0$ and $Df(0)=\lambda \otimes \mu$.
  Then there is a neighborhood $C$ of $0$ in $E$ and a neighborhood
  $D$ of $0$ in $F$ and a  unique map $h : C \rightarrow D$ such that
  $f(\mathop{graph}(h)) \subset \mathop{graph}(h)$. The map $h$ is $C^r$ and $Dh(0)=0$. 
  For $z \in U$, $f^n(z) \rightarrow 0$
  as $n \rightarrow \infty$ if and only if $f^k(z) \in \mathop{graph}(h)$
  for some $k \ge 1$.
\end{theorem}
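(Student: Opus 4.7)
The plan is to construct the stable manifold by the Hadamard graph transform: we seek an invariant graph of the form $\{(x,h(x)) : x \in C\}$ and obtain $h$ as the fixed point of a contraction on a space of Lipschitz maps. First I would decompose the map as $f(x,y) = (\lambda x + R_1(x,y),\,\mu y + R_2(x,y))$, where $R_1, R_2$ are $C^r$ with $R_i(0) = 0$ and $DR_i(0) = 0$; after shrinking $U$ to a small ball of radius $\varrho$, both $R_i$ can be made to have arbitrarily small Lipschitz constant on that ball. Invariance of $\mathrm{graph}(h)$ under $f$ is equivalent to the functional equation
\[
\mu\, h(x) + R_2(x,h(x)) = h\bigl(\lambda x + R_1(x,h(x))\bigr),
\]
which, using invertibility of $\mu$, becomes $h = T h$ with
\[
(Th)(x) := \mu^{-1}\bigl[\, h\bigl(\lambda x + R_1(x,h(x))\bigr) - R_2(x,h(x)) \,\bigr].
\]

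I would then apply the Banach fixed point theorem to $T$ on the space $\mathcal{X}$ of Lipschitz maps $h : C \to D$ satisfying $h(0)=0$, $\|h\|_\infty \le \eta$, and $\mathrm{Lip}(h) \le \kappa$, for suitably small constants $\eta, \kappa, \varrho$. A direct estimate yields
\[
\|Th_1 - Th_2\|_\infty \le a\bigl(1 + \kappa\,\mathrm{Lip}(R_1) + \mathrm{Lip}(R_2)\bigr)\|h_1 - h_2\|_\infty,
\]
which is a strict contraction since $a < 1$ and the Lipschitz norms of $R_i$ can be forced arbitrarily small. A parallel computation shows $T$ preserves $\mathcal{X}$. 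The unique fixed point $h$ is Lipschitz, satisfies $h(0)=0$, and its graph is $f$-invariant.

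For the regularity $h \in C^r$ and $Dh(0)=0$, I would proceed inductively. Formally differentiating the fixed-point equation gives a linear fixed-point equation for a candidate derivative $\Phi(x)$ (valued in bounded operators $E \to F$), with contracting coefficient close to $a$; this yields a continuous candidate $\Phi$, which is then identified with $Dh$ by comparing difference quotients of $h$ to $\Phi$ (using the Lipschitz regularity of $h$ already obtained). Higher derivatives follow by the same pattern: at level $k$ one gets a linear fixed-point equation for $D^k h$ whose contraction factor involves $\|\mu^{-1}\|\cdot\|\lambda\|^k$ plus small nonlinear corrections, still less than $1$ after further shrinking. The identity $Dh(0)=0$ is read off by setting $x=0$ in the differentiated equation, using $DR_i(0)=0$.

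Finally, for the dynamical characterization: on $\mathrm{graph}(h)$ the $f$-dynamics reduces to the base map $x \mapsto \lambda x + R_1(x,h(x))$ on $C$, which is a contraction in a small enough neighborhood, so $f^n(z) \to 0$ whenever $z \in \mathrm{graph}(h)$. Conversely, writing any $z \in U$ as $z = (x, h(x) + v)$ and using $f$-invariance of $\mathrm{graph}(h)$, one computes that under one step the transverse component is mapped to $\mu v + \text{(corrections)}$, where the corrections are small when $z$ is near the graph; hence $\|v\|$ is amplified by a factor close to $\|\mu\| \ge 1/a > 1$. Therefore, as long as $f^n(z)$ stays in $U$, the transverse components grow geometrically unless some $v_k=0$; so $f^n(z) \to 0$ forces $f^k(z) \in \mathrm{graph}(h)$ for some $k$. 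The main obstacle in this program is the regularity step: turning the formal differentiation of the fixed-point equation into a rigorous statement that the candidate multilinear maps produced at each order are indeed the true derivatives of $h$ requires careful bootstrapping, and the bookkeeping of the fixed-point equations for $D^k h$ at each order $k \le r$ is by far the most delicate portion of the argument; the existence of $h$ and the dynamical characterization are comparatively routine once this is in place.
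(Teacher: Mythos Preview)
The paper does not actually prove this theorem: it merely states it ``following \cite{irwin1970stable}'' and cites additional references, so there is no in-paper argument to compare yours against. Your sketch via the Hadamard graph transform is a standard and correct route to the stable manifold theorem, and the steps you outline (contraction on a space of Lipschitz graphs, inductive bootstrapping of regularity, and the dynamical characterization via expansion of the transverse coordinate) are the right ones.

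It may be worth noting that the approach in the cited Irwin paper is organized somewhat differently: rather than iterating a graph transform on function space, Irwin works in a Banach space of bounded sequences and obtains the stable set as the zero set of a map whose invertibility follows from a contraction argument on sequence space; regularity then comes from an implicit function theorem rather than from bootstrapping derivatives of the graph transform. Both methods are equivalent in strength and yield the same $C^r$ conclusion. One small imprecision in your last paragraph: the amplification of the transverse component is governed by the lower bound $\|\mu v\| \ge \|v\|/\|\mu^{-1}\| \ge \|v\|/a$, not by $\|\mu\|$ itself; your conclusion is correct but the phrasing should refer to $1/\|\mu^{-1}\|$ rather than $\|\mu\|$.
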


\begin{remark}
The stable manifold is the set of $z \in U$ such that $f^n(z) \rightarrow 0$.
So the theorem says a point $z$ is on the stable manifold if and
only if $f^k(z)$ is eventually on the graph of $h$.
\end{remark}

In our application of the stable manifold theorem, $F$ is the one-dimensional
space corresponding to $\alpha$, and $E$ is the Hilbert space corresponding
to $B$. The point $(\alpha,B)=(1/2,0)$ is a fixed point of $R$, and 
Theorem \ref{DRthm} implies that at this fixed point
$\partial_\alpha \alpha'=16$, $\partial_B \alpha'=0$, $\partial_\alpha B=0$,
and $\partial_B B'$ has norm bounded by $l<1$.
The following proposition describes the flow 
under $R$ of points not on the stable manifold.

\begin{prop}
  \label{rgflow}
  Let $\O$ be defined as in Eq.~\eqref{deltachoice}.
  (So $\O$ consists of small neighborhoods
  about $(\alpha,B)=(0,0)$ and $(1,0)$.)
  Let $(\alpha_0,B_0)$ be such that $\|B_0\|<\epsilon$ and
  $\alpha_0 \neq \alpha_c(B_0)$. Let $(\alpha_j,B_j)$ be the RG trajectory
  starting at $(\alpha_0,B_0)$. So $(\alpha_j,B_j)=R(\alpha_{j-1},B_{j-1})$.
  Then eventually $(\alpha_j,B_j) \in \O$. If $\alpha_0>\alpha_c(B_0)$ then
  $\alpha_j>\alpha_c(B_j)$ for all $j$ for which the trajectory is defined.
  (And if $\alpha_0<\alpha_c(B_0)$ then  $\alpha_j < \alpha_c(B_j)$.)
\end{prop}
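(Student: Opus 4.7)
The plan rests on the hyperbolic structure of the RG map at the fixed point $(1/2,0)$ established in Theorem \ref{DRthm}---the $\alpha$ direction expands with multiplier $r'(1/2)=16$ while the $B$ direction contracts uniformly with factor $\le l<1$---combined with the output of the stable manifold theorem applied to $R$: the local stable manifold is the graph $\{\alpha=\alpha_c(B)\}$ with $\alpha_c\in C^\infty$, $\alpha_c(0)=1/2$, and $D\alpha_c(0)=0$.

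I would first dispose of the side-preservation claim. The stable manifold is $R$-invariant, and by its characterization (a point lies on it iff its forward trajectory converges to $(1/2,0)$), if any $R^k(\alpha_0,B_0)$ lay on the stable manifold then so would $(\alpha_0,B_0)$. Hence $\delta_j:=\alpha_j-\alpha_c(B_j)$ never vanishes; since $\delta_j$ depends continuously on $(\alpha_0,B_0)$ and the set $\{(\alpha_0,B_0):\alpha_0>\alpha_c(B_0)\}$ is connected (an epigraph of a continuous function over a ball in $B_0$), its sign is preserved along the trajectory.

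Next I would show that any trajectory off the stable manifold leaves any small tubular neighborhood $V_\tau=\{|\alpha-1/2|<\tau,\,\|B\|<\tau\}$ of the fixed point in finitely many steps. The idea is to track $\delta_j$: performing a first-order Taylor expansion of the map at $(\alpha_c(B_j),B_j)$ and using Theorem \ref{DRthm} together with $\alpha_c(B)=1/2+O(\|B\|^2)$ (consequence of $D\alpha_c(0)=0$), one obtains a recursion of the form
\begin{equation*}
\delta_{j+1}=\bigl[\,r'\!\bigl(\alpha_c(B_j)\bigr)+O(\|B_j\|)+O(|\delta_j|)\,\bigr]\,\delta_j .
\end{equation*}
Since $\|B_j\|\le l^j\|B_0\|$ decays geometrically (using $B'(\alpha,0)=0$ and the contraction bound from Theorem \ref{DRthm}), and $r'(\alpha_c(B_j))\to 16$ as $B_j\to 0$, the bracketed multiplier exceeds a fixed constant such as $8$ once $\tau$ is small enough and $j$ is not too small. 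Thus $|\delta_j|$ grows geometrically until it exits $V_\tau$, which must happen in finitely many steps for any nonzero $\delta_0$.

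Finally, once $\alpha_j$ is bounded away from $1/2$, the unperturbed iteration $\alpha\mapsto r(\alpha)$ satisfies $r(\alpha)>\alpha$ on $(1/2,1)$ and $r(\alpha)<\alpha$ on $(0,1/2)$ with $0,1$ attracting, and the $O(\|B_j\|^2)$ correction in the $\alpha'$ dynamics from Theorem \ref{LTmap} is negligible since $\|B_j\|$ has already become very small. So $\alpha_j$ is driven into a small neighborhood of $0$ or $1$ in finitely many further steps, which combined with $\|B_j\|<\delta$ places $(\alpha_j,B_j)\in\Omega$. The main obstacle is the takeoff phase inside $V_\tau$, where $|\delta_0|$ and $\|B_0\|$ may be comparable so that the perturbation terms could a priori compete with the expansion; the argument works because $\|B_j\|$ decays geometrically while $|\delta_j|$ grows geometrically, so after a short transient the expansion cleanly dominates.
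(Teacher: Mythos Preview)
Your approach is essentially the same as the paper's: both track $\delta_j=\alpha_j-\alpha_c(B_j)$ and show it grows geometrically by linearizing the RG map around the stable-manifold point $(\alpha_c(B_j),B_j)$, then appeal to the unperturbed $r$-dynamics once $\alpha_j$ is bounded away from $1/2$. The paper reaches the key recursion via the integral $\alpha_j-\alpha_c(\overline B_j)=\int_{\alpha_c(B_{j-1})}^{\alpha_{j-1}}\partial_\alpha\alpha'(\alpha,B_{j-1})\,d\alpha$ together with $|\alpha_c(\overline B_j)-\alpha_c(B_j)|=O(\epsilon)|\delta_{j-1}|$, obtaining the clean inductive inequality $\delta_j\ge 14\,\delta_{j-1}$; your Taylor-expansion route to $\delta_{j+1}=[\,r'(\alpha_c(B_j))+O(\|B_j\|)+O(|\delta_j|)\,]\delta_j$ is equivalent.

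One remark: your separate topological argument for side-preservation is both unnecessary and slightly delicate. Unnecessary because the positivity of the multiplier in your recursion already gives $\mathrm{sign}(\delta_{j+1})=\mathrm{sign}(\delta_j)$ by induction, which is exactly how the paper handles it. Delicate because for fixed $j$ the map $(\alpha_0,B_0)\mapsto\delta_j$ is only defined on the subset of $\{\alpha_0>\alpha_c(B_0)\}$ whose trajectories remain in the domain of $R$ for $j$ steps, and you would need to argue that this subset is connected and contains a reference point (e.g.\ $B_0=0$, where the one-dimensional $r$-dynamics makes the sign evident) before invoking the intermediate-value principle. Dropping that paragraph and letting the recursion carry both side-preservation and growth makes your proof match the paper's almost line for line.
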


\begin{proof}
Since $r'(1/2)=16$, we can choose $\eta$ small enough that $r'(\alpha) \ge 15$
on $|\alpha-1/2| < \eta$.
Then given the bounds on the derivative of the RG map in Theorem \ref{DRthm},
it is straightforward to show that for sufficiently small $\|B_0\|$,
once $\alpha_j$ satisfies $|\alpha_j-1/2| \ge \eta$ then
the trajectory will flow into $\O$.

The non-trivial part of the proposition is controlling
the flow when the trajectory is near the stable manifold. 
So in the following we only consider the portion of the trajectory
with $|\alpha_j-1/2|<\eta$. 

We start by defining a sequence of points on the stable manifold that we
will use to get estimates on the RG trajectory $(\alpha_j,B_j)$. 
The RG map takes a point on the stable manifold to the stable manifold.
So we can write $R(\alpha_c(B_{j-1}),B_{j-1}))$ as $(\alpha_c(\oB_j),\oB_j)$.
for some $\oB_j$. So $(\alpha_c(\oB_j),\oB_j)$ is a sequence of points
on the stable manifold, but it is not a trajectory of the RG map. 
We consider the case that $\alpha_0>\alpha_c(B_0)$.
We will derive a lower bound on $\alpha_j-\alpha_c(B_j)$ that shows this
quantity is always positive and grows in such a way that eventually
$\alpha_j-1/2 \ge \eta$
We will make frequent use of the equations
\begin{eqnarray}
  R(\alpha_{j-1},B_{j-1}) &=& (\alpha_j,B_j) , \\
  R(\alpha_c(B_{j-1}),B_{j-1}) &=& (\alpha_c(\oB_j),\oB_j)  .
\end{eqnarray}
Since $\|B_0\| < \epsilon$, we have $\|B\|=O(\epsilon)$ for all the $B$ that appear
in the following.

First consider $\alpha_j-\alpha_c(\oB_j)$. We write it as the integral of the
derivative of the RG map and use the bounds on that derivative in Theorem
\ref{DRthm}.
\begin{eqnarray}
  \alpha_j - \alpha_c(\oB_j) &=& \int_{\alpha_c(B_{j-1})}^{\alpha_{j-1}}
  \partial_\alpha \alpha'(\alpha,B_{j-1}) \, d \alpha 
  = \int_{\alpha_c(B_{j-1})}^{\alpha_{j-1}}
  [r'(\alpha) + O(\epsilon))]  \, d \alpha \\
  &=& r(\alpha_{j-1}) - r(\alpha_c(B_{j-1}))
  + O(\epsilon) [\alpha_{j-1} - \alpha_c(B_{j-1})]. \nn
\end{eqnarray}

Next we consider $\alpha_c(\oB_j)-\alpha_c(B_j)$.
Our bounds on the derivative of $R$ and the definitions of $B_j$ and $\oB_j$
imply $\|\oB_j-B_j\|\le C |\alpha_{j-1}-\alpha_c(B_{j-1})|$ for some
constant $C$. 
The stable manifold theorem and our bounds on the derivative of $R$
imply $\|\partial_B \alpha_c \| = O(\epsilon)$.
So
\begin{equation}
  |\alpha_c(\oB_j)-\alpha_c(B_j)|
= O(\epsilon) |\alpha_{j-1}-\alpha_c(B_{j-1})|.
\end{equation}
Writing $\alpha_j-\alpha_c(B_j)$ as 
$(\alpha_j-\alpha_c(\oB_j)) + (\alpha_c(\oB_j)-\alpha_c(B_j))$, we have shown
\begin{equation}
  \alpha_j-\alpha_c(B_j) \ge r(\alpha_{j-1}) - r(\alpha_c(B_{j-1}))
  + O(\epsilon) \, [\alpha_{j-1} - \alpha_c(B_{j-1})].
\end{equation}

By the choice of $\eta$, for $|\alpha-1/2| < \eta$ we have $r'(\alpha) \ge 15$.
And we can assume that the $O(\epsilon)$ coefficient is no bigger than $1$.
So the above equation implies 
\begin{equation}
\label{geomgrowth}  \alpha_j-\alpha_c(B_j) \ge 14[\alpha_{j-1} - \alpha_c(B_{j-1})].
\end{equation}
By induction the above inequality implies that $\alpha_j>\alpha_c(B_j)$
if $\alpha_0>\alpha_c(B_0)$ as long as $|\alpha_{j-1} -1/2|<\eta$. Moreover,
$\alpha_j-\alpha_c(B_j)$ grows geometrically, and so eventually 
$\alpha_{j-1} -1/2 \ge \eta$.
\end{proof}

\bibliographystyle{utphys}
\bibliography{ising_lowT}

\end{document}